\numberwithin{equation}{section} 
\numberwithin{table}{section} 
\numberwithin{figure}{section} 
\setlist[enumerate,1]{leftmargin=2.00em,itemsep=1pt,listparindent=15pt} 
\setlist[enumerate,2]{leftmargin=2.00em,itemsep=1pt,listparindent=15pt} 
\theoremstyle{plain}
\newtheorem{theorem}{Theorem}[section]
\newtheorem{definition}[theorem]{Definition}
\newtheorem{lemma}[theorem]{Lemma}
\newtheorem{corollary}[theorem]{Corollary}
\newtheorem{proposition}[theorem]{Proposition}
\newtheorem{assumption}[theorem]{Assumption}
\newtheorem{remark}[theorem]{Remark}
\theoremstyle{nonumberplain}
\newtheorem{proof}{Proof}
\providecommand{\ie}{i.~e.~}
\providecommand{\eg}{e.~g.~}
\providecommand{\cf}{cf.~}
\providecommand{\R}{\mathbb{R}}
\providecommand{\C}{\mathbb{C}}
\renewcommand{\C}{\mathbb{C}}
\providecommand{\N}{\mathbb{N}}
\providecommand{\Z}{\mathbb{Z}}
\providecommand{\ii}{\mathrm{i}}
\providecommand{\e}{\mathrm{e}}
\providecommand{\Hil}{\mathcal{H}}
\providecommand{\eps}{\varepsilon}
\providecommand{\Cont}{\mathcal{C}}
\providecommand{\supp}{\mathrm{supp} \,}
\providecommand{\trace}{\mathrm{Tr} \,}
\providecommand{\dd}{\mathrm{d}}
\providecommand{\id}{\mathds{1}}
\providecommand{\Fourier}{\mathcal{F}}
\providecommand{\trace}{\mathrm{Tr}}
\providecommand{\abs}[1]{\left \lvert #1 \right \rvert}
\providecommand{\sabs}[1]{\lvert #1 \vert}
\providecommand{\babs}[1]{\bigl \lvert #1 \bigr \rvert}
\providecommand{\norm}[1]{\left \lVert #1 \right \rVert}
\providecommand{\snorm}[1]{\lVert #1 \rVert}
\providecommand{\bnorm}[1]{\bigl \lVert #1 \bigr \rVert}
\providecommand{\scpro}[2]{\left \langle #1 , #2 \right \rangle}
\providecommand{\sscpro}[2]{\langle #1 , #2 \rangle}
\providecommand{\bscpro}[2]{\bigl \langle #1 , #2 \bigr \rangle}
\providecommand{\Bscpro}[2]{\Bigl \langle #1 , #2 \Bigr \rangle}
\providecommand{\sopro}[2]{\vert #1 \rangle \langle #2 \vert}
\providecommand{\sexpval}[1]{\langle #1 \rangle}
\providecommand{\bexpval}[1]{\bigl \langle #1 \bigr \rangle}
\providecommand{\weyl}{\star}
\providecommand{\semisuper}{\bullet}
\providecommand{\super}{\sharp}
\providecommand{\op}{\mathrm{op}}
\providecommand{\Op}{\mathrm{Op}}
\providecommand{\Schwartz}{\mathcal{S}}
\providecommand{\bounded}{\mathcal{B}}
\providecommand{\ssMoyalSpace}{\mathfrak{m}^B(\Pspace)}
\providecommand{\pspace}{\Xi}
\providecommand{\Pspace}{\Xi^2}
\providecommand{\Xbf}{\mathbf{X}}
\providecommand{\Ybf}{\mathbf{Y}}
\providecommand{\ssMoyalSpace}{\mathfrak{m}^B(\Pspace)}
\title{A Proof of $\mathfrak{L}^2$-Boundedness for \\ Magnetic Pseudodifferential \\ Super Operators \\ via Matrix Representations \\ With Respect to Parseval Frames}
\author{Gihyun Lee${}^1$ \& Max Lein${}^2$}
\begin{document}

\maketitle
\vspace{-9mm}
\begin{center}
	${}^1$ Department of Mathematics: Analysis, Logic and Discrete Mathematics, Ghent University \linebreak
	Krijgslaan 281, S8 9000 Gent, Belgium \linebreak
	{\footnotesize \href{mailto:Gihyun.Lee@UGent.be}{\texttt{Gihyun.Lee@UGent.be}}}
	\medskip
	\\
	${}^2$ University of Postdam, Institute for Mathematics \linebreak
	Campus Golm, Building 9, Karl-Liebknecht-Str. 24-25, 14476 Potsdam OT Golm, Germany \linebreak
	{\footnotesize \href{mailto:lein2@uni-potsdam.de}{\texttt{lein2@uni-potsdam.de}}}
\end{center}
\begin{abstract}
	A fundamental result in pseudodifferential theory is the Calderón-Vaillancourt theorem, which states that a pseudodifferential operator defined from a Hörmander symbol of order $0$ defines a bounded operator on $L^2(\R^d)$. In this work we prove an analog for pseudodifferential \emph{super} operator, \ie operators acting on other operators, in the presence of magnetic fields. More precisely, we show that magnetic pseudodifferential super operators of order $0$ define bounded operators on the space of Hilbert-Schmidt operators $\mathfrak{L}^2 \bigl ( \mathcal{B} \bigl ( L^2(\R^d) \bigr ) \bigr )$. Our proof is inspired by the recent work of Cornean, Helffer and Purice \cite{Cornean_Helffer_Purice:boundedness_magnetic_PsiDOs_via_Gabor_frames:2022} and rests on a characterization of magnetic pseudodifferential super operators in terms of their “matrix elements” computed with respect to a Parseval frame. 
\end{abstract}
\noindent{\scriptsize \textbf{Key words:} Pseudodifferential operators, magnetic operators, $L^2$ boundedness, non-commutative $L^p$ spaces}\\
{\scriptsize \textbf{MSC 2020:} 35S05, 46B15, 47C15, 47L80}

\newpage
\tableofcontents

\section{Introduction} 
\label{intro}
%
This work is a continuation of the authors' study of magnetic pseudodifferential \emph{super} operators \cite{Lee_Lein:magnetic_pseudodifferential_super_operators_basics:2022}, and we aim to prove boundedness criteria for them. An example of such a super operator would be the Liouville super operator
\begin{align*}
	\hat{L}^A(\hat{\rho}^A) = - \ii \, \bigl [ \op^A(h) , \hat{\rho}^A \bigr ] 
\end{align*}
or a more general Lindblad super operator acting on some trace class operator $\hat{\rho}^A \in \mathfrak{L}^1 \bigl ( \mathcal{B} \bigl ( L^2(\R^d) \bigr ) \bigr )$, where $\op^A(h)$ is the magnetic Weyl quantization \cite{Mantoiu_Purice:magnetic_Weyl_calculus:2004,Lein:progress_magWQ:2010,Iftimie_Mantoiu_Purice:magnetic_psido:2006} of a suitable function such as $h(x,\xi) \equiv h(X) = \xi^2 + V(x)$. Readers unfamiliar with \emph{magnetic} pseudodifferential theory can ignore the presence of the magnetic vector potential $A$ for the moment; we will give the relevant definitions in Section~\ref{setting} below. 

Our work \cite{Lee_Lein:magnetic_pseudodifferential_super_operators_basics:2022} explains that such a Liouville super operator can be viewed as the quantization $\hat{L}^A = \Op^A(L)$ with respect to the function 
\begin{align*}
	L(X_L,X_R) = - \ii \, \bigl ( h(X_L) - h(X_R) \bigr ) 
	, 
	&&
	X_L , X_R \in \pspace 
	. 
\end{align*}
Both, left ($L$) and right ($R$) variables are elements of phase space $\pspace := T^*\R^d \cong \R^d \times \R^d$, which we view as a cotangent bundle endowed with the magnetic symplectic form (see \eg \cite[Sections~2.1.1.1 and 3.6]{Lein:progress_magWQ:2010}). The indices $L$ and $R$ indicate whether after quantization the operator will act from the left or the right. 

That poses the question we wish to answer here: for generic functions $L$ belonging to some Hörmander class, under what conditions on $L$ does the associated magnetic pseudodifferential \emph{super} operator define a bounded operator 
\begin{align} \label{intro:eqn:L1_boundedess}
	\Op^A(L) : \mathfrak{L}^1 \bigl ( \mathcal{B} \bigl ( L^2(\R^d) \bigr ) \bigr ) \longrightarrow \mathfrak{L}^1 \bigl ( \mathcal{B} \bigl ( L^2(\R^d) \bigr ) \bigr ) 
\end{align}
between trace class operators? More generally, when is 
\begin{align} 
\label{intro:eqn:Lp_boundedness}
	\Op^A(L) \in \mathcal{B} \Bigl ( \mathfrak{L}^p \bigl ( \mathcal{B} \bigl ( L^2(\R^d) \bigr ) \bigr ) \Bigr ) 
\end{align}
a bounded operator between $p$-Schatten classes where $1 \leq p < \infty$? Or we could pose the same question for the space of bounded operators $\mathfrak{L}^{\infty} \bigl ( \mathcal{B} \bigl ( L^2(\R^d) \bigr ) \bigr ) = \mathcal{B} \bigl ( L^2(\R^d) \bigr )$ to itself, which corresponds to $p = \infty$. 

As the notation suggests, these operator spaces are particular cases of non-commutative $\mathfrak{L}^p$ spaces \cite{Terp:noncommutative_Lp_spaces:1981,Takesaki:operator_algebras_2:2003}, where the trace plays the role of integration with respect to a measure. Since the pseudodifferential context involves additional structures derived from twisted crossed product $C^*$-algebras \cite{Mantoiu_Purice_Richard:twisted_X_products:2004}, some subcommunities refer to non-commutative $\mathfrak{L}^p$-spaces equipped with these extra structures as \emph{quantum} $\mathfrak{L}^p$ spaces (\cf \cite[Section~3]{Antoniou_Majewski_Suchanecki:Liouville_evolution_Koopman_formalism:2002}). Likewise, imposing additional regularity gives rise to \eg \emph{non-commutative} or \emph{quantum} Sobolev spaces (see, \eg \cite{Lafleche:quantum_Sobolev_inequalities:2022}). 

Our main result answers the question of boundedness~\eqref{intro:eqn:Lp_boundedness} for $p = 2$, which should be seen as a super operator analog of the Calderón-Vaillancourt theorem from psuedodifferential theory (\cf \eg \cite[Theorem~3.1]{Iftimie_Mantoiu_Purice:magnetic_psido:2006}): 
\begin{theorem}\label{intro:thm:boundedness_super_PsiDOs}
	Suppose the components $B_{jk} \in \Cont^{\infty}_{\mathrm{b}}(\R^d)$ of the magnetic field $B = \sum_{j , k = 1}^d B_{jk} \, \dd x_j \wedge \dd x_k$ are smooth, bounded and possess bounded derivatives to any order. Moreover, let $A \in \Cont^{\infty}_{\mathrm{pol}}(\R^d,\R^d)$ be a smooth, polynomially bounded vector potential with $\dd A = B$. 
	
	Then any Hörmander symbol $L \in S^m_{\rho,0}(\Pspace)$ (\cf Definition~\ref{setting:defn:Hoermander_symbol_classes}) of order $m \leq 0$ and type $(\rho,0)$ where $0 \leq \rho \leq 1$ defines a bounded operator
	\begin{align*}
		\Op^A(L) \in \mathcal{B} \Bigl ( \mathfrak{L}^2 \bigl ( \mathcal{B} \bigl ( L^2(\R^d) \bigr ) \bigr ) \Bigr ) 
		. 
	\end{align*}
	%
\end{theorem}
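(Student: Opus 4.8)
The plan is to adapt the strategy of Cornean, Helffer and Purice \cite{Cornean_Helffer_Purice:boundedness_magnetic_PsiDOs_via_Gabor_frames:2022} to the level of super operators: reduce the $\mathfrak{L}^2$-boundedness of $\Op^A(L)$ to the $\ell^2$-boundedness of the matrix of $\Op^A(L)$ with respect to a Parseval frame of $\mathfrak{L}^2\bigl(\mathcal{B}(L^2(\R^d))\bigr)$, and then prove the latter from rapid off-diagonal decay of the entries together with a Schur test. Heuristically, under the kernel isomorphism $\mathfrak{L}^2\bigl(\mathcal{B}(L^2(\R^d))\bigr)\cong L^2(\R^{2d})$ the operator $\Op^A(L)$ becomes a magnetic pseudodifferential operator on $\R^{2d}$ with a doubled magnetic field, so the statement is morally the magnetic Calderón-Vaillancourt theorem on $\R^{2d}$; the Parseval-frame argument makes this precise while sidestepping the bookkeeping of the doubled symbol class. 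Concretely, I would fix a tight Gabor frame $\{G_\gamma\}_{\gamma\in\Gamma}$ of $L^2(\R^d)$ --- phase-space translates and modulations of a fixed Gaussian over a sufficiently dense lattice $\Gamma\subset\pspace$, with $\gamma=(q_\gamma,p_\gamma)$ the phase-space centre of $G_\gamma$ --- and observe, via $\sscpro{\sopro{G_\alpha}{G_\beta}}{T}_{\mathrm{HS}}=\sscpro{G_\alpha}{T\,G_\beta}$ and two applications of the Parseval identity for $\{G_\gamma\}$, that the rank-one operators $\bigl\{\sopro{G_\alpha}{G_\beta}\bigr\}_{(\alpha,\beta)\in\Gamma^2}$ form a Parseval frame of $\mathfrak{L}^2\bigl(\mathcal{B}(L^2(\R^d))\bigr)$. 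Writing $V$ for the associated isometric analysis operator and setting
\begin{align*}
	M_{(\alpha,\beta),(\gamma,\delta)} := \bscpro{\sopro{G_\alpha}{G_\beta}}{\Op^A(L)\bigl(\sopro{G_\gamma}{G_\delta}\bigr)}_{\mathrm{HS}} = \bscpro{G_\alpha}{\bigl(\Op^A(L)(\sopro{G_\gamma}{G_\delta})\bigr)G_\beta} ,
\end{align*}
one checks that $\Op^A(L)$ maps the dense core of Schwartz-kernel operators into $\mathfrak{L}^2$ with $\Op^A(L)=V^*MV$ there, so it suffices to prove $M\in\mathcal{B}(\ell^2(\Gamma^2))$; a routine density argument then gives $\snorm{\Op^A(L)}\leq\snorm{M}_{\mathcal{B}(\ell^2)}$.

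To estimate the entries I would first pass to the gauge-covariant form of the magnetic super Weyl calculus, in which the polynomially bounded potential $A$ enters only through fluxes of $B$ over triangles; these, and all their derivatives, are bounded by the hypotheses on $B$. Substituting the integral representation of $\Op^A(L)$ from \cite{Lee_Lein:magnetic_pseudodifferential_super_operators_basics:2022} into the formula above writes $M_{(\alpha,\beta),(\gamma,\delta)}$ as an oscillatory integral over the left and right phase-space variables with amplitude $L$, multiplied by bounded magnetic phase factors and by translates and modulations of the Gaussian window. The phase depends on the left variables only through quantities pinned to the pair $(\alpha,\gamma)$ and on the right variables only through quantities pinned to $(\beta,\delta)$; repeated non-stationary-phase integration by parts --- in the frequency variables, producing powers of $\sexpval{q_\alpha-q_\gamma}^{-1}$ and $\sexpval{q_\beta-q_\delta}^{-1}$, and in the position variables, exploiting the time-frequency localisation of the Gaussian window and producing powers of $\sexpval{p_\alpha-p_\gamma}^{-1}$ and $\sexpval{p_\beta-p_\delta}^{-1}$ --- gives, for every $N\in\N$,
\begin{align*}
	\babs{M_{(\alpha,\beta),(\gamma,\delta)}} \leq C_N\,\sexpval{\alpha-\gamma}^{-N}\,\sexpval{\beta-\delta}^{-N} , \qquad (\alpha,\beta,\gamma,\delta)\in\Gamma^4 .
\end{align*}

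The hypothesis $m\leq 0$ (with $\delta=0$) enters exactly here: every symbol derivative $\partial^a L$ that arises is bounded, because the governing weight $\sexpval{\xi}^{m-\rho\abs{a}}$ is bounded by $1$, so no positive lower bound on $\rho$ is required --- this is the Calderón-Vaillancourt mechanism --- and the derivatives of the magnetic phase factors produced by each integration by parts are bounded by the assumptions on $B$. Since $\sum_{\gamma\in\Gamma}\sexpval{\gamma}^{-N}<\infty$ once $N>d$, the displayed decay yields $\sup_{(\gamma,\delta)}\sum_{(\alpha,\beta)}\babs{M_{(\alpha,\beta),(\gamma,\delta)}}<\infty$ together with the symmetric bound on the transposed sums, so the Schur test gives $M\in\mathcal{B}(\ell^2(\Gamma^2))$ and hence $\Op^A(L)\in\mathcal{B}\bigl(\mathfrak{L}^2(\mathcal{B}(L^2(\R^d)))\bigr)$.

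I expect the main obstacle to lie in the decay estimate in the borderline regime $\rho=0$, $m=0$: there no decay can be extracted from the symbol itself, so the rapid off-diagonal decay must be produced entirely by non-stationary phase against the Schwartz atoms, and one must control the magnetic corrections --- the triangle fluxes of $B$ and their iterated derivatives appearing after each integration by parts --- uniformly in the four frame labels. One also has to verify that the left and right blocks of integration variables genuinely decouple for a general, non-product amplitude $L$, so that both independent factors $\sexpval{\alpha-\gamma}^{-N}$ and $\sexpval{\beta-\delta}^{-N}$ are available; tracking two coupled oscillatory integrals tied together only through the shared amplitude $L$ is what makes the super-operator estimate heavier than its scalar counterpart.
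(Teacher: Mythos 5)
Your overall architecture coincides with the paper's: use a Gabor-type Parseval frame of $L^2(\R^d)$, observe that the rank-one operators form a Parseval frame of $\mathfrak{L}^2\bigl(\mathcal{B}(L^2(\R^d))\bigr)$, show the resulting eight-index matrix of $\Op^A(L)$ has rapid off-diagonal decay, and close via Schur's test. (The paper isolates the decay estimate as a separate characterization theorem — Theorem~\ref{characterization_symbols_matrix_elements_super_PsiDOs:thm:characterization_Hoermander_class_super_PsiDOs} / Corollary~\ref{characterization_symbols_matrix_elements_super_PsiDOs:cor:characterization_Hoermander_class_super_PsiDOs}, modeled on \cite[Theorem~3.1]{Cornean_Helffer_Purice:boundedness_magnetic_PsiDOs_via_Gabor_frames:2022} — and then deploys it in Proposition~\ref{boundedness_super_PsiDOs:prop:superoperator_image_expansion} and Lemma~\ref{boundedness_super_PsiDOs:lem:Schur}; you fold the estimate into the body of the proof, but this is an expository rather than a mathematical difference.) The reduction from general $\rho\in[0,1]$ to $\rho=0$ via $S^m_{\rho,0}\subseteq S^m_{0,0}$ is also present in both.

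However, two ingredients of your frame construction will not work as written, and both are precisely the places where the paper deviates from your choices. First, a Gabor system generated by a Gaussian is \emph{never} a tight (hence never a Parseval) frame, at any lattice density: the Ron--Shen tightness condition $\sum_{k} g(x-\tfrac{n}{b}-ka)\,\overline{g(x-ka)} = \mathrm{const}\cdot\delta_{n,0}$ cannot hold for a Gaussian $g$, since for $n\neq 0$ every summand is strictly positive. This is why the paper, following \cite{Cornean_Helffer_Purice:boundedness_magnetic_PsiDOs_via_Gabor_frames:2022}, uses a compactly supported window $\chi\in\Cont^\infty_{\mathrm{c}}$ satisfying the partition-of-unity identity $\sum_{\gamma\in\Gamma}\chi(x-\gamma)^2 = 1$, which delivers a genuine Parseval frame. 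If you keep a Gaussian, you only get a frame with a non-trivial frame operator $S$, and $\Op^A(L)=S^{-1}V^*MV$ forces you to track $S^{-1}$ through all the estimates; switching to a compactly supported $\chi$ removes the issue at no cost.

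Second, and more seriously, you take a \emph{non-magnetic} Gabor frame and propose to "pass to the gauge-covariant form" afterwards so that $A$ enters only through triangle fluxes of $B$. With a non-magnetic frame this does not happen automatically: the matrix elements $\sscpro{G_\alpha}{\op^A(f)\,G_\beta}$ retain an unbounded residual dependence on $A$, and since $A$ is only polynomially bounded, the phase derivatives produced by your non-stationary-phase integrations by parts are not uniformly controllable — exactly the step where you expect to harvest $\sexpval{\alpha-\gamma}^{-N}$. The device that makes this work is to build the magnetic holonomy into the frame itself, as in equation~\eqref{setting:eqn:definition_Gabor_frame}: $\mathcal{G}^A_{\alpha,\alpha^*}(x)=(2\pi)^{-d/2}\,\Lambda^A(x,\alpha)\,\e^{\frac{\ii}{2\pi}\alpha^*\cdot(x-\alpha)}\,\chi(x-\alpha)$. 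Then the three holonomies coming from $\mathcal{G}^A_{\alpha,\alpha^*}$, the Weyl system $w^A(X)$, and $\mathcal{G}^A_{\beta,\beta^*}$ combine, via Stokes' theorem, into a flux of $B$ across a triangle; this quantity and all its derivatives are bounded by Assumption~\ref{setting:assumption:magnetic_field}~(a), and the polynomially bounded $A$ drops out. Your instinct that "$A$ enters only through fluxes of $B$" is correct, but it is a consequence of the magnetic frame, not something one can decree independently of the frame choice. With these two substitutions — compactly supported partition-of-unity window, and the $\Lambda^A$-twisted frame — your argument aligns with the paper's proof.
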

In fact, we are not only able to treat regular Hörmander classes, but also Hörmander classes where we allow the growth in left and right momentum variables $\xi_{L,R}$ to be different (\cf Theorem~\ref{boundedness_super_PsiDOs:thm:boundedness_super_PsiDOs} and Corollary~\ref{boundedness_super_PsiDOs:cor:boundedness_super_PsiDOs} for details).

\subsection{Our method of choice: representing magnetic pseudodifferential super operators as “infinite matrices”} 
\label{intro:infinite_matrices}
There are several established strategies to prove the boundedness of pseudodifferential operators. One of them (\eg implemented for magnetic pseudodifferential operators as the proof of \cite[Theorem~3.1]{Iftimie_Mantoiu_Purice:magnetic_psido:2006}) slices up the operator into pieces that are localized in a grid cell, estimates each localized operator with oscillatory integral techniques and controls their sum, the original operator, via the Stein-Cotlar-Knapp Lemma. 
%
We could alternatively follow the strategy implemented by Alinhac and Gérard (\cf \cite[pp.~29–31]{Alinhac_Gerard:PsiDOs_Nash_Moser_Theorem:2007}) and Saint Raymond (\cf \cite[pp.~53–55]{Saint_Raymond:intro_pseudodifferential_theory:1991}), and base our proof on Schur's test; the latter gives an explicit criterion for the boundedness of operators in terms of their operator kernels. 

However, we will choose a different path inspired by two recent works by Cornean, Helffer and Purice \cite{Cornean_Helffer_Purice:simple_proof_Beals_criterion_magnetic_PsiDOs:2018,Cornean_Helffer_Purice:boundedness_magnetic_PsiDOs_via_Gabor_frames:2022}. The idea is to characterize magnetic pseudodifferential (super) operators through their “matrix elements” with respect to a Gabor frame, an approach which was pioneered by Cordero, Heil, Gröchenig and others (\cf \eg \cite{Heil_Groechenig:modulation_spaces_PsiDOs:1999,Goechenig:time_frequency_analysis_PsiDOs:2006,Goechenig_Rzeszotnik:Banach_algebras_PsiDOs_and_their_almost_diagonalization:2008}). A frame is a countable set of vectors that forms an “overcomplete basis” of a separable Hilbert space (\cf Section~\ref{setting:Parseval_frame} and references therein for a proper mathematical definition). Operators are therefore uniquely defined by their matrix elements. The modifier “Gabor” refers to the fact that in this case the frame is generated from a single vector and indexed by a lattice vector. The Gabor frame Cornean et al.\ use is also a tight, normalized (or \emph{Parseval}) frame of $L^2(\R^d)$, \ie a frame that in many respects acts like an orthonormal basis. 

There are multiple advantages to this approach: first of all, one only needs to use oscillatory integral techniques one time, namely in \cite[Theorem~3.1]{Cornean_Helffer_Purice:boundedness_magnetic_PsiDOs_via_Gabor_frames:2022} and our analogs, Theorem~\ref{characterization_symbols_matrix_elements_super_PsiDOs:thm:characterization_Hoermander_class_super_PsiDOs} and Corollary~\ref{characterization_symbols_matrix_elements_super_PsiDOs:cor:characterization_Hoermander_class_super_PsiDOs}. These results give a one-to-one characterization of magnetic pseudodifferential \emph{super} operators associated to a Hörmander symbol and their matrix elements. Proofs of subsequent theorems then only involve matrix elements, and a common theme is to impose conditions that ensure the existence of certain infinite sums over the matrix elements' indices. Conceptually and technically, this is a great simplification. Moreover, it avoids the tedium of proving the existence of certain oscillatory integrals and estimating relevant seminorms — which is further exacerbated when additional magnetic phase factors and variables are present (see \eg \cite[Appendix~B]{Lee_Lein:magnetic_pseudodifferential_super_operators_basics:2022}). 

The second main advantage is the ease of adaptability: one of the authors used this matrix element point of view to extend commutator criteria to magnetic pseudodifferential operators associated to operator-valued and equivariant operator-valued symbols (specifically Theorems~3.4.23 and 4.3.1 in \cite{DeNittis_Lein_Seri:equivariant_PsiDOs:2022}). A lot of more advanced results crucially depend on the given Fréchet topology. And because the Fréchet spaces involved are usually \emph{not} nuclear, it is not possible to \emph{easily} adapt existing proofs in a straightforward and transparent manner. In fact, this is true here, too: the relevant Hörmander classes (\cf Definition~\ref{setting:defn:double_Hoermander_symbol_classes}) are over $\Pspace = \pspace \times \pspace := T^* \R^d \times T^* \R^d$, and it would be tempting to split 
\begin{align*}
	\mbox{“$S^{m_L,m_R}_{\rho,\delta}(\Pspace) \cong S^{m_L}_{\rho,\delta}(\pspace) \otimes S^{m_R}_{\rho,\delta}(\pspace)$”} 
	&&
	\mbox{(incorrect equation)}
\end{align*}
as a tensor product, and then invoke \cite[Theorem~3.1]{Cornean_Helffer_Purice:boundedness_magnetic_PsiDOs_via_Gabor_frames:2022} to obtain a proof of Theorem~\ref{characterization_symbols_matrix_elements_super_PsiDOs:thm:characterization_Hoermander_class_super_PsiDOs}. Unfortunately, \emph{the above equation is false.} Hörmander symbol classes are non-nuclear Fréchet spaces \cite{Witt:weak_topology_symbol_spaces:1997}, and consequently, the tensor product on the right is ill-defined: there exist \emph{at least} two topologies with respect to which we can complete the algebraic tensor product. And neither of these two gives the space on the left (\cf \eg Theorem~3.2, Remark~3.3, and Proposition~4.4 in \cite{Witt:weak_topology_symbol_spaces:1997}). Yet at the end of the day, the result \emph{is} as if we could treat the Hörmander class on the left as a tensor product. The characterization via matrix elements \emph{does} neatly extend to these more general cases as \cite{DeNittis_Lein_Seri:equivariant_PsiDOs:2022} and the present work show. 

\subsection{The difficulty in proving $\mathfrak{L}^p$ boundedness for $p \neq 2$} 
\label{sub:the_difficulty_in_proving_mathfrak_l_p_boundedness_for_p_neq_2}
We framed our work by looking at Liouville and Lindblad super operators, which typically act on (dense subspaces of) $\mathfrak{L}^1 \bigl ( \mathcal{B} \bigl ( L^2(\R^d) \bigr ) \bigr )$. Unfortunately, we had to exclude the highly desirable cases $p = 1$ (trace-class operators) and $p = \infty$ (bounded operators). For the benefit of the interested reader, we will explain a few of the reasons and put our results into context with the state-of-the-art. We intend to return to this subject in the future.

\subsubsection{The lack of a simple characterization of trace-class and bounded operators in terms of their matrix elements} 
\label{ssub:a_lack_of_a_simple_characterization_of_trace_class_and_bounded_operators_in_terms_of_their_matrix_elements}
The key ingredient in our approach is to represent super operators as well as the operators they act on as “infinite matrices”. Each of the matrix elements is defined with respect to a Parseval frame $\{ \mathcal{G}_{\alpha} \}_{\alpha \in \Gamma}$ for a lattice $\Gamma$, and properties such as $\hat{g} \in \mathfrak{L}^p \bigl ( \mathcal{B}(\Hil) \bigr )$ should translate to properties of 
\begin{align*}
	\hat{g}_{\alpha , \beta} := \bscpro{\mathcal{G}_{\alpha}}{\hat{g} \, \mathcal{G}_{\beta}} 
	, 
\end{align*}
where \eg $\Hil = L^2(\R^d)$ and $\alpha$ and $\beta$ take values in some countably infinite index set $\Gamma$. The application of a super operator $\hat{F}$ on an operator $\hat{g}$ can now be described as a matrix product\footnote{The position of the indices to be summed over may seem unusual, but we will explain our choice in Section~\ref{abstract_matrix_representation}. } 
\begin{align*}
	(\hat{F} \hat{g})_{\alpha , \beta} = \sum_{\alpha' , \beta' \in \Gamma} \hat{F}_{\alpha , \beta' , \alpha' , \beta} \, \hat{g}_{\beta' , \alpha'}
	. 
\end{align*}
Hence, proving boundedness of the super operator $\hat{F} \in \mathcal{B} \bigl ( \mathfrak{L}^p \bigl ( \mathcal{B}(\Hil) \bigr ) \bigr )$ is reduced to arguments ensuring the existence of the above sum in a specific sense. However, that requires a characterization of the matrix elements of $p$-Schatten class operators. 

When $p = 2$ we may identify the Hilbert space of Hilbert-Schmidt operators on $\Hil$ with the Hilbert space $\Hil \otimes \Hil \cong \ell^2(\Gamma) \otimes \ell^2(\Gamma)$ in a canonical fashion. Unfortunately, for all other values of $p$ we are not aware of such a simple characterization. Specifically, results for $p = 1$ and $p = \infty$ are particularly desirable, not only because they are the most relevant in applications, but also because they form the two end points in a Riesz-Thorin interpolation argument (\cf \eg Theorems~2.9 and 2.10 as well as Remark~1 on p.~23 in \cite{Simon:trace_ideals_applications:2005} or \cite[Theorem~13.1]{Gohberg_Krein:linear_nonselfadjoint_operators:1969}). 

For $p < \infty$ the best results we are aware of are Theorems~A and B in \cite{Bingyang_Khoi_Zhu:frames_Schatten_classes:2015}. Simply put, they characterize the cases $2 \leq p < \infty$ and $0 < p \leq 2$, respectively, in terms of the $p$-summability of $\snorm{\hat{g} \mathcal{G}_{\alpha}}$. Unfortunately, Theorem~B which covers trace-class operators states that we need $p$-summability with respect to \emph{some} frame; that is, the frame will in general depend on the operator under consideration. 

The case $p = \infty$ has to be treated separately. Fortunately, if all we care about is proving the boundedness of magnetic pseudodifferential \emph{super} operators, we do not need a one-to-one characterization of bounded operators in terms of their matrix elements. Still, such a characterization exists and can be found in \eg Chapter~1 of the recent monograph \cite{deMalafosse_Malkowsky_Rakocevic:operators_sequence_spaces:2021}, specifically Lemmas~1.3 and 1.4 as well as Crone's Theorem~1.24. Unfortunately, none of these conditions translate to simple conditions on the matrix elements. What is more, the existing results mentioned below indicate that looking at 
\begin{align*}
	\Op^A(F) : \mathfrak{L}^p \bigl ( \mathcal{B} \bigl ( L^2(\R^d) \bigr ) \bigr ) \longrightarrow \mathfrak{L}^p \bigl ( \mathcal{B} \bigl ( L^2(\R^d) \bigr ) \bigr ) 
\end{align*}
might not be the way to go for the cases $p = 1$ and $p = \infty$. 

\subsubsection{The state-of-the-art for some simple cases} 
\label{ssub:the_state_of_the_art_for_some_simple_cases}
There \emph{is} a partial result in this direction, \eg \cite{GonzalezPerez_Junge_Parcet:singular_integral_operators_quantum_Euclidean_spaces:2021}. The space $\mathcal{B} \bigl ( L^2(\R^d) \bigr )$ is a special case of quantum Euclidean spaces studied in \cite{GonzalezPerez_Junge_Parcet:singular_integral_operators_quantum_Euclidean_spaces:2021} (\cf loc.\ cit.\ Remark~1.3). Therefore, the pseudodifferential theory developed in \cite{GonzalezPerez_Junge_Parcet:singular_integral_operators_quantum_Euclidean_spaces:2021} pertains to super operators that multiply with a pseudodifferential operator \emph{from one side}, \ie (non-magnetic) super operators of the form 
\begin{align*}
	\Op^{A = 0} \bigl ( f_L \otimes 1 \bigr ) = \op^{A = 0}(f_L) \otimes \id : \hat{g} \mapsto \op^{A = 0}(f_L) \, \hat{g} 
\end{align*}
when we view them as operators acting from the left.\footnote{As Gonz\'alez-P\'erez, Junge and Parcet have noted in \cite[Remark~2.5]{GonzalezPerez_Junge_Parcet:singular_integral_operators_quantum_Euclidean_spaces:2021}, we could have equivalently considered operators acting from the right. The two cases are connected via the adjoint operation.} The calculus for magnetic pseudodifferential super operators we have developed in  \cite{Lee_Lein:magnetic_pseudodifferential_super_operators_basics:2022} applies to much more general super operators, which need \emph{not} be “product super operators” of the form 
\begin{align*}
	\Op^A \bigl ( f_L \otimes f_R \bigr ) : \hat{g}^A \mapsto \op^A(f_L) \, \hat{g}^A \, \op^A(f_R)
\end{align*}
and incorporate magnetic fields in a covariant fashion. Still, this simplified case can give us guidance on what results we should expect to hold true. The proof of $L^p$-boundedness of pseudodifferential operators on $\R^d$ for $1 < p < \infty$ can be found in \cite{Fefferman:Lp_boundedness_PsiDOs:1973,Stein:harmonic_analysis:1993}. Gonz\'alez-P\'erez, Junge and Parcet extended this $\mathfrak{L}^p$-boundedness result to the setting of quantum Euclidean spaces \cite{GonzalezPerez_Junge_Parcet:singular_integral_operators_quantum_Euclidean_spaces:2021}. 

Importantly, these authors also have partial results for the cases $p = 1$ and $p = \infty$. In both cases, either initial or target spaces is not a $\mathfrak{L}^p$ space. Instead, Gonz\'alez-P\'erez et al.\ obtain boundedness (\cf \cite[Theorem~2.18]{GonzalezPerez_Junge_Parcet:singular_integral_operators_quantum_Euclidean_spaces:2021}) if we view these super operators as operators acting between 
\begin{align}
	\Op^{A = 0} \bigl ( f_L \otimes 1 \bigr ) : \text{H}^1 \bigl ( \mathcal{B}(\Hil) \bigr ) \rightarrow \mathfrak{L}^1 \bigl( \mathcal{B} (\Hil) \bigr ) 
	, 
	\label{intro:eqn:left_super_operator_boundedness_H1_L1}
	\\
	\Op^{A = 0} \bigl ( f_L \otimes 1 \bigr ) : \mathcal{B}(\Hil) \longrightarrow \text{BMO} \bigl ( \mathcal{B} (\Hil) \bigr ) 
	,
	\label{intro:eqn:left_super_operator_boundedness_Linfty_BMO}
\end{align}
where $\text{H}^1 \bigl ( \mathcal{B} (\Hil) \bigr )$ and $\text{BMO} \bigl ( \mathcal{B} (\Hil) \bigr )$ are the non-commutative versions of the Hardy space and the space of functions of bounded mean oscillation. 

This might suggest that at least in the context of pseudodifferential super operators it is more natural to look at generalizations of \eqref{intro:eqn:left_super_operator_boundedness_H1_L1} and \eqref{intro:eqn:left_super_operator_boundedness_Linfty_BMO} to more general functions $F \neq f_L \otimes 1$ such as Hörmander symbols from Definitions~\ref{setting:defn:Hoermander_symbol_classes} and \ref{setting:defn:double_Hoermander_symbol_classes}. 

\subsection*{Outline} 
\label{intro:outline}
Apart from the introduction, this paper consists of four sections: in Section~\ref{setting} we detail our setting and give the necessary definitions and assumptions. Then in Section~\ref{abstract_matrix_representation} we characterize certain classes of (super) operators in terms of their matrix elements. The only time we will need oscillatory integral techniques is in Section~\ref{characterization_symbols_matrix_elements_super_PsiDOs}, where we characterize magnetic pseudodifferential super operators in terms of their matrix elements. Lastly, proofs of the main result, Theorem~\ref{intro:thm:boundedness_super_PsiDOs}, and its more general statements, Theorem~\ref{boundedness_super_PsiDOs:thm:boundedness_super_PsiDOs} as well as Corollaries~\ref{boundedness_super_PsiDOs:cor:boundedness_super_PsiDOs} and \ref{boundedness_super_PsiDOs:cor:boundedness_super_PsiDOs_magnetic_Sobolev_spaces_double_Hoermander_symbols}, are given in Section~\ref{boundedness_super_PsiDOs}. 

\subsection*{Data availability} 
\label{intro:data}
No datasets were generated or analyzed, because this work is based on a purely mathematical approach. Therefore, data sharing is not applicable to this article.

\subsection*{Acknowledgements} 
\label{intro:acknowledgements}
G.\ L.\ was supported by the Max Planck Institute for Mathematics, Bonn, the FWO Odysseus 1 grant G.0H94.18N: Analysis and Partial Differential Equations, and the Methusalem programme of the Ghent University Special Research Fund (BOF) (Grant number 01M01021).

M.\ L.\ acknowledges the JSPS for the support of this project through a Kiban~C grant (No. 20K03761). He would also like to thank Max Planck Institute for Mathematics for its kind hospitality. 

Both authors would like to express their gratitude towards Horia Cornean and Radu Purice for very helpful remarks and stimulating discussions. 
\section{Setting and fundamental definitions} 
\label{setting}
In this section we will fix some notation, give some basic definitions and recount basic facts.

\subsection{The relevant Parseval frame} 
\label{setting:Parseval_frame}
Let $\Hil$ be a separable, infinite-dimensional Hilbert space endowed with an inner product $\scpro{\, \cdot \,}{\, \cdot \,}$. A frame generalizes the notion of a basis; it is essentially an “overcomplete basis” of $\Hil$, which still allows for norm estimates. More precisely, a sequence $\{ \mathcal{G}_{\alpha} \}_{\alpha \in \Gamma}$ in $\Hil$ indexed by a countable set $\Gamma$ is a frame if and only if there exist two constants $0 < C_- \leq C_+ < \infty$ such that
\begin{align}
	C_- \, \snorm{\psi}^2 \leq \sum_{\alpha \in \Gamma} \babs{\scpro{\mathcal{G}_{\alpha}}{\psi}}^2 \leq C_+ \, \snorm{\psi}^2 
	\label{setting:eqn:definition_frame_lower_upper_bound}
\end{align}
holds for all $\psi \in \Hil$. A frame is called tight if and only if we may choose $C_- = C_+$; a tight frame for which $C_+ = 1 = C_-$ is called a normalized tight or \emph{Parseval frame}. Parseval frames act much like orthonormal basis in that \eg we may expand any vector 
\begin{align*}
	\psi = \sum_{\alpha \in \Gamma} \bscpro{\mathcal{G}_{\alpha}}{\psi} \, \mathcal{G}_{\alpha} 
\end{align*}
in terms of the elements of the Parseval frame. In our convention, the first argument of the scalar product is antilinear and the second one is linear. 

For Parseval frames $\{ G_{\alpha} \}_{\alpha \in \Gamma}$ we can identify each vector $\varphi \in \Hil$ with its collection of coefficients 
\begin{align*}
	(\varphi_{\alpha})_{\alpha \in \Gamma} := \bigl ( \sscpro{\mathcal{G}_{\alpha}}{\varphi} \bigr )_{\alpha \in \Gamma} 
	\in \ell^2(\Gamma) 
	. 
\end{align*}
The Parseval properties can be rephrased as saying that the map 
\begin{align*}
	\mathfrak{u}_2 : \Hil \longrightarrow \ell^2(\Gamma) 
	, 
	\quad 
	\varphi \mapsto (\varphi_{\alpha})_{\alpha \in \Gamma} 
\end{align*}
is a linear isometry between Banach spaces; when $\{ \mathcal{G}_{\alpha} \}_{\alpha \in \Gamma}$ is only a Parseval frame, but not an orthonormal basis, $\mathfrak{u}_2$ fails to be onto, though. Consequently, for some arguments, we may need to restrict $\mathfrak{u}_2$ to its range 
\begin{align*}
	\ell^2_{\Hil}(\Gamma) := \mathfrak{u}_2(\Hil) 
	\subseteq \ell^2(\Gamma) 
	. 
\end{align*}
Frequently, frames are generated from a single vector $\chi \in \Hil$, and we may set 
\begin{align*}
	\mathcal{G}_{\alpha} := \pi(\alpha) \, \chi 
	, 
\end{align*}
where $\Gamma \cong \Z^k$ for some $k \in \N$ and $\pi : \Gamma \longrightarrow \mathcal{U}(\Hil)$ is a unitary-operator-valued map. Such frames are called \emph{Gabor frames}. For an introduction to the theory of frames, we refer to \cite{Christensen:introduction_frames:2003}. 
\medskip

\noindent
The idea to use Gabor frames to study pseudodifferential operators goes back to the late 1990s and early 2000s with the pioneering works of Gröchenig, Cordero, Heil and co-workers \cite{Heil_Groechenig:modulation_spaces_PsiDOs:1999,Cordero_Groechenig:time_frequency_analysis_localization_operators:2003,Cordero_Groechenig:necessary_conditions_Schatten_class_localization_operators:2005}. More recently, it has been applied to \emph{magnetic} pseudodifferential operators by Cornean, Helffer and Purice \cite{Cornean_Helffer_Purice:simple_proof_Beals_criterion_magnetic_PsiDOs:2018,Cornean_Helffer_Purice:boundedness_magnetic_PsiDOs_via_Gabor_frames:2022}, and in spirit and substance, our work borrows its main ideas from them and applies them in new ways. 

We will use a Gabor frame, which is simultaneously a Parseval frame. It starts with a function $\chi \in \Cont^{\infty}_{\mathrm{c}}\bigl ( \R^d , [0,\infty) \bigr )$ and a lattice $\Gamma \cong \Z^d$. We require that $\chi$ satisfies 
\begin{enumerate}[(a)]
	\item $\supp \chi \subseteq (-1,+1)^d$ and 
	\item $\sum_{\gamma \in \Gamma} \chi(x - \gamma)^2 = 1$ holds for all $x \in \R^d$. 
\end{enumerate}
Furthermore, we will need to introduce the \emph{dual lattice}
\begin{align*}
	\Gamma^* := \Bigl \{ \gamma^* \in \R^d \; \; \big \vert \; \; \gamma^* \cdot \gamma \in 2\pi \Z \; \forall \gamma \in \Gamma \Bigr \} 
	, 
\end{align*}
which is again isomorphic to $\Z^d$. Lastly, we need to introduce a magnetic field $B = \dd A$ and a vector potential $A$ representing it. One may think of magnetic fields as smooth, closed two-forms $B = \sum_{j,k = 1}^d B_{jk} \, \dd x_j \wedge \dd x_k$ on $\R^d$; we shall always tacitly identify $B$ with the matrix-valued function $(B_{jk})_{1 \leq j , k \leq d}$. A magnetic vector \emph{potential} $A = \sum_{j = 1}^d A_j \, \dd x_j$ is said to represent a magnetic field $B$ if and only if $B = \dd A$; whenever convenient we will think of the one-form $A$ as the vector-valued function $(A_1 , \ldots , A_d)$. Throughout this article, we will make the following assumptions.
\begin{assumption}[Magnetic fields and vector potentials]\label{setting:assumption:magnetic_field}
	\begin{enumerate}[(a)]
		\item The components $B_{jk}, 1 \leq j , k , \leq d$, of the magnetic field $B$ are of class $\Cont^{\infty}_{\mathrm{b}}$, \ie bounded, smooth and with bounded derivatives to any order. 
		\item A vector potential $A \in \Cont^{\infty}_{\mathrm{pol}}(\R^d,\R^d)$ representing such magnetic fields $\dd A = B$ has polynomially bounded, smooth components whose derivatives are all polynomially bounded as well. 
	\end{enumerate}
\end{assumption}
The exponential of the phase 
\begin{align*}
	\Lambda^A(x,y) := \e^{- \ii \int_{[x,y]} A} 
\end{align*}
obtained by integrating the vector potential along the line segment $[x,y]$ will enter into the definition of our Gabor frame, 
\begin{align}
	\mathcal{G}^A_{\alpha,\alpha^*}(x) := (2\pi)^{- \nicefrac{d}{2}} \Lambda^A(x,\alpha) \, \e^{+ \frac{\ii}{2\pi} \alpha^* \cdot (x - \alpha)} \, \chi(x - \alpha) 
	\in \Schwartz(\R^d)
	. 
	\label{setting:eqn:definition_Gabor_frame}
\end{align}
The presence of the factor $\nicefrac{1}{2\pi}$ stems from the embedding $(-1,+1)^d \subset (-\pi,+\pi)^d$. Functions with compact support inside the cube  $(-1,+1)^d$ are then replaced by their $2\pi$-periodization (\cf the explanations surrounding equations~(2.2) and (2.3) in \cite{Cornean_Helffer_Purice:boundedness_magnetic_PsiDOs_via_Gabor_frames:2022}). 

We will restate \cite[Proposition~2.2]{Cornean_Helffer_Purice:boundedness_magnetic_PsiDOs_via_Gabor_frames:2022} for convenience: 
\begin{proposition}
	\begin{enumerate}[(1)]
		\item $\bigl \{ \mathcal{G}^A_{\alpha,\alpha^*} \bigr \}_{(\alpha,\alpha^*) \in \Gamma \times \Gamma^*}$ is a Parseval frame. 
		\item We may expand any $\psi \in L^2(\R^d)$ in terms of the Gabor frame as 
		\begin{align*}
			\psi &= \sum_{(\alpha,\alpha^*) \in \Gamma \times \Gamma^*} \bscpro{\mathcal{G}^A_{\alpha,\alpha^*}}{\psi} \, \mathcal{G}^A_{\alpha,\alpha^*} 
			, 
		\end{align*}
		where the above series converges in the topology of $L^2(\R^d)$. 
	\end{enumerate}
\end{proposition}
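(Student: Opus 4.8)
Here is how I would prove the Proposition.

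The plan is to verify directly that the two frame bounds in \eqref{setting:eqn:definition_frame_lower_upper_bound} hold with $C_- = C_+ = 1$ — which by definition says $\bigl \{ \mathcal{G}^A_{\alpha,\alpha^*} \bigr \}$ is a Parseval frame and hence gives~(1) — and then to obtain the reconstruction formula~(2) from the general theory of Parseval frames. The only genuine analytic input is Parseval's identity for Fourier series on a cube; everything else is bookkeeping about the lattices $\Gamma$, $\Gamma^*$ and the cut-off $\chi$.

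For the frame bounds, I would fix $\psi \in L^2(\R^d)$ and first perform, for a fixed $\alpha \in \Gamma$, the sum over the dual index $\alpha^* \in \Gamma^*$. Writing out the inner product with the convention that the first slot is antilinear,
\begin{align*}
	\bscpro{\mathcal{G}^A_{\alpha,\alpha^*}}{\psi} = (2\pi)^{- \nicefrac{d}{2}} \int_{\R^d} \overline{\Lambda^A(y,\alpha)} \, \chi(y - \alpha) \, \psi(y) \, \e^{- \frac{\ii}{2\pi} \alpha^* \cdot (y - \alpha)} \, \dd y
	,
\end{align*}
one recognizes these numbers, up to the fixed constant $(2\pi)^{- \nicefrac{d}{2}}$ and an $\alpha^*$-dependent unimodular factor, as the Fourier coefficients of the function $f_\alpha(y) := \overline{\Lambda^A(y,\alpha)} \, \chi(y - \alpha) \, \psi(y)$. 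By property~(a) this $f_\alpha$ is supported in the translated cube $\alpha + \supp \chi \subseteq \alpha + (-1,+1)^d$, which (via the embedding $(-1,+1)^d \subset (-\pi,+\pi)^d$ recorded after \eqref{setting:eqn:definition_Gabor_frame}, and the matching choice of normalization) lies inside a fundamental domain of the lattice $2\pi \Gamma$; the exponentials $\bigl \{ (2\pi)^{- \nicefrac{d}{2}} \, \e^{\frac{\ii}{2\pi} \alpha^* \cdot (\, \cdot \, - \alpha)} \bigr \}_{\alpha^* \in \Gamma^*}$ restrict to an orthonormal basis of $L^2$ over that domain. Parseval's identity for Fourier series then yields
\begin{align*}
	\sum_{\alpha^* \in \Gamma^*} \babs{\bscpro{\mathcal{G}^A_{\alpha,\alpha^*}}{\psi}}^2 = \snorm{f_\alpha}_{L^2(\R^d)}^2 = \int_{\R^d} \chi(x - \alpha)^2 \, \abs{\psi(x)}^2 \, \dd x
	,
\end{align*}
the magnetic phase dropping out because $\abs{\Lambda^A(x,\alpha)} = 1$. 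Summing over $\alpha \in \Gamma$, interchanging the non-negative sum with the integral by Tonelli and using the partition-of-unity property~(b),
\begin{align*}
	\sum_{(\alpha,\alpha^*) \in \Gamma \times \Gamma^*} \babs{\bscpro{\mathcal{G}^A_{\alpha,\alpha^*}}{\psi}}^2 = \int_{\R^d} \Bigl ( \sum_{\alpha \in \Gamma} \chi(x - \alpha)^2 \Bigr ) \, \abs{\psi(x)}^2 \, \dd x = \snorm{\psi}_{L^2(\R^d)}^2
	,
\end{align*}
which is \eqref{setting:eqn:definition_frame_lower_upper_bound} with $C_- = C_+ = 1$ and proves~(1).

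For~(2), the Parseval property just established makes the analysis operator bounded, so the frame operator $S \psi := \sum_{(\alpha,\alpha^*)} \sscpro{\mathcal{G}^A_{\alpha,\alpha^*}}{\psi} \, \mathcal{G}^A_{\alpha,\alpha^*}$ is a well-defined bounded, positive, self-adjoint operator on $L^2(\R^d)$ with the defining series converging unconditionally in norm (the coefficient sequence lies in $\ell^2$); see \cite{Christensen:introduction_frames:2003}. Since $\sscpro{S \psi}{\psi} = \sum_{(\alpha,\alpha^*)} \sabs{\sscpro{\mathcal{G}^A_{\alpha,\alpha^*}}{\psi}}^2 = \snorm{\psi}^2 = \sscpro{\psi}{\psi}$ for every $\psi$, polarization forces $S = \id$, which is exactly the claimed expansion, converging in $L^2(\R^d)$. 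The only points that need care are (i) the precise matching of $\Gamma^*$, the rescaling by $\nicefrac{1}{2\pi}$, and the cube on which the relevant exponentials form an orthonormal basis, so that $f_\alpha$ genuinely "lives on the torus" $\R^d / 2\pi\Gamma$ and Parseval's identity applies verbatim; and (ii) the harmless observation that the unimodular magnetic phases $\Lambda^A(\, \cdot \, , \alpha)$ do not change any of the $L^2$-norms involved. Since the statement is quoted verbatim from \cite[Proposition~2.2]{Cornean_Helffer_Purice:boundedness_magnetic_PsiDOs_via_Gabor_frames:2022}, one may alternatively simply defer to that reference for the details.
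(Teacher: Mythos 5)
The paper does not give its own proof of this proposition: it explicitly \emph{restates} \cite[Proposition~2.2]{Cornean_Helffer_Purice:boundedness_magnetic_PsiDOs_via_Gabor_frames:2022} and defers to that reference, so there is no in-paper argument to compare against. Your direct verification is correct and supplies exactly the argument that makes the cited result work: for fixed $\alpha$, the numbers $\bscpro{\mathcal{G}^A_{\alpha,\alpha^*}}{\psi}$ are, up to the fixed constant $(2\pi)^{-d/2}$ and a unimodular factor, the Fourier coefficients of $f_\alpha(y) = \overline{\Lambda^A(y,\alpha)}\,\chi(y-\alpha)\,\psi(y)$, whose support $\alpha + (-1,1)^d$ sits strictly inside a fundamental domain $\alpha + (-\pi,\pi)^d$ of $2\pi\Gamma$; with $\alpha^* = 2\pi k$, $k \in \Z^d$, the functions $(2\pi)^{-d/2}\e^{\ii k\cdot(\,\cdot\,-\alpha)}$ are an orthonormal basis there, so Parseval for Fourier series gives $\sum_{\alpha^*} \sabs{\sscpro{\mathcal{G}^A_{\alpha,\alpha^*}}{\psi}}^2 = \int \chi(x-\alpha)^2\,\sabs{\psi(x)}^2\,\dd x$ after the unimodular magnetic phase drops out; summing in $\alpha$ and using the quadratic partition of unity $\sum_\alpha \chi(\,\cdot\,-\alpha)^2 = 1$ then yields the Parseval bound with constants $C_-=C_+=1$. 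Part~(2) follows from standard frame theory: the Parseval bound makes the analysis operator an isometry, so the frame operator $S=T^*T$ is bounded, positive, self-adjoint, and satisfies $\sscpro{S\psi}{\psi}=\snorm{\psi}^2$ for all $\psi$, forcing $S=\id$; the defining series then converges unconditionally in $L^2(\R^d)$ because the coefficient sequence lies in $\ell^2(\Gamma\times\Gamma^*)$ and the synthesis operator is bounded. In short, where the paper simply cites, you have written out the proof, and the proof is right.
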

Item~(2) will allow us to identify operators $\hat{f}$ on $L^2(\R^d)$ with their collection of matrix elements $\hat{f}^A_{\alpha,\alpha^*,\beta,\beta^*} := \bscpro{\mathcal{G}^A_{\alpha,\alpha^*}}{\hat{f} \, \mathcal{G}^A_{\beta,\beta^*}}$. 

\subsection{Magnetic Weyl calculus}
\label{setting:magnetic_PsiDOs}
In this subsection, we will review the construction and main properties of magnetic Weyl calculus. For a more comprehensive account, we refer the readers to \eg \cite{Mantoiu_Purice:magnetic_Weyl_calculus:2004,Iftimie_Mantoiu_Purice:magnetic_psido:2006,Lein:progress_magWQ:2010}.

In what follows, we denote points on the phase space $\pspace := T^*\R^d \cong \R^d \times \R^d$ by $X = (x,\xi)$, $Y = (y,\eta)$ and $Z = (z,\zeta)$ with position variables $x , y , z \in \R^d$ and momentum variables $\xi , \eta , \zeta \in \R^d$.

The basic building blocks of magnetic Weyl calculus are position operators $Q = (Q_1,\ldots,Q_d)$ and kinetic momentum operators $P^A = (P_1^A,\ldots,P_d^A)$. These are self-adjoint unbounded operators on $L^2(\R^d)$ subject to the commutation relations
\begin{align} \label{setting:eqn:CCR}
	\ii \, [Q_j,Q_k] = 0 , \qquad \ii \, [P_j^A,P_k^A] = \eps \, \lambda \, B_{jk}(Q) , \qquad \ii \, [P_j^A,Q_k] = \eps \, \delta_{jk} 
	, 
	&&
	1 \leq j , k \leq d 
	.
\end{align}
Here $\eps$ is a semiclassical parameter and $\lambda \leq 1 $ is another parameter quantifying the coupling of the electric charge to the magnetic field. Both of them play an important role in the study of asymptotic expansions and semiclassical limits \cite{Lein:two_parameter_asymptotics:2008,Fuerst_Lein:scaling_limits_Dirac:2008}. 
\begin{remark}
	For the sake of brevity, we will set $\lambda = 1 = \eps$. We wish to emphasize that all of our subsequent results hold when the small parameters are restored. 
\end{remark}
We can put the commutation relations~\eqref{setting:eqn:CCR} into the framework of bounded operators by using the magnetic Weyl system defined by
\begin{align*}
	w^A(X) := \e^{-\ii\sigma(X,(Q,P^A))} = \e^{+ \ii (x \cdot P^A - \xi \cdot Q)} 
	, 
	&&
	X \in \pspace 
	,
\end{align*}
where we have used the symplectic form $\sigma(X,Y) := \xi \cdot y - x \cdot \eta$ on $\pspace$. The symplectic form also appears as the phase factor in the \emph{symplectic Fourier transform}
\begin{align*}
	(\Fourier_\sigma f)(X) := \frac{1}{(2\pi)^d} \int_{\pspace} \dd Y \, \e^{+\ii\sigma(X,Y)} \, f(Y) 
\end{align*}
that is initially defined for Schwartz functions, but its definition extends to tempered distributions. Compared to the standard Fourier transform, it has the added benefit of being its own inverse, $\Fourier_{\sigma}^2 = \id$. 

The magnetic Weyl pseudodifferential operator associated with $f$ is defined by
\begin{align}
	\op^A(f) := \frac{1}{(2\pi)^d} \int_\pspace \dd X \, (\Fourier_\sigma f)(X) \, w^A(X) 
	.
	\label{setting:eqn:magnetic_Weyl_quantization}
\end{align}
The magnetic Weyl quantization map $f \mapsto \op^A(f)$ yields a continuous linear operator $\op^A$ from $\Schwartz(\pspace)$ to $\mathcal{L} \bigl (\Schwartz'(\R^d) \, , \, \Schwartz(\R^d) \bigr )$; it extends to a topological vector space isomorphism from $\Schwartz'(\pspace)$ to $\mathcal{L} \bigl ( \Schwartz(\R^d) \, , \, \Schwartz'(\R^d) \bigr )$. In particular, as a Hörmander symbol space $S_{\rho,\delta}^m(\pspace)$ of order $m \in \R$, defined below, is included in $\Schwartz'(\pspace)$, and each symbol $f\in S_{\rho,\delta}^m(\pspace)$ yields a continuous linear operator from $\Schwartz(\pspace)$ to $\Schwartz'(\pspace)$. In addition, the magnetic Weyl quantization map $f \mapsto \op^A(f)$ also gives rise to a unitary map from $L^2(\pspace)$ to the space of Hilbert-Schmidt operators $\mathfrak{L}^2 \bigl ( \mathcal{B} \bigl ( L^2(\R^d) \bigr ) \bigr )$.
\begin{definition}[Hörmander symbol classes]\label{setting:defn:Hoermander_symbol_classes}
	Let $0 \leq \delta \leq \rho \leq 1$, $\delta<1$ and $m \in \R$. The space $S_{\rho,\delta}^m(\pspace)$ consists of smooth functions $f : \pspace \longrightarrow \C$ such that, for all $a,\alpha\in\N_0^d$ there exists $C_{a\alpha}>0$ such that
	\begin{align}
		\babs{\partial_x^a \partial_\xi^\alpha f(X)} \leq C_{a\alpha} \, \sexpval{\xi}^{m - \abs{\alpha} \rho + \abs{a} \delta} 
		&&
		\forall X \in \pspace 
		.
		\label{setting:eqn:Hoermander_symbol_estimates}
	\end{align}
\end{definition}
The smallest constants $C_{a\alpha}$ satisfying the estimates~\eqref{setting:eqn:Hoermander_symbol_estimates} are the seminorms
\begin{align*}
	\norm{f}_{m,a\alpha} := \sup_{X \in \pspace} \Bigl ( \sexpval{\xi}^{-m + \abs{\alpha} \rho - \abs{a} \delta} \babs{\partial_x^a \partial_\xi^\alpha f(X)} \Bigr ) 
	,
\end{align*}
where $\sexpval{\xi} := \sqrt{1 + \sabs{\xi}^2}$ is the Japanese bracket. We endow $S_{\rho,\delta}^m(\pspace)$ with the Fréchet space topology generated by the seminorms $\norm{\cdot}_{m,a\alpha}$, where $a$ and $\alpha$ range over all multi indices in $\N_0^d$.

The magnetic Weyl product $f\weyl^B g$ of two symbols or distributions $f$ and $g$ is defined via the composition of the pseudodifferential operators associated with them, 
\begin{align*}
	\op^A \bigl ( f \weyl^B g \bigr ) := \op^A(f) \, \op^A(g) 
	.
\end{align*}
There exist multiple equivalent explicit formulas for $\weyl^B$, although we will not need them in this work. 

This product pulls back the operator product to the level of functions or distributions on phase space. It gives rise to a continuous bilinear map between Hörmander classes (\cf \eg \cite[Theorem~2.6]{Iftimie_Mantoiu_Purice:magnetic_psido:2006}), 
\begin{align*}
	\weyl^B: S_{\rho,\delta}^{m_1}(\pspace)\times S_{\rho,\delta}^{m_2}(\pspace) \longrightarrow S_{\rho,\delta}^{m_1+m_2}(\pspace) 
	.
\end{align*}
Another standard result of magnetic Weyl calculus is the $L^2$-boundedness: magnetic Weyl quantization $\op^A$ gives rise to a continuous linear map from $S_{0,0}^0(\pspace)$ to $\bounded \bigl ( L^2(\R^d) \bigr )$. This magnetic version of the Calderón-Vaillancourt theorem was first proved by Iftimie, Mantoiu and Purice in~\cite{Iftimie_Mantoiu_Purice:magnetic_psido:2006}. An alternative approach using the magnetic Gabor frame~\eqref{setting:eqn:definition_Gabor_frame}, the approach which we will also utilize in this article to prove the boundedness of magnetic pseudodifferential \emph{super} operators, can be found in~\cite{Cornean_Helffer_Purice:boundedness_magnetic_PsiDOs_via_Gabor_frames:2022}.

\subsection{Magnetic pseudodifferential super operator calculus}
\label{setting:magnetic_PsiD_super_Os}
The main motivation for introducing the calculus of magnetic Weyl pseudodifferential super operators in \cite{Lee_Lein:magnetic_pseudodifferential_super_operators_basics:2022} was to study generalizations of super operators of the form 
\begin{align*}
	\hat{g}^A \mapsto \op^A(f_L) \, \hat{g}^A \, \op^A(f_R) 
	, 
\end{align*}
where $\hat{g}^A \in \bounded \bigl ( L^2(\R^d) \bigr )$ is any bounded operator on $L^2(\R^d)$ and $\op^A(f_{L,R})$ are the magnetic Weyl quantizations of symbols $f_{L,R}$. To this end we introduce the magnetic super Weyl system
\begin{align}
	W^A(\Xbf) \, \hat{g}^A = w^A(X_L) \, \hat{g}^A \, w^A(X_R) 
	.
	\label{setting:eqn:magnetic_super_Weyl_system}
\end{align}
Here we have denoted a point in the doubled phase space $\Pspace := \pspace\times\pspace$ by $\Xbf = (X_L,X_R)$, $X_L,X_R \in \pspace$.

We also introduce the symplectic Fourier transform on $\Pspace$, 
\begin{align*}
	(\Fourier_{\Sigma} F)(\Xbf) = \frac{1}{(2\pi)^{2d}} \int_{\Pspace} \dd \Ybf \, \e^{+\ii\Sigma(\Xbf,\Ybf)} \, F(\Ybf) 
	,
\end{align*}
with respect to the symplectic form $\Sigma(\Xbf,\Ybf) := \sigma(X_L,Y_L) + \sigma(X_R,Y_R)$, $\Xbf , \Ybf \in \Pspace$, on doubled phase space $\Pspace$. As before, $\Fourier_{\Sigma}^2 = \id$ holds. 

The symplectic Fourier transform $\Fourier_{\Sigma}$ and the magnetic super Weyl system~\eqref{setting:eqn:magnetic_super_Weyl_system} enable us to introduce the notion of magnetic \emph{super} Weyl quantization
\begin{align}
	\Op^A(F) \, \hat{g}^A := \frac{1}{(2\pi)^{2d}} \int_{\Pspace} \dd\Xbf \, (\Fourier_\Sigma F)(\Xbf) \, W^A(\Xbf) \, \hat{g}^A 
	,
	\label{setting:eqn:super_Weyl_quantization}
\end{align}
whose definition is completely analogous to \eqref{setting:eqn:magnetic_Weyl_quantization}. For a symbol of the form $F(\Xbf) = f_L(X_L) \, f_R(X_R)$ this prescription yields the super operator
\begin{align*}
	\Op^A(F) \, \hat{g}^A = \op^A(f_L) \, \hat{g}^A \, \op^A(f_R) 
	, 
	&&
	\hat{g}^A \in \bounded \bigl ( L^2(\R^d) \bigr ) 
	.
\end{align*}
With abuse of notation we shall denote this super operator by $\op^A(f_L) \otimes \op^A(f_R) = \Op^A \bigl ( f_L \otimes f_R \bigr )$.

One of the fundamental results concerning the magnetic Weyl pseudodifferential super operators is that they map pseudodifferential operators to pseudodifferential operators. More precisely, if $F \in \Schwartz(\Pspace)$ and $g \in \Schwartz(\pspace)$, there is a unique Schwartz function $F \semisuper^B g \in \Schwartz(\pspace)$ such that
\begin{align*}
	\op^A \bigl ( F \semisuper^B g \bigr ) := \Op^A(F) \, \op^A(g) .
\end{align*}
The function $F\semisuper^B g$ is called the magnetic Weyl semi-super product of $F$ and $g$, and the explicit formula for $F\semisuper^B g$ can be found in~\cite[Section~III]{Lee_Lein:magnetic_pseudodifferential_super_operators_basics:2022}. Furthermore, the map $(F,g) \mapsto F\semisuper^B g$ gives rise to a continuous bilinear map from $\Schwartz(\Pspace)\times\Schwartz(\pspace)$ to $\Schwartz(\pspace)$ (see~\cite[Proposition~IV.9]{Lee_Lein:magnetic_pseudodifferential_super_operators_basics:2022}).

Given $F,G \in \Schwartz(\Pspace)$, the magnetic super Weyl product $F \super^B G$ is the Schwartz class function satisfying
\begin{align*}
	\Op^A \bigl ( F \super^B G \bigr ) := \Op^A(F) \, \Op^A(G) .
\end{align*}
The super Weyl product $(F,G) \mapsto F \super^B G$ yields a continuous bilinear map from $\Schwartz(\Pspace) \times \Schwartz(\Pspace)$ to $\Schwartz(\Pspace)$ (\cf \cite[Proposition~IV.14]{Lee_Lein:magnetic_pseudodifferential_super_operators_basics:2022}).

All the results concerning the calculus of magnetic Weyl pseudodifferential super operators described above can be extended to Hörmander symbol classes. In addition to the usual classes of Hörmander symbols, we introduce another class of Hörmander symbols that allow one to study the behavior in left and right momenta separately.
\begin{definition}[Double Hörmander symbol classes]\label{setting:defn:double_Hoermander_symbol_classes}
	Let $0\leq\delta\leq\rho\leq 1$, $\delta<1$ and $m_L,m_R \in \R$. The space $S_{\rho,\delta}^{m_L,m_R}(\Pspace)$ consists of smooth functions $F : \Pspace \longrightarrow \C$ such that, for all $a_L,a_R,\alpha_L,\alpha_R\in\N_0^d$ there exists $C_{a_La_R\alpha_L\alpha_R}>0$ such that
	\begin{align} 
		\babs{\partial_{x_L}^{a_L} \partial_{\xi_L}^{\alpha_L} \partial_{x_R}^{a_R} \partial_{\xi_R}^{\alpha_R} F(\Xbf)} \leq C_{a_L a_R \alpha_L \alpha_R} \, \sexpval{\xi_L}^{m_L - \abs{\alpha_L} \rho + \abs{a_L} \delta} \, \sexpval{\xi_R}^{m_R - \abs{\alpha_R} \rho + \abs{a_R} \delta} 
		&&
		\forall \Xbf\in\Pspace 
		.
		\label{setting:eqn:Hoermander_symbol_separate_estimates}
	\end{align}
\end{definition}
We endow $S_{\rho,\delta}^{m_L,m_R}(\Pspace)$ with the Fréchet space topology generated by the seminorms
\begin{align*}
	\norm{F}_{m_Lm_R,a_La_R\alpha_L\alpha_R} := \sup_{\Xbf\in\Pspace} \Big( \sexpval{\xi_L}^{-m_L+\abs{\alpha_L}\rho-\abs{a_L}\delta} \, \sexpval{\xi_R}^{-m_R+\abs{\alpha_R}\rho-\abs{a_R}\delta} \big| \partial_{x_L}^{a_L} \partial_{\xi_L}^{\alpha_L} \partial_{x_R}^{a_R} \partial_{\xi_R}^{\alpha_R} F(\Xbf) \big| \Big) .
\end{align*}
Unfortunately, there is no simple nesting relation between the Hörmander symbol classes $S_{\rho,\delta}^{m_L,m_R}(\Pspace)$ and $S_{\rho,\delta}^m(\Pspace)$, $m_L,m_R,m \in \R$. The only exception is the inclusion 
\begin{align}
	S^{m_L,m_R}_{\rho,0}(\Pspace) \subseteq S^{\sabs{m_L} + \sabs{m_R}}_{0,0}(\Pspace) 
	\label{setting:eqn:nesting_double_Hoermander_class_Hoermander_class}
\end{align}
that we will use in the proof of Corollary~\ref{boundedness_super_PsiDOs:cor:boundedness_super_PsiDOs}. We refer to \cite[Remark~VI.3]{Lee_Lein:magnetic_pseudodifferential_super_operators_basics:2022} for more details on the relation between two Hörmander symbol classes.

We recall the following results on the semi-super product and super Weyl product of Hörmander symbols:
\begin{proposition}[{{{\cite[Proposition~VI.4]{Lee_Lein:magnetic_pseudodifferential_super_operators_basics:2022}}}}]
	Let $0\leq\rho\leq 1$. Then the following holds.
	\begin{enumerate}[(1)]
		\item Let $m,m' \in \R$. Then the map $(F,g)\mapsto F\semisuper^B g$ gives rise to a continuous bilinear map
		\begin{align*}
			\semisuper^B : S_{\rho,0}^m(\Pspace) \times S_{\rho,0}^{m'}(\pspace) \longrightarrow S_{\rho,0}^{m+m'}(\pspace) .
		\end{align*}
		\item Let $m,m_L,m_R \in \R$. Then the map $(F,g)\mapsto F\semisuper^B g$ gives rise to a continuous bilinear map
		\begin{align*}
			\semisuper^B : S_{\rho,0}^{m_L,m_R}(\Pspace) \times S_{\rho,0}^m(\pspace) \longrightarrow S_{\rho,0}^{m_L+m_R+m}(\pspace) .
		\end{align*}
	\end{enumerate}
\end{proposition}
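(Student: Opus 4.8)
The plan is to reduce both claims to estimates on a single explicit oscillatory integral, treating the magnetic contributions as a bounded perturbation of a quadratic phase and then running the classical Hörmander-calculus bookkeeping with $\delta = 0$. Concretely, I would start from the defining relation $\op^A(F \semisuper^B g) = \Op^A(F) \, \op^A(g)$ together with the explicit formula for $F \semisuper^B g$ on Schwartz symbols from \cite[Section~III]{Lee_Lein:magnetic_pseudodifferential_super_operators_basics:2022}, obtained by composing the integral kernels of $\op^A(f_L)$, $\op^A(g)$ and $\op^A(f_R)$ in the product case and patching the general case together by the composition law of the magnetic (super) Weyl system. The result is an expression of the schematic form
\begin{align*}
	(F \semisuper^B g)(X) = c_d \int \dd \Ybf \; \e^{\ii \, \Phi_0(X,\Ybf)} \; \triflux(X,\Ybf) \; F \bigl ( \ell_1(X,\Ybf) \bigr ) \, g \bigl ( \ell_2(X,\Ybf) \bigr ) ,
\end{align*}
where $\Ybf$ collects the finitely many phase-space integration variables, $\Phi_0$ is a fixed real, nondegenerate quadratic phase built from the symplectic forms $\sigma$ and $\Sigma$, the maps $\ell_1$ (valued in $\Pspace$) and $\ell_2$ (valued in $\pspace$) are affine, and $\triflux(X,\Ybf)$ is a product of exponentials of fluxes of $B$ through triangles whose vertices are affine functions of the position components of $X$ and $\Ybf$. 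The one structural input I draw from Assumption~\ref{setting:assumption:magnetic_field}(a) is that, because $B_{jk} \in \Cont^{\infty}_{\mathrm{b}}(\R^d)$, the factor $\triflux$ is smooth and, together with all of its partial derivatives, uniformly bounded by a constant depending only on finitely many $\Cont^{\infty}_{\mathrm{b}}$-seminorms of $B$; in particular $\triflux$ contributes no growth in any momentum variable.

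I would then estimate the Hörmander seminorms of $F \semisuper^B g$. The integral above is not absolutely convergent and must be interpreted as an oscillatory integral: since $\Phi_0$ is nondegenerate in $\Ybf$, and since adding the magnetic phase $\arg \triflux$ perturbs $\Phi_0$ only by a term all of whose derivatives are bounded, repeated integration by parts against the oscillatory factor $\e^{\ii \Phi_0}$ inserts an arbitrarily large negative power of $\sexpval{\Ybf}$, with constants controlled by $\Cont^{\infty}_{\mathrm{b}}$-seminorms of $B$, which makes the integral absolutely convergent. For fixed multi-indices $a,\alpha \in \N_0^d$, applying $\partial_x^a \partial_\xi^\alpha$ under the integral and expanding by the Leibniz rule produces finitely many terms in which derivatives fall on $F$, on $g$ or on $\triflux$; the momentum weights are then tracked exactly as in the scalar calculus, using that $\delta = 0$ makes each $\partial_x$ on $F$ or $g$ cost nothing while each $\partial_\xi$ improves the decay in the corresponding momentum argument by $\sexpval{\,\cdot\,}^{-\rho}$. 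A final application of Peetre's inequality, comparing the momentum arguments of $F$ and $g$ with $\xi$ on the region left after the integration by parts, yields
\begin{align*}
	\babs{\partial_x^a \partial_\xi^\alpha (F \semisuper^B g)(X)} \leq C_{a\alpha} \, \sexpval{\xi}^{m + m' - \abs{\alpha} \rho}
	,
\end{align*}
with $C_{a\alpha}$ bounded by a finite product of Hörmander seminorms of $F$ and of $g$. This is exactly membership of $F \semisuper^B g$ in $S^{m+m'}_{\rho,0}(\pspace)$ together with continuity of the bilinear map $\semisuper^B$, which proves~(1).

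For~(2) the argument is identical, except that the two momentum weights attached to $F$ — one for the left variable $\xi_L$, one for the right variable $\xi_R$ — are kept separate throughout. Since in $\Phi_0$ the left and right blocks of $\Ybf$ are coupled to the remaining integration variables independently, the integration by parts respects this splitting, and the weights $\sexpval{\xi_L}^{m_L}$, $\sexpval{\xi_R}^{m_R}$ together with the order-$m$ weight coming from $g$ combine via Peetre's inequality to $\sexpval{\xi}^{m_L + m_R + m}$, which is the asserted target order. I emphasize that a direct reduction to the scalar magnetic Weyl-product estimate $\weyl^B : S^{m_1}_{\rho,\delta}(\pspace) \times S^{m_2}_{\rho,\delta}(\pspace) \longrightarrow S^{m_1+m_2}_{\rho,\delta}(\pspace)$ applied to $f_L \weyl^B g \weyl^B f_R$ is \emph{not} available for general $F$, because, as recalled in the introduction, the double Hörmander classes are non-nuclear and do not split as topological tensor products; the direct oscillatory-integral estimate is precisely what bypasses this.

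The step I expect to be the main obstacle is the oscillatory-integral bookkeeping rather than any individual estimate: one has to make the integral well defined and check that the integration-by-parts regularization is independent of the order in which it is performed and of the auxiliary cutoffs used to justify it, and one has to verify that the magnetic flux factor $\triflux$ — which makes the total phase genuinely non-quadratic — does not spoil the non-stationary-phase estimates. This is exactly where Assumption~\ref{setting:assumption:magnetic_field}(a) enters decisively: it guarantees that $\arg \triflux$ is a symbol of order $0$ all of whose derivatives are bounded, so that the operators used in the integration by parts have uniformly bounded coefficients and every estimate closes uniformly in $X$.
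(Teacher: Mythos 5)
The paper does not prove this proposition; it is recalled verbatim from \cite[Proposition~VI.4]{Lee_Lein:magnetic_pseudodifferential_super_operators_basics:2022}, which establishes it by exactly the kind of oscillatory-integral estimate you sketch (the paper points to the explicit integral formula in Section~III and the seminorm estimates in Appendix~B of that reference). So your overall strategy — write $F \semisuper^B g$ as a parameter-dependent oscillatory integral with a fixed nondegenerate symplectic phase, use non-stationary-phase integration by parts to force absolute convergence, and track the $\sexpval{\xi}$-weights via Leibniz and Peetre — is the same one as the source, and your remark that one cannot simply reduce to the scalar $\weyl^B$ estimate because the double Hörmander classes are not topological tensor products is correct and well taken.

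One claim in your sketch is wrong as stated and needs repair, because it is precisely the place where the magnetic regularity assumption does its work. You assert that since $B_{jk} \in \Cont^{\infty}_{\mathrm{b}}(\R^d)$, the flux factor $\triflux$ ``together with all of its partial derivatives'' is uniformly bounded. That is false. Writing $\triflux = \e^{-\ii \Gamma^B}$ with $\Gamma^B$ the flux through a triangle whose vertices are affine in $(X,\Ybf)$, the function $\Gamma^B$ is real (so $\triflux$ itself has modulus $1$), but a first derivative of $\Gamma^B$ with respect to a vertex is an integral of $B$ over an edge of the triangle and therefore grows \emph{linearly} in the diameter of the triangle; more generally, $\partial^{\gamma} \triflux$ is only bounded by a polynomial of degree $\abs{\gamma}$ in the lengths of the edges. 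Consequently the operator you integrate by parts with does not have uniformly bounded coefficients, and every integration by parts that gains a power of $\sexpval{\Ybf}^{-1}$ simultaneously produces a polynomially growing factor from the differentiated flux. The proof closes only because the gain can be made to dominate the loss — one chooses the number of integration-by-parts steps large enough that the net decay in $\Ybf$ is still rapid — and this trade-off is the actual bookkeeping content behind \cite[Theorem~2.6]{Iftimie_Mantoiu_Purice:magnetic_psido:2006} and its super-operator analogue. Your sketch presents this as automatic; once the flux derivatives are tracked honestly, the rest of the argument (Leibniz, $\delta = 0$, Peetre to move all momentum weights onto $\sexpval{\xi}$, and the left/right splitting for part~(2)) goes through as you describe.
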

\begin{proposition}[{{{\cite[Proposition~VI.5]{Lee_Lein:magnetic_pseudodifferential_super_operators_basics:2022}}}}]
	Let $0\leq\rho\leq 1$. Then the following holds.
	\begin{enumerate}[(1)]
		\item Let $m,m' \in \R$. Then the map $(F,G)\mapsto F\super^B G$ gives rise to a continuous bilinear map
		\begin{align*}
			\super^B : S_{\rho,0}^m(\Pspace) \times S_{\rho,0}^{m'}(\Pspace) \longrightarrow S_{\rho,0}^{m+m'}(\Pspace) .
		\end{align*}
		\item Let $m_L,m_L',m_R,m_R' \in \R$. Then the map $(F,G)\mapsto F\super^B G$ gives rise to a continuous bilinear map
		\begin{align*}
			\super^B : S_{\rho,0}^{m_L,m_R}(\Pspace) \times S_{\rho,0}^{m_L',m_R'}(\Pspace) \longrightarrow S_{\rho,0}^{m_L+m_L',m_R+m_R'}(\Pspace) .
		\end{align*}
	\end{enumerate}
\end{proposition}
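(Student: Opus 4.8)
The plan is to realise the super Weyl product as an ordinary magnetic Weyl product on the doubled configuration space $\R^{2d} = \R^d_{x_L} \times \R^d_{x_R}$ and then to quote the known continuity of the latter. Starting from the magnetic super Weyl system~\eqref{setting:eqn:magnetic_super_Weyl_system}, one first computes its composition law: writing $W^A(\Xbf) W^A(\Ybf) \hat g^A = w^A(X_L) w^A(Y_L) \, \hat g^A \, w^A(Y_R) w^A(X_R)$ and inserting the two-cocycle identity $w^A(X) w^A(Y) = \e^{- \frac{\ii}{2} \sigma(X,Y)} \, \omega^B(X,Y) \, w^A(X+Y)$ of magnetic Weyl calculus (\cf \cite{Mantoiu_Purice:magnetic_Weyl_calculus:2004}) — once for the left pair $X_L,Y_L$ and, in the reversed order, for the right pair $Y_R,X_R$ — gives
\begin{align*}
	W^A(\Xbf) W^A(\Ybf) = \e^{- \frac{\ii}{2} \bigl ( \sigma(X_L,Y_L) - \sigma(X_R,Y_R) \bigr )} \, \omega^B(X_L,Y_L) \, \omega^B(Y_R,X_R) \, W^A(\Xbf + \Ybf)
	,
\end{align*}
where $\omega^B(Y_R,X_R)$ is a magnetic flux factor through the triangle with vertices in the right position variables, taken with reversed orientation. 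Substituting this into $\Op^A(F) \, \Op^A(G)$ as defined by~\eqref{setting:eqn:super_Weyl_quantization} and performing the usual change of variables produces an oscillatory-integral formula for $F \super^B G$. The key observation is that the linear substitution $\mathcal R : (x_L,\xi_L,x_R,\xi_R) \mapsto (x_L,\xi_L,x_R,-\xi_R)$ turns the phase $\sigma(X_L,Y_L) - \sigma(X_R,Y_R)$ into the symplectic form $\Sigma$ on $T^*\R^{2d}$ and the two flux factors into the magnetic cocycle of the magnetic field $\mathbf B := B \oplus (-B)$ on $\R^{2d}$ — the sign flip in the $x_R$-block coming precisely from the reversed order of the right factors. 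Hence, as tempered distributions on $\Pspace$ (first for Schwartz symbols, then by continuity),
\begin{align*}
	F \super^B G = \mathcal R^{-1} \bigl ( (\mathcal R F) \weyl^{\mathbf B} (\mathcal R G) \bigr )
	,
\end{align*}
where $\weyl^{\mathbf B}$ is the ordinary magnetic Weyl product on $T^*\R^{2d}$.

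Granting this identity, part~(1) is immediate. Since the components of $B$ lie in $\Cont^{\infty}_{\mathrm b}(\R^d)$, those of $\mathbf B$ lie in $\Cont^{\infty}_{\mathrm b}(\R^{2d})$, and $\mathbf A := A \oplus (-A)$ is a polynomially bounded vector potential with $\dd \mathbf A = \mathbf B$; therefore \cite[Theorem~2.6]{Iftimie_Mantoiu_Purice:magnetic_psido:2006}, applied on $T^*\R^{2d}$, gives that $\weyl^{\mathbf B} : S^m_{\rho,0}(\Pspace) \times S^{m'}_{\rho,0}(\Pspace) \to S^{m+m'}_{\rho,0}(\Pspace)$ is continuous and bilinear, where $S^{\bullet}_{\rho,0}(\Pspace)$ is read as the usual Hörmander class on $T^*\R^{2d}$ with the Japanese bracket $\sexpval{(\xi_L,\xi_R)}$. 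Composing with $\mathcal R$, which is an isometric topological isomorphism of each such Hörmander class (because $\sexpval{-\xi_R} = \sexpval{\xi_R}$), yields the claim for $\super^B$.

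Part~(2) concerns the anisotropic classes $S^{m_L,m_R}_{\rho,0}(\Pspace)$, which are not literally covered by \cite[Theorem~2.6]{Iftimie_Mantoiu_Purice:magnetic_psido:2006}, so there one has to rerun the oscillatory-integral estimate while tracking $\xi_L$ and $\xi_R$ separately. The structure that makes this work is that, after $\mathcal R$, the phase of $\weyl^{\mathbf B}$ decouples into a phase in the $(x_L,\xi_L)$-variables plus a phase in the $(x_R,\xi_R)$-variables, whereas the magnetic amplitude $\omega^{\mathbf B}$ depends on positions only and factors as a flux in the left position triangle times a flux in the right one, all of whose derivatives are bounded by Assumption~\ref{setting:assumption:magnetic_field}. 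The standard recipe — regularise the integral, integrate by parts in the left (resp.\ right) integration variables using the left (resp.\ right) phase to manufacture arbitrary decay, then differentiate under the integral and apply Leibniz — can therefore be executed independently in the two blocks. Since $\delta = 0$, position derivatives cost no momentum growth, and one arrives at bounds
\begin{align*}
	\babs{\partial_{x_L}^{a_L} \partial_{\xi_L}^{\alpha_L} \partial_{x_R}^{a_R} \partial_{\xi_R}^{\alpha_R} \bigl ( F \super^B G \bigr )(\Xbf)} \leq C \, \sexpval{\xi_L}^{m_L + m_L' - \rho \abs{\alpha_L}} \, \sexpval{\xi_R}^{m_R + m_R' - \rho \abs{\alpha_R}}
\end{align*}
with $C$ a finite product of seminorms of $F$ and $G$ and of $\sup$-norms of derivatives of $B$; this is simultaneously the assertion $F \super^B G \in S^{m_L+m_L',m_R+m_R'}_{\rho,0}(\Pspace)$ and the continuity of the bilinear map.

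The main obstacle is the bookkeeping in this anisotropic oscillatory-integral argument: the numbers of integrations by parts in the left and right blocks must be chosen independently and large enough to absorb the worst-case polynomial growth coming from the symbols of $F$, $G$ and from the Jacobian factors, and one must verify that integrations by parts performed in one block never reintroduce growth in the other block's momentum — which is exactly where the block-diagonal form of the phase and the momentum-independence of $\omega^{\mathbf B}$ are essential. A reader who prefers not to redo oscillatory integrals may instead invoke the super Weyl product computations of \cite{Lee_Lein:magnetic_pseudodifferential_super_operators_basics:2022}, into whose oscillatory-integral representations the Hörmander symbol bounds are fed directly.
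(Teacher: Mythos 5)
The paper itself does not prove this proposition: it is quoted verbatim from the authors' earlier work \cite{Lee_Lein:magnetic_pseudodifferential_super_operators_basics:2022} (Proposition~VI.5) and is simply cited as input, so there is no in-paper proof to compare against. Judged on its own merits, your reduction is a genuinely different — and attractive — route from what one would expect the cited reference to do (a direct oscillatory-integral computation in all eight doubled phase-space variables): you propose to transport the super Weyl calculus, via the Hilbert--Schmidt isomorphism $\mathfrak{L}^2 \bigl ( \mathcal{B}(L^2(\R^d)) \bigr ) \cong L^2(\R^{2d})$ and a momentum flip $\mathcal R$ in the right block, onto an ordinary magnetic Weyl calculus on $T^*\R^{2d}$ with field $\mathbf B = B \oplus (-B)$, after which part~(1) is literally a corollary of \cite[Theorem~2.6]{Iftimie_Mantoiu_Purice:magnetic_psido:2006}. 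That is a real conceptual economy for part~(1), and the structural reasons you give — reversed order of the right Weyl factors producing the orientation flip in the right flux triangle, hence $-B$; momentum sign flip coming from the transpose on the right tensor leg — are the right ones.

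The weak link is that the central identity
\begin{align*}
	F \super^B G = \mathcal R^{-1} \bigl ( (\mathcal R F) \weyl^{\mathbf B} (\mathcal R G) \bigr )
\end{align*}
is asserted rather than established, and it is precisely the sort of statement where sign and convention mismatches accumulate. Note in particular that $\mathcal R$ is \emph{not} a $\Sigma$-symplectic map: $\Sigma(\mathcal R \Xbf, \mathcal R \Ybf) = \sigma(X_L,Y_L) - \sigma(X_R,Y_R) \neq \Sigma(\Xbf,\Ybf)$, so $\mathcal R$ does \emph{not} commute with the symplectic Fourier transform $\Fourier_\Sigma$ that enters both $\Op^A$ and $\op^{\mathbf A}$. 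One therefore has to verify explicitly that the anti-symplectic half of $\mathcal R$, the reversed right Weyl factor order, and the sign flip in $\mathbf B$ compensate each other coherently at the level of the quantization map, not only at the level of the Weyl system's twisted composition law — i.e.\ that there really is a unitary $U : \mathfrak{L}^2 \to L^2(\R^{2d})$ with $U \, \Op^A(F) \, U^* = \op^{\mathbf A}(\mathcal R F)$ for all $F$. This is plausible, and likely true, but without it the proof of part~(1) rests on an unverified intertwining. For part~(2) you correctly observe that the anisotropic classes $S^{m_L,m_R}_{\rho,0}(\Pspace)$ are outside the scope of \cite[Theorem~2.6]{Iftimie_Mantoiu_Purice:magnetic_psido:2006} and that one must redo the oscillatory-integral estimate; your remark that the left/right integrations by parts decouple because the phase is block-diagonal and $\omega^{\mathbf B}$ is momentum-independent identifies the correct mechanism, but the block-independent choice of the regularization exponents is exactly the bookkeeping that the cited proof has to supply, so at that point the proposal is a roadmap rather than a proof. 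In short: a sound and arguably cleaner alternative strategy for part~(1) modulo a careful check of the conjugation identity, and a correct diagnosis but not a self-contained argument for part~(2).
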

We can also derive the following result from \cite{Lee_Lein:magnetic_pseudodifferential_super_operators_basics:2022}:
\begin{proposition}
	\label{setting:prop:Hoermander_double_symbol_Schwartz_symbol_semi_super_product}
	Let $0 \leq \rho \leq 1$. Then the following holds:
	\begin{enumerate}[(1)]
		\item Let $F\in S_{\rho,0}^m(\Pspace)$, $m \in \R$ and $g \in \Schwartz(\pspace)$. Then $F \semisuper^B g \in \Schwartz(\pspace)$.
		\item Let $F\in S_{\rho,0}^{m_L,m_R}(\Pspace)$, $m_L , m_R \in \R$ and $g \in \Schwartz(\pspace)$. Then $F \semisuper^B g \in \Schwartz(\pspace)$.
	\end{enumerate}
\end{proposition}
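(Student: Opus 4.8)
The plan is to pass from symbols to operators, using that the magnetic Weyl quantization identifies $\Schwartz(\pspace)$ with the ideal of \emph{regularizing} operators. Recall from Section~\ref{setting:magnetic_PsiDOs} that $\op^A$ maps $\Schwartz(\pspace)$ continuously into $\mathcal{L} \bigl ( \Schwartz'(\R^d) , \Schwartz(\R^d) \bigr )$; as the natural counterpart, at the level of regularizing operators, of the stated isomorphism $\op^A : \Schwartz'(\pspace) \to \mathcal{L} \bigl ( \Schwartz(\R^d) , \Schwartz'(\R^d) \bigr )$, it is a standard fact that a tempered symbol $h \in \Schwartz'(\pspace)$ belongs to $\Schwartz(\pspace)$ precisely when $\op^A(h)$ maps $\Schwartz'(\R^d)$ continuously into $\Schwartz(\R^d)$. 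By \cite[Proposition~VI.4]{Lee_Lein:magnetic_pseudodifferential_super_operators_basics:2022}, applied together with the trivial inclusion $\Schwartz(\pspace) \subseteq S_{\rho,0}^{m'}(\pspace)$ for every $m'$, we already know that $F \semisuper^B g$ is such a tempered (indeed Hörmander-class) symbol, and that $\op^A \bigl ( F \semisuper^B g \bigr ) = \Op^A(F) \, \op^A(g)$ by the construction of $\semisuper^B$ and the compatibility of the super operator calculus with the pseudodifferential calculus. Hence it suffices to prove that $\Op^A(F) \, \op^A(g)$ is regularizing.

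Now $\op^A(g)$ is itself regularizing, by the mapping property of $\op^A$ on $\Schwartz(\pspace)$ recalled above applied to $g$. So the real content of the proposition is that a Hörmander-class magnetic pseudodifferential \emph{super} operator $\Op^A(F)$, with $F \in S_{\rho,0}^m(\Pspace)$ or $F \in S_{\rho,0}^{m_L,m_R}(\Pspace)$, maps regularizing operators to regularizing operators. For a product symbol $F = f_L \otimes f_R$ with $f_L, f_R$ Hörmander symbols this is immediate from $\Op^A(f_L \otimes f_R) \, \hat{T} = \op^A(f_L) \, \hat{T} \, \op^A(f_R)$ and the fact that magnetic pseudodifferential operators with Hörmander symbols preserve both $\Schwartz(\R^d)$ and $\Schwartz'(\R^d)$: the composition $\Schwartz'(\R^d) \xrightarrow{\op^A(f_R)} \Schwartz'(\R^d) \xrightarrow{\hat{T}} \Schwartz(\R^d) \xrightarrow{\op^A(f_L)} \Schwartz(\R^d)$ is again regularizing. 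For a general $F$ one invokes the mapping properties of $\Op^A$ on the relevant operator spaces established in \cite{Lee_Lein:magnetic_pseudodifferential_super_operators_basics:2022}, or argues directly from the super Weyl quantization formula $\Op^A(F) \, \hat{T} = (2\pi)^{-2d} \int_{\Pspace} \dd \Xbf \, (\Fourier_\Sigma F)(\Xbf) \, w^A(X_L) \, \hat{T} \, w^A(X_R)$, in which the regularity of $\hat{T}$ makes the oscillatory integral against the tempered distribution $\Fourier_\Sigma F$ converge in $\mathcal{L} \bigl ( \Schwartz'(\R^d) , \Schwartz(\R^d) \bigr )$ after finitely many integrations by parts. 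Combining the two steps shows $\op^A(F \semisuper^B g) = \Op^A(F) \, \op^A(g)$ is regularizing, whence $F \semisuper^B g \in \Schwartz(\pspace)$; statements~(1) and~(2) follow by exactly the same reasoning, the only difference being whether one cites part~(1) or part~(2) of \cite[Proposition~VI.4]{Lee_Lein:magnetic_pseudodifferential_super_operators_basics:2022}.

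The main obstacle is precisely the general (non-product) case of the intermediate claim — that an arbitrary Hörmander-class super operator preserves the ideal of regularizing operators. In the product case it is a one-line bookkeeping of mapping properties, but in general it rests either on quoting the appropriate continuity statement from \cite{Lee_Lein:magnetic_pseudodifferential_super_operators_basics:2022} or on a short oscillatory-integral argument of the kind otherwise confined to Section~\ref{characterization_symbols_matrix_elements_super_PsiDOs}; a more computational alternative bypasses operators altogether and estimates $F \semisuper^B g$ and its weighted derivatives directly from the explicit formula for $\semisuper^B$ in \cite[Section~III]{Lee_Lein:magnetic_pseudodifferential_super_operators_basics:2022}, using the rapid decay of $g$ to absorb the polynomial growth of $F$ after integrating by parts in the oscillatory phase. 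It is worth emphasizing what comes for free and what does not: \cite[Proposition~VI.4]{Lee_Lein:magnetic_pseudodifferential_super_operators_basics:2022} together with $\Schwartz(\pspace) \subseteq S_{\rho,0}^{m'}(\pspace)$ immediately yields $F \semisuper^B g \in \bigcap_{m'} S_{\rho,0}^{m'}(\pspace)$, i.e. rapid decay of all derivatives in the momentum variable $\xi$; the genuinely new input is the decay in the base variable $x$, which is what the operator-theoretic (or the direct) argument supplies and which has no analogue in the Hörmander-class estimates alone.
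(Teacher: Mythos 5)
Your proposal is correct in outline but takes a genuinely different route from the paper's proof. The paper stays entirely at the symbol level and disposes of the claim with a single citation: by \cite[Lemma~V.5]{Lee_Lein:magnetic_pseudodifferential_super_operators_basics:2022} the Hörmander class $S^m_{\rho,0}(\Pspace)$ (resp.\ $S^{m_L,m_R}_{\rho,0}(\Pspace)$) embeds into the magnetic semi-super Moyal space $\ssMoyalSpace$, which is \emph{defined} to be the set of $G \in \Schwartz'(\Pspace)$ for which $h \mapsto G^{\mathrm t} \semisuper^B h$ is continuous and Schwartz-valued on $\Schwartz(\pspace)$; a small bookkeeping step involving the transpose $F^{\mathrm t}$ (which trivially stays in the same Hörmander class, or in the class with $m_L$ and $m_R$ swapped) then yields $F \semisuper^B g \in \Schwartz(\pspace)$ directly. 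Your version instead reduces the claim to $\Op^A(F)$ preserving the ideal of regularizing operators, which is the operator-theoretic face of the same fact; it is valid and, to your credit, makes explicit something the paper does not spell out — namely that \cite[Proposition~VI.4]{Lee_Lein:magnetic_pseudodifferential_super_operators_basics:2022} combined with $\Schwartz(\pspace)\subseteq\bigcap_{m'}S^{m'}_{\rho,0}(\pspace)$ only gives rapid decay in $\xi$, so the genuinely new content is the decay in the base variable $x$. The price you pay is in the general (non-product) case, where ``one invokes the mapping properties of $\Op^A$ established in \cite{Lee_Lein:magnetic_pseudodifferential_super_operators_basics:2022}'' is precisely the place the paper cites Lemma~V.5, and your alternative oscillatory-integral sketch is thinner than it may appear: $\Xbf\mapsto w^A(X_L)\,\hat T\,w^A(X_R)$ has only polynomially bounded $\mathcal L\bigl(\Schwartz'(\R^d),\Schwartz(\R^d)\bigr)$-seminorms, so convergence of the integral against $\Fourier_\Sigma F$ (a tempered distribution, not a function) requires the phase structure and a nontrivial integration-by-parts argument, which is essentially the content of Lemma~V.5 that you would be re-deriving. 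In short: same mathematics, different vantage point, and the paper's proof is shorter because the heavy lifting has already been isolated in a separate lemma.
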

\begin{proof}
	Suppose that $F \in S_{\rho,0}^m(\Pspace)$ and $g \in \Schwartz(\pspace)$. Then also the transpose $F^{\mathrm{t}}(X_L,X_R) := F(X_R,X_L)$, $X_L,X_R \in \pspace$, of the symbol, obtained by swapping the arguments, lies in the same Hörmander class $S_{\rho,0}^m(\Pspace) \ni F^{\mathrm{t}}$. By \cite[Lemma~V.5]{Lee_Lein:magnetic_pseudodifferential_super_operators_basics:2022} we know that there is an inclusion from $S_{\rho,0}^m(\Pspace)$ into the magnetic semi-super Moyal space $\ssMoyalSpace$ defined by
	\begin{align*}
		\ssMoyalSpace := \Bigl \{ G \in \Schwartz'(\Pspace) \; \; \big \vert \; \; \text{$\Schwartz(\pspace)\ni h\mapsto G^{\mathrm{t}} \semisuper^B h \in \Schwartz(\pspace)$ is continuous and linear} \Bigr \} 
		, 
	\end{align*}
	where we have extended the transpose to tempered distributions by duality (\cf the second displayed equation in \cite[Section~V.B]{Lee_Lein:magnetic_pseudodifferential_super_operators_basics:2022}). Therefore, since $F^{\mathrm{t}} \in S_{\rho,0}^m(\Pspace) \subset \ssMoyalSpace$ and $g \in \Schwartz(\pspace)$ we can deduce that $(F^{\mathrm{t}})^{\mathrm{t}} \semisuper^B g = F \semisuper^B g \in \Schwartz(\pspace)$. This proves the first assertion.
	
	Likewise, if $F \in S_{\rho,0}^{m_L,m_R}(\Pspace)$ is a double Hörmander symbol, then we can combine $F^{\mathrm{t}} \in S_{\rho,0}^{m_R,m_L}(\Pspace)$ with \cite[Lemma~V.5]{Lee_Lein:magnetic_pseudodifferential_super_operators_basics:2022}, $S_{\rho,0}^{m_R,m_L}(\Pspace) \subset \ssMoyalSpace$. Hence, the second assertion can be proved along similar lines as that of the proof of the first one. This completes the proof.
\end{proof}
Another notable result in~\cite{Lee_Lein:magnetic_pseudodifferential_super_operators_basics:2022} is the boundedness of magnetic pseudodifferential super operators associated with Schwartz class symbols.
\begin{proposition}[{{{\cite[Lemma~IV.6]{Lee_Lein:magnetic_pseudodifferential_super_operators_basics:2022}}}}]
	Let $F \in \Schwartz(\Pspace)$. Then the following holds.
	\begin{enumerate}[(1)]
		\item $\Op^A(F)$ gives rise to a bounded linear operator from $\bounded \big( L^2(\R^d) \big)$ to itself.
		\item For every $p\geq 1$, $\Op^A(F)$ gives rise to a bounded linear operator from $\mathfrak{L}^p \big( \bounded \big( L^2(\R^d) \big) \big)$ to itself.
	\end{enumerate}
\end{proposition}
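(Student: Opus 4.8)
The plan is to read the boundedness off directly from the defining integral~\eqref{setting:eqn:super_Weyl_quantization}, using that the building blocks $W^A(\Xbf)$ act \emph{isometrically} on $\bounded \bigl ( L^2(\R^d) \bigr )$ and on every Schatten class. First note that $\Fourier_{\Sigma}$ maps $\Schwartz(\Pspace)$ onto itself, so $\Fourier_{\Sigma} F \in \Schwartz(\Pspace) \subset L^1(\Pspace)$. Next, for fixed $\Xbf = (X_L , X_R) \in \Pspace$ the operators $w^A(X_L)$ and $w^A(X_R)$ are unitary on $L^2(\R^d)$ — being exponentials of self-adjoint operators — hence $\hat g^A \mapsto W^A(\Xbf) \hat g^A = w^A(X_L) \, \hat g^A \, w^A(X_R)$ is an isometry of $\bounded \bigl ( L^2(\R^d) \bigr )$, and for every $1 \leq p < \infty$ also of $\mathfrak{L}^p \bigl ( \bounded \bigl ( L^2(\R^d) \bigr ) \bigr )$, because the operator norm and all Schatten $p$-norms are invariant under multiplication by unitaries from the left and the right.

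For part~(1) I would interpret~\eqref{setting:eqn:super_Weyl_quantization} weakly, via sesquilinear forms. Since $X \mapsto w^A(X)$ is strongly continuous, $\Xbf \mapsto \scpro{\varphi}{w^A(X_L) \, \hat g^A \, w^A(X_R) \psi}$ is continuous, hence measurable, for all $\varphi , \psi \in L^2(\R^d)$, and bounded in modulus by $\abs{(\Fourier_{\Sigma} F)(\Xbf)} \, \snorm{\hat g^A} \, \snorm{\varphi} \, \snorm{\psi}$. Therefore
\begin{align*}
	(\varphi , \psi) \longmapsto \frac{1}{(2\pi)^{2d}} \int_{\Pspace} \dd \Xbf \, (\Fourier_{\Sigma} F)(\Xbf) \, \scpro{\varphi}{w^A(X_L) \, \hat g^A \, w^A(X_R) \psi}
\end{align*}
is a bounded sesquilinear form of norm at most $(2\pi)^{-2d} \snorm{\Fourier_{\Sigma} F}_{L^1(\Pspace)} \snorm{\hat g^A}$, and Riesz representation yields a bounded operator $\Op^A(F) \hat g^A$ with $\snorm{\Op^A(F) \hat g^A} \leq (2\pi)^{-2d} \snorm{\Fourier_{\Sigma} F}_{L^1(\Pspace)} \snorm{\hat g^A}$. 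Linearity in $\hat g^A$ is clear, so part~(1) follows.

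For part~(2) I would upgrade this to a genuine Bochner integral, which is possible because $\mathfrak{L}^p \bigl ( \bounded \bigl ( L^2(\R^d) \bigr ) \bigr )$ is separable for $1 \leq p < \infty$. It suffices to verify that $\Xbf \mapsto W^A(\Xbf) \hat g^A$ is continuous into $\mathfrak{L}^p$. For a finite-rank $\hat g^A$, written as a finite sum of rank-one operators $\sopro{u}{v}$, one has $W^A(\Xbf) \sopro{u}{v} = \sopro{w^A(X_L) u}{w^A(X_R)^* v}$, which has $\mathfrak{L}^p$-norm $\snorm{u} \snorm{v}$ and depends continuously on $\Xbf$ by strong continuity of $w^A$; the general case follows by density of the finite-rank operators together with the uniform bound $\snorm{W^A(\Xbf)}_{\mathfrak{L}^p \to \mathfrak{L}^p} = 1$. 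Hence $\Xbf \mapsto (\Fourier_{\Sigma} F)(\Xbf) \, W^A(\Xbf) \hat g^A$ is Bochner integrable with $\int_{\Pspace} \abs{(\Fourier_{\Sigma} F)(\Xbf)} \, \snorm{W^A(\Xbf) \hat g^A}_{\mathfrak{L}^p} \, \dd \Xbf = \snorm{\Fourier_{\Sigma} F}_{L^1} \snorm{\hat g^A}_{\mathfrak{L}^p} < \infty$, so moving the norm inside the integral gives $\snorm{\Op^A(F) \hat g^A}_{\mathfrak{L}^p} \leq (2\pi)^{-2d} \snorm{\Fourier_{\Sigma} F}_{L^1(\Pspace)} \snorm{\hat g^A}_{\mathfrak{L}^p}$; pairing the Bochner integral against $\varphi , \psi$ recovers the form of part~(1), so the two definitions of $\Op^A(F) \hat g^A$ coincide.

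The norm estimates themselves are immediate once unitary invariance is invoked; the only genuine point of care — and hence the main obstacle — is the measurability and well-definedness of the operator-valued integral. Because $w^A(\cdot)$ is merely strongly (not norm) continuous and $\bounded \bigl ( L^2(\R^d) \bigr )$ is non-separable, Bochner's theorem is not directly available for $p = \infty$, which forces the weak/sesquilinear-form definition used above; for $p < \infty$ the reduction to finite-rank operators restores strong measurability in the separable space $\mathfrak{L}^p$. For $p = 2$ one could alternatively transport the statement to $L^2(\Pspace)$ through the unitarity of $\op^A$ and the super calculus, but the direct argument above is cleaner and treats all $p$ uniformly.
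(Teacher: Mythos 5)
Your argument is correct and is the natural proof of this statement, which the paper cites from \cite{Lee_Lein:magnetic_pseudodifferential_super_operators_basics:2022} without reproducing. Since $\Fourier_\Sigma F \in \Schwartz(\Pspace) \subset L^1(\Pspace)$ and each $W^A(\Xbf) = w^A(X_L) \,\cdot\, w^A(X_R)$ with $w^A(X_{L,R})$ unitary is an isometry of $\bounded\bigl(L^2(\R^d)\bigr)$ and of every $\mathfrak{L}^p$, the defining integral immediately yields $\snorm{\Op^A(F)} \leq (2\pi)^{-2d}\snorm{\Fourier_\Sigma F}_{L^1(\Pspace)}$ as an operator on each of these spaces; your distinction between the weak/sesquilinear interpretation (needed for $p=\infty$, where $\bounded$ is non-separable and $w^A$ only strongly continuous) and the Bochner interpretation (for $1\leq p<\infty$, via the finite-rank reduction) is exactly the point of care, and your treatment of it is sound.
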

As mentioned in the introduction the main goal of this article is to extend the boundedness result on $\mathfrak{L}^2 \bigl ( \bounded \bigl ( L^2(\R^d) \bigr ) \bigr )$ to a suitable class of Hörmander symbols. We do this by representing linear (super) operators as infinite matrices via tight Gabor frames described in \cite{Cornean_Helffer_Purice:boundedness_magnetic_PsiDOs_via_Gabor_frames:2022}.
\section{Abstract setting: characterization of (super) operators via their matrix elements} 
\label{abstract_matrix_representation}
Before we can characterize super operators $\hat{F}^A$ through their matrix elements, we need to understand what elements of the $p$-Schatten classes $\mathfrak{L}^p \bigl ( \mathcal{B} \bigl ( L^2(\R^d) \bigr ) \bigr )$ look like — at least for Hilbert-Schmidt operators ($p = 2$). We will also explain our convention of matrix indices for super operators.

While the results in this section are all well-known for orthonormal bases, their extension to various types of frames is not. During the preparation of our manuscript, we came across a very nicely written work by Bingyang, Khoi and Zhu \cite{Bingyang_Khoi_Zhu:frames_Schatten_classes:2015}, which characterized operators from $p$-Schatten classes in terms of $\ell^p$ summability. Importantly, they showed one needs to distinguish between the case $0 < p \leq 2$ and $2 \leq p < \infty$. 

To clean up the presentation, we will discuss the matter in more generality:
\begin{assumption}[General setting]\label{abstract_matrix_representation:assumption:general_setting}
	\begin{enumerate}[(a)]
		\item The Hilbert space $\Hil$ is infinite-dimensional and separable.
		\item $\{ \mathcal{G}_{\alpha} \}_{\alpha \in \Gamma}$ is a Parseval frame indexed by some countable set $\Gamma \cong \Z^k$ for some $k \in \N$.
	\end{enumerate}
\end{assumption}
Indeed, this allows us to omit ${}^A$ in \eqref{setting:eqn:definition_Gabor_frame} and replace pairs of multi indices $(\alpha,\alpha^*)$ with a single one. We also point out that the assumption $\Gamma \cong \Z^k$ could be relaxed as any countably infinite set can be mapped bijectively onto $\N$. Our assumption that $\Hil$ is infinite-dimensional excludes the uninteresting case $\Hil \cong \C^n$ where all $p$-Schatten classes
\begin{align*}
	\mathfrak{L}^p \bigl ( \bounded(\Hil) \bigr ) = \bounded(\Hil) \cong \mathrm{Mat}_{\C}(n) 
	, 
	&&
	1 \leq p < \infty 
	, 
\end{align*}
coincide with the bounded operators (essentially $n \times n$ matrices). What matters is that these assumptions are satisfied for case at hand, namely $\Hil = L^2(\R^d)$ and the Parseval frame from Section~\ref{setting}.

\subsection{Matrix representation of operators} 
\label{abstract_matrix_representation:operators}
The idea is to write any suitable operator 
\begin{align}
	\hat{f} &= \sum_{\alpha , \beta \in \Gamma} \hat{f}_{\alpha,\beta} \, \sopro{\mathcal{G}_{\alpha}}{\mathcal{G}_{\beta}} 
	\in \mathfrak{L}^p \bigl ( \mathcal{B}(\Hil) \bigr ) 
	, 
	&&
	1 \leq p < \infty 
	, 
	\label{abstract_matrix_representation:eqn:matrix_representation_operator}
\end{align}
as a sum of rank-$1$ operators, where the sum on the right-hand side has to converge in the relevant Banach space. Defining the matrix coefficients as 
\begin{align}
	\hat{f}_{\alpha,\beta} := \bscpro{\mathcal{G}_{\alpha}}{\hat{f} \, \mathcal{G}_{\beta}} 
	\label{abstract_matrix_representation:eqn:matrix_element_operator}
\end{align}
allows us to write operators as the sum over rank-$1$ operators, 
\begin{align*}
	\hat{f} \varphi &= \sum_{\alpha \in \Gamma} \bscpro{\mathcal{G}_{\alpha}}{\hat{f} \varphi} \, \mathcal{G}_{\alpha} 
	= \sum_{\alpha , \beta \in \Gamma} \bscpro{\mathcal{G}_{\alpha}}{\hat{f} \mathcal{G}_{\beta}} \, \sscpro{\mathcal{G}_{\beta}}{\varphi} \, \mathcal{G}_{\alpha} 
	\\
	&= \sum_{\alpha , \beta \in \Gamma} \hat{f}_{\alpha,\beta} \, \sopro{\mathcal{G}_{\alpha}}{\mathcal{G}_{\beta}} \varphi 
	. 
\end{align*}


%
%
%
%

\subsubsection{Hilbert-Schmidt operators} 
\label{abstract_matrix_representation:operators:hilbert_schmidt_operators}
In this subsection we will characterize the matrix elements of Hilbert-Schmidt operators. They are a special case, because we can make use of the Hilbert space structure and exploit various Hilbert space isometries. 
In principle, our result is covered by combining Theorems~A and B in \cite{Bingyang_Khoi_Zhu:frames_Schatten_classes:2015}. 
However, we feel the proof is so short and elegant that we give here it nonetheless. Moreover, our proof has the advantage of giving us the equality of norms in a straightforward fashion. 
\begin{proposition}\label{abstract_matrix_representation:prop:Hilbert_Schmidt}
	\begin{enumerate}[(1)]
		\item There exists an isometry that maps the space of Hilbert-Schmidt operators 
		\begin{align*}
			\mathfrak{U}_2 : \mathfrak{L}^2 \bigl ( \mathcal{B}(\Hil) \bigr ) \longrightarrow \ell^2(\Gamma^2)
		\end{align*}
		into the space of infinite matrix elements whose entries are square summable. This is accomplished via the isometry 
		\begin{align*}
			\mathfrak{U}_2 : \hat{f} \mapsto \bigl ( \hat{f}_{\alpha,\beta} \bigr )_{\alpha , \beta \in \Gamma}
			. 
		\end{align*}
		\item Since $\mathfrak{U}_2$ is an isometry, we can compute the Hilbert-Schmidt norm from the $\ell^2$ norm of its matrix elements, 
		\begin{align*}
			\bnorm{\hat{f}}_{\mathcal{L}^2(\mathcal{B}(\Hil))} &= \bnorm{\mathfrak{U}_2 \hat{f}}_{\ell^2(\Gamma^2)}
			= \bnorm{\bigl ( \hat{f}_{\alpha,\beta} \bigr )_{\alpha , \beta \in \Gamma}}_{\ell^2(\Gamma^2)}
			. 
		\end{align*}
		\item When $\{ \mathcal{G}_{\alpha} \}_{\alpha \in \Gamma}$ is in addition an orthonormal basis, then $\mathfrak{U}_2$ is a Hilbert space isomorphism. 
	\end{enumerate}
\end{proposition}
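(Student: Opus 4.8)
The plan is to exploit the canonical Hilbert space structure of $\mathfrak{L}^2\bigl(\mathcal{B}(\Hil)\bigr)$ together with the isometry $\mathfrak{u}_2 : \Hil \to \ell^2(\Gamma)$ induced by the Parseval frame. First I would recall the standard fact that the Hilbert-Schmidt operators form a Hilbert space with inner product $\bscpro{\hat{f}}{\hat{g}}_{\mathcal{L}^2} = \trace(\hat{f}^* \hat{g})$, and that for \emph{any} orthonormal basis $\{e_n\}$ of $\Hil$ one has $\bnorm{\hat{f}}_{\mathcal{L}^2}^2 = \sum_n \snorm{\hat{f} e_n}^2$. The key elementary observation is that the same Parseval-type identity holds when $\{e_n\}$ is replaced by a Parseval frame: using $\snorm{\psi}^2 = \sum_{\alpha} \sabs{\scpro{\mathcal{G}_\alpha}{\psi}}^2$ applied to $\psi = \hat{f}^* \mathcal{G}_\beta$ (or directly expanding), one gets
\begin{align*}
	\bnorm{\hat{f}}_{\mathcal{L}^2}^2 = \sum_{\beta \in \Gamma} \bnorm{\hat{f}^* \mathcal{G}_\beta}^2 = \sum_{\alpha , \beta \in \Gamma} \babs{\bscpro{\mathcal{G}_\alpha}{\hat{f} \, \mathcal{G}_\beta}}^2 = \sum_{\alpha , \beta \in \Gamma} \babs{\hat{f}_{\alpha,\beta}}^2 .
\end{align*}
The only thing to check carefully is that $\sum_\beta \snorm{\hat f^* \mathcal G_\beta}^2 = \bnorm{\hat f}_{\mathcal L^2}^2$ for a Parseval frame; this follows by writing $\snorm{\hat f^* \mathcal G_\beta}^2 = \sum_\alpha \sabs{\scpro{\mathcal G_\alpha}{\hat f^* \mathcal G_\beta}}^2$ if $\{\mathcal G_\alpha\}$ expands range vectors, and then one observes $\scpro{\mathcal G_\alpha}{\hat f^* \mathcal G_\beta} = \overline{\scpro{\mathcal G_\beta}{\hat f \mathcal G_\alpha}} = \overline{\hat f_{\beta,\alpha}}$, so the double sum is symmetric and also equals $\sum_\alpha \snorm{\hat f \mathcal G_\alpha}^2$; comparing with the orthonormal-basis formula (which one recovers by density/approximation, or by noting the quantity is basis-independent) gives the claim. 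An alternative and perhaps cleanest route is to note $\mathfrak{U}_2$ factors as $\hat f \mapsto (\mathfrak u_2 \otimes \overline{\mathfrak u_2})$ applied to $\hat f$ under the canonical identification $\mathfrak{L}^2(\mathcal B(\Hil)) \cong \Hil \otimes \overline{\Hil}$, i.e.\ via $\sopro{\mathcal G_\alpha}{\mathcal G_\beta} \leftrightarrow \mathcal G_\alpha \otimes \overline{\mathcal G_\beta}$, and a tensor product of isometries is an isometry.

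With this identity in hand, (1) and (2) are immediate: the map $\mathfrak{U}_2 : \hat f \mapsto (\hat f_{\alpha,\beta})_{\alpha,\beta}$ is well-defined into $\ell^2(\Gamma^2)$ and norm-preserving, hence an isometry; linearity is clear from the definition \eqref{abstract_matrix_representation:eqn:matrix_element_operator} of the matrix coefficients. Statement (2) is just a restatement of the displayed equality. For (3), when $\{\mathcal G_\alpha\}$ is an orthonormal basis, $\mathfrak u_2$ is onto $\ell^2(\Gamma)$, so $\ell^2_\Hil(\Gamma) = \ell^2(\Gamma)$, and surjectivity of $\mathfrak U_2$ follows: given any $(c_{\alpha,\beta}) \in \ell^2(\Gamma^2)$, the series $\sum_{\alpha,\beta} c_{\alpha,\beta} \sopro{\mathcal G_\alpha}{\mathcal G_\beta}$ converges in $\mathfrak{L}^2(\mathcal B(\Hil))$ (its partial sums are Cauchy by the norm identity, since $\{\sopro{\mathcal G_\alpha}{\mathcal G_\beta}\}$ is then orthonormal in $\mathfrak{L}^2$) to an operator with the prescribed matrix elements. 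Thus $\mathfrak U_2$ is a surjective isometry, i.e.\ a Hilbert space isomorphism.

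The main obstacle — really the only subtle point — is justifying the interchange of summations and the Parseval identity for the \emph{non-orthonormal} frame in the step above: one must argue that $\sum_{\beta}\snorm{\hat f^*\mathcal G_\beta}^2$ genuinely equals the Hilbert-Schmidt norm squared rather than merely bounding it, and that the double sum $\sum_{\alpha,\beta}\sabs{\hat f_{\alpha,\beta}}^2$ converges absolutely so Fubini applies. The convergence is automatic once we know the sum is finite (all terms are nonnegative), and the equality with $\bnorm{\hat f}_{\mathcal L^2}^2$ is best obtained by the tensor-product factorization $\mathfrak U_2 = \mathfrak u_2 \otimes \overline{\mathfrak u_2}$ noted above, which sidesteps any delicate limiting argument: since $\mathfrak u_2$ is an isometry of Hilbert spaces, so is its (Hilbert space) tensor square, and $\mathfrak{L}^2(\mathcal B(\Hil))$ is unitarily $\Hil \otimes \overline\Hil$ with $\sopro{\mathcal G_\alpha}{\mathcal G_\beta}$ mapping to $\mathcal G_\alpha \otimes \overline{\mathcal G_\beta} = \mathfrak u_2^*(\delta_\alpha) \otimes \overline{\mathfrak u_2^*(\delta_\beta)}$. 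I would present the tensor-product argument as the main proof and perhaps remark on the direct computation as an aside.
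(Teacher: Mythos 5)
Your tensor-product argument — identifying $\mathfrak{L}^2\bigl(\mathcal{B}(\Hil)\bigr)$ with $\Hil\otimes\overline{\Hil}$ and taking the tensor square of $\mathfrak{u}_2$ — is exactly the paper's proof, which cites Reed and Simon for the isometry $\mathfrak{L}^2\bigl(\mathcal{B}(\Hil)\bigr)\cong\Hil\otimes\Hil$ and then invokes that tensor products of Parseval frames are again Parseval frames. The direct computation you sketch as an aside is also sound (the underlying fact is the Parseval-frame trace identity $\trace T = \sum_{\alpha}\scpro{\mathcal{G}_\alpha}{T\,\mathcal{G}_\alpha}$ for positive trace-class $T$, which follows from the frame identity after inserting an orthonormal basis in the other slot), so the proposal is correct and follows the paper's route.
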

\begin{proof}
	First of all, the Hilbert space of Hilbert-Schmidt operators 
	\begin{align*}
		\mathfrak{L}^2 \bigl ( \mathcal{B}(\Hil) \bigr ) \cong \Hil \otimes \Hil 
	\end{align*}
	acting on a Hilbert space $\Hil$ can be identified with the (Hilbert space) tensor product of $\Hil$ with itself; this follows from \cite[Theorem~VI.23]{Reed_Simon:M_cap_Phi_1:1972} and using the assumed separability of $\Hil$. In fact, this identification is an isometry, and the Hilbert-Schmidt norm of $\hat{f}$ equals the $\Hil \otimes \Hil$ norm of its image. 
	
	As tensor products of Parseval frames are again Parseval frames (\cf \cite[Theorem~2.3]{Khosravi_Asgari:tensor_product_normalized_tight_frame:2003}), the map 
	\begin{align*}
		\hat{f} \mapsto \bigl ( \hat{f}_{\alpha,\beta} \bigr )_{\alpha , \beta \in \Gamma} 
	\end{align*}
	onto the infinite matrix is also an isometry. Therefore, the composition $\mathfrak{U}_2$ of these two isometries is also an isometry; hence, we have also obtained the claimed equality of norms. 
	
	Lastly, concerning (3), when the Parseval frame is also an orthonormal basis, $\mathfrak{U}_2$ is onto as any element of $\Hil \otimes \Hil$ defines a Hilbert-Schmidt operator and does so uniquely. 
\end{proof}
\begin{corollary}\label{abstract_matrix_representation:cor:Hilbert_Schmidt}
	%
	\begin{enumerate}[(1)]
		\item We can interpret the map 
		\begin{align*}
			\mathfrak{U}_2 : \mathfrak{L}^2 \bigl ( \mathcal{B}(\Hil) \bigr ) \longrightarrow \ell^2_{\Hil}(\Gamma^2)
		\end{align*}
		as an isometric Banach space isomorphism onto 
		\begin{align*}
			\ell^2_{\Hil}(\Gamma^2) := \ell^2_{\Hil}(\Gamma) \otimes \ell^2_{\Hil}(\Gamma) 
			. 
		\end{align*}
		\item If in addition $\{ \mathcal{G}_{\alpha} \}_{\alpha \in \Gamma}$ is an orthonormal basis, then $\ell^2_{\Hil}(\Gamma^2) = \ell^2(\Gamma^2)$ and $\mathfrak{U}_2$ is a unitary. 
	\end{enumerate}
\end{corollary}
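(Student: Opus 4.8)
The plan is to obtain the Corollary as a direct consequence of Proposition~\ref{abstract_matrix_representation:prop:Hilbert_Schmidt} by simply identifying the range of the isometry $\mathfrak{U}_2$, and then invoking two elementary facts: an isometry between Banach spaces is an isometric isomorphism onto its (automatically closed) range, and a surjective isometry between Hilbert spaces is a unitary. Recall from the proof of Proposition~\ref{abstract_matrix_representation:prop:Hilbert_Schmidt} that $\mathfrak{U}_2$ factors as the canonical \emph{surjective} isometry $\mathfrak{L}^2 \bigl ( \mathcal{B}(\Hil) \bigr ) \cong \Hil \otimes \Hil$ followed by the analysis operator of the tensor product Parseval frame $\{ \mathcal{G}_{\alpha} \otimes \mathcal{G}_{\beta} \}_{(\alpha,\beta) \in \Gamma^2}$ of $\Hil \otimes \Hil$; that second map is nothing but $\mathfrak{u}_2 \otimes \mathfrak{u}_2$ on the Hilbert space tensor product (up to the customary complex conjugation of the second factor in the Hilbert-Schmidt identification, which is immaterial for the range). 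So the only substantive point is to show $\ran ( \mathfrak{u}_2 \otimes \mathfrak{u}_2 ) = \ell^2_{\Hil}(\Gamma) \otimes \ell^2_{\Hil}(\Gamma) =: \ell^2_{\Hil}(\Gamma^2)$.

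For this I would argue with adjoints. Since $\{ \mathcal{G}_{\alpha} \}_{\alpha \in \Gamma}$ is a Parseval frame, $\mathfrak{u}_2$ is an isometry, so $\mathfrak{u}_2^* \mathfrak{u}_2 = \id_{\Hil}$, and its adjoint, the synthesis operator $\mathfrak{u}_2^* : \ell^2(\Gamma) \to \Hil$, $(c_\alpha)_\alpha \mapsto \sum_{\alpha \in \Gamma} c_\alpha \, \mathcal{G}_{\alpha}$, is surjective because $\{ \mathcal{G}_{\alpha} \}_{\alpha \in \Gamma}$ is a frame. Hence $P := \mathfrak{u}_2 \mathfrak{u}_2^*$ is the orthogonal projection of $\ell^2(\Gamma)$ onto $\ran \mathfrak{u}_2 = \ell^2_{\Hil}(\Gamma)$ (from $P \mathfrak{u}_2 = \mathfrak{u}_2$ and $\ran P \subseteq \ran \mathfrak{u}_2$). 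Tensoring, $(\mathfrak{u}_2 \otimes \mathfrak{u}_2)^*(\mathfrak{u}_2 \otimes \mathfrak{u}_2) = \id_{\Hil \otimes \Hil}$ while $(\mathfrak{u}_2 \otimes \mathfrak{u}_2)(\mathfrak{u}_2 \otimes \mathfrak{u}_2)^* = P \otimes P$ is the orthogonal projection of $\ell^2(\Gamma^2) \cong \ell^2(\Gamma) \otimes \ell^2(\Gamma)$ onto $\ell^2_{\Hil}(\Gamma) \otimes \ell^2_{\Hil}(\Gamma)$; the same reasoning as for $\mathfrak{u}_2$ then gives $\ran ( \mathfrak{u}_2 \otimes \mathfrak{u}_2 ) = \ran ( P \otimes P ) = \ell^2_{\Hil}(\Gamma^2)$. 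Composing with the surjective isometry $\mathfrak{L}^2 \bigl ( \mathcal{B}(\Hil) \bigr ) \cong \Hil \otimes \Hil$ yields $\ran \mathfrak{U}_2 = \ell^2_{\Hil}(\Gamma^2)$, and since $\mathfrak{U}_2$ is an isometry its corestriction to this range is a bijective isometry with isometric inverse, i.e. an isometric Banach space isomorphism. That settles assertion~(1). (One could instead argue directly that $\ran(\mathfrak{u}_2 \otimes \mathfrak{u}_2)$ is closed, being the isometric image of a complete space, and that it is the closed linear span of the elementary tensors $\mathfrak{u}_2\psi \otimes \mathfrak{u}_2\phi$; the projection computation is simply the slicker version of this.)

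Assertion~(2) is then immediate: if $\{ \mathcal{G}_{\alpha} \}_{\alpha \in \Gamma}$ is an orthonormal basis, then $\mathfrak{u}_2$ is onto, i.e. $\ell^2_{\Hil}(\Gamma) = \ell^2(\Gamma)$, whence $\ell^2_{\Hil}(\Gamma^2) = \ell^2(\Gamma) \otimes \ell^2(\Gamma) = \ell^2(\Gamma^2)$; moreover $\mathfrak{u}_2$, hence $\mathfrak{u}_2 \otimes \mathfrak{u}_2$, and hence $\mathfrak{U}_2$ (a composition of unitaries) are unitary, recovering Proposition~\ref{abstract_matrix_representation:prop:Hilbert_Schmidt}(3). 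I do not anticipate any genuine obstacle here — essentially all of the analytic substance is already contained in Proposition~\ref{abstract_matrix_representation:prop:Hilbert_Schmidt}. The only steps demanding a modicum of care are the bookkeeping that matches the matrix-element map $\hat{f} \mapsto \bigl ( \bscpro{\mathcal{G}_{\alpha}}{\hat{f} \, \mathcal{G}_{\beta}} \bigr )_{\alpha,\beta}$ with $\mathfrak{u}_2 \otimes \mathfrak{u}_2$ under the Hilbert-Schmidt identification, and the (standard) identity $\ran(T T^*) = \ran T$ for an isometry $T$, applied to $T = \mathfrak{u}_2$ and $T = \mathfrak{u}_2 \otimes \mathfrak{u}_2$.
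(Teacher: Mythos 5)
Your approach — identify $\ran\mathfrak{U}_2$ and use that an isometry is a bijective isometry onto its automatically closed range, then specialize to orthonormal bases — is the natural one and matches the paper's intent: the corollary carries no proof precisely because it is meant to be read off from Proposition~\ref{abstract_matrix_representation:prop:Hilbert_Schmidt} together with the definition $\ell^2_{\Hil}(\Gamma) := \mathfrak{u}_2(\Hil)$. The projection bookkeeping ($\mathfrak{u}_2^*\mathfrak{u}_2 = \id$, $P := \mathfrak{u}_2\mathfrak{u}_2^*$ orthogonal projection onto $\ran\mathfrak{u}_2$, $\ran(TT^*) = \ran T$ for an isometry, tensoring projections) is all correct as a statement about $\mathfrak{u}_2 \otimes \mathfrak{u}_2$, and part~(2) is fine.

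The one step that does not hold up under scrutiny is the parenthetical claim that the complex conjugation in the Hilbert--Schmidt identification is ``immaterial for the range''; this is exactly the point where part~(1) has content. Unwinding the identification, for a rank-one $\hat{f} = \sopro{\psi}{\chi}$ one finds $\hat{f}_{\alpha,\beta} = \sscpro{\mathcal{G}_\alpha}{\psi}\,\overline{\sscpro{\mathcal{G}_\beta}{\chi}}$, so $\mathfrak{U}_2$ factors through $\mathfrak{u}_2 \otimes \overline{\mathfrak{u}_2}$ (second factor composed with elementwise complex conjugation on $\ell^2(\Gamma)$), not through $\mathfrak{u}_2 \otimes \mathfrak{u}_2$. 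Your projection argument then delivers $\ran\mathfrak{U}_2 = \ell^2_{\Hil}(\Gamma) \otimes \overline{\ell^2_{\Hil}(\Gamma)}$, and in general $\overline{\ell^2_{\Hil}(\Gamma)} = \ran\overline{P} \neq \ran P = \ell^2_{\Hil}(\Gamma)$ unless the Gram matrix $P_{\alpha\beta} = \sscpro{\mathcal{G}_\alpha}{\mathcal{G}_\beta}$ has real entries; already the Parseval frame $\bigl\{ \tfrac{1}{\sqrt{2}}, \tfrac{\mathrm{i}}{\sqrt{2}} \bigr\}$ of $\C$ gives $\ran P = \C(1,-\mathrm{i}) \neq \C(1,\mathrm{i}) = \ran\overline{P}$, and the magnetic Gabor frame is genuinely complex. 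So a fully rigorous version of your argument should either state the range as $\ell^2_{\Hil}(\Gamma) \otimes \overline{\ell^2_{\Hil}(\Gamma)}$ or add a hypothesis ensuring $\ell^2_{\Hil}(\Gamma)$ is invariant under complex conjugation. To be fair, this imprecision is already present in the paper's own formulation of the corollary (which also writes $\ell^2_{\Hil}(\Gamma) \otimes \ell^2_{\Hil}(\Gamma)$ without a conjugate on the second factor), so you have faithfully reconstructed the intended proof, and in part~(2) the issue evaporates since $\ell^2(\Gamma)$ is trivially conjugation-invariant.
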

%

\subsection{Matrix representation of super operators} 
\label{abstract_matrix_representation:matrix_representation_of_super_operators}
The definition of matrix elements of super operators $\hat{F}$ acting on $\mathfrak{L}^2 \bigl ( \mathcal{B}(\Hil) \bigr )$ is conceptually trivial, but the relation between the matrix representation and the matrix elements is not as easy as that between \eqref{abstract_matrix_representation:eqn:matrix_representation_operator} and \eqref{abstract_matrix_representation:eqn:matrix_element_operator}. 

Our convention is such that also here the super operator is (at least formally) the sum 
\begin{align}
	\hat{F} &= \sum_{\alpha_L , \beta_L , \alpha_R , \beta_R \in \Gamma} \hat{F}_{\alpha_L , \beta_L , \alpha_R , \beta_R} \, \sopro{\mathcal{G}_{\alpha_L}}{\mathcal{G}_{\beta_L}} \otimes  \sopro{\mathcal{G}_{\alpha_R}}{\mathcal{G}_{\beta_R}} 
	, 
	\label{abstract_matrix_representation:eqn:matrix_representation_super_operator}
\end{align}
where product operators $\hat{F} = \hat{f}_L \otimes \hat{f}_R$, $\hat{f}_{L,R} \in \mathcal{B}(\Hil)$, by definition act as 
\begin{align*}
	\hat{f}_L \otimes \hat{f}_R (\hat{g}) := \hat{f}_L \, \hat{g} \, \hat{f}_R
	. 
\end{align*}
Our use of $\otimes$ constitutes an abuse of notation, it does \emph{not} denote the tensor product of two operators, which would act on $\Hil \otimes \Hil$. 

It turns out the correct definition of matrix elements is 
\begin{align}
	\hat{F}_{\alpha_L , \beta_L , \alpha_R , \beta_R} := \trace_{\Hil} \Bigl ( \sopro{\mathcal{G}_{\beta_R}}{\mathcal{G}_{\alpha_L}} \, \hat{F} \, \bigl ( \sopro{\mathcal{G}_{\beta_L}}{\mathcal{G}_{\alpha_R}} \bigr ) \Bigr ) 
	. 
	\label{abstract_matrix_representation:eqn:matrix_element_super_operator}
\end{align}
As long as $\hat{F} \, \sopro{\mathcal{G}_{\beta_L}}{\mathcal{G}_{\alpha_R}}$ yields at least a \emph{bounded} operator, the super operator's matrix element $\hat{F}_{\alpha_L , \beta_L , \alpha_R , \beta_R}$ is well-defined. 

Expression~\eqref{abstract_matrix_representation:eqn:matrix_element_super_operator} can be recast as a scalar product, 
\begin{align}
	\hat{F}_{\alpha_L , \beta_L , \alpha_R , \beta_R} &= \bscpro{\mathcal{G}_{\alpha_L}}{\hat{F} \, \bigl ( \sopro{\mathcal{G}_{\beta_L}}{\mathcal{G}_{\alpha_R}} \bigr ) \, \mathcal{G}_{\beta_R}}
	, 
	\label{abstract_matrix_representation:eqn:matrix_element_super_operator_scalar_product}
\end{align}
which is advantageous for some of the computations. We can verify all of these equations for product super operators $\hat{F} = \hat{f}_L \otimes \hat{f}_R$: 
\begin{align*}
	\hat{F}_{\alpha_L , \beta_L , \alpha_R , \beta_R} &= \trace \Bigl ( \sopro{\mathcal{G}_{\beta_R}}{\mathcal{G}_{\alpha_L}} \, \hat{f}_L \, \sopro{\mathcal{G}_{\beta_L}}{\mathcal{G}_{\alpha_R}} \hat{f}_R \Bigr )
	\\
	&= \bscpro{\mathcal{G}_{\alpha_L}}{\hat{f}_L \, \sopro{\mathcal{G}_{\beta_L}}{\mathcal{G}_{\alpha_R}} \, \hat{f}_R \, \mathcal{G}_{\beta_R}}
	\\
	&= \bscpro{\mathcal{G}_{\alpha_L}}{\hat{f}_L \, \mathcal{G}_{\beta_L}} \, \bscpro{\mathcal{G}_{\alpha_R}}{\hat{f}_R \, \mathcal{G}_{\beta_R}}
	= \hat{f}_{L,\alpha_L,\beta_L} \, \hat{f}_{R,\alpha_R,\beta_R}
\end{align*}

\subsection{Expressing products of super operators and operators in terms of their matrix elements} 
\label{abstract_matrix_representation:operator_products_matrix_products}
When expressing products of (super) operators in terms of their matrix elements, we can \emph{in spirit} just think of those operators as matrix products. The case 
\begin{align*}
	\hat{f} \, \hat{g} \mapsto \left ( \sum_{\gamma \in \Gamma} \hat{f}_{\alpha , \gamma} \, \hat{g}_{\gamma , \beta} \right )_{\alpha , \beta \in \Gamma} 
\end{align*}
was already covered in \cite[Proposition~3.4]{Cornean_Helffer_Purice:boundedness_magnetic_PsiDOs_via_Gabor_frames:2022}. Consequently, we will only deal with the two remaining cases. Here, we need to pay more attention what indices need to be summed over.
%
\begin{lemma}\label{abstract_matrix_representation:lem:product_super_opertor_operator_in_terms_of_matrix_elements}
	Suppose we are in the setting of Assumption~\ref{abstract_matrix_representation:assumption:general_setting}. Let 
	$\hat{F}$ and $\hat{G}$ be elements of $\mathcal{B} \bigl ( \mathfrak{L}^2 \bigl ( \mathcal{B}(\Hil) \bigr ) \bigr )$ and $\hat{g} \in \mathfrak{L}^2 \bigl ( \mathcal{B}(\Hil) \bigr )$. 
	\begin{enumerate}[(1)]
		\item Then we can express $\hat{F} \hat{g}$ in terms of the matrix elements as 
		\begin{align}
			\bigl ( \hat{F} \, \hat{g} \bigr )_{\alpha,\beta} = \sum_{\alpha' , \beta' \in \Gamma} \hat{F}_{\alpha,\beta',\alpha',\beta} \, \hat{g}_{\beta' , \alpha'} 
			. 
			\label{abstract_matrix_representation:eqn:product_super_opertor_operator_in_terms_of_matrix_elements}
		\end{align}
		\item Likewise, the product $\hat{F} \, \hat{G} \in \mathcal{B} \bigl ( \mathfrak{L}^2 \bigl ( \mathcal{B}(\Hil) \bigr ) \bigr )$ has the matrix representation 
		\begin{align}
			\bigl ( \hat{F} \, \hat{G} \bigr )_{\alpha_L,\beta_L,\alpha_R,\beta_R} &= \sum_{\alpha , \beta \in \Gamma} \hat{F}_{\alpha_L,\alpha,\beta,\beta_R} \, \hat{G}_{\alpha,\beta_L,\alpha_R,\beta}
			. 
			\label{abstract_matrix_representation:eqn:product_super_opertor_super_operator_in_terms_of_matrix_elements}
		\end{align}
	\end{enumerate}
\end{lemma}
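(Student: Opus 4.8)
The plan is to unwind both identities directly from the definitions of the matrix elements \eqref{abstract_matrix_representation:eqn:matrix_element_operator} and \eqref{abstract_matrix_representation:eqn:matrix_element_super_operator}, using the Parseval expansion (Proposition on page before, item (2)) to insert resolutions of the identity in the guise of $\sum_{\gamma} \sopro{\mathcal{G}_\gamma}{\mathcal{G}_\gamma}$ wherever convenient. For part (1), I would start from
\begin{align*}
	\bigl ( \hat{F} \hat{g} \bigr )_{\alpha,\beta} = \bscpro{\mathcal{G}_\alpha}{(\hat{F} \hat{g}) \, \mathcal{G}_\beta}
\end{align*}
and then rewrite $\hat{g}$ itself using its own matrix representation \eqref{abstract_matrix_representation:eqn:matrix_representation_operator}, so that $\hat{g} = \sum_{\beta',\alpha'} \hat{g}_{\beta',\alpha'} \sopro{\mathcal{G}_{\beta'}}{\mathcal{G}_{\alpha'}}$. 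Since $\hat{F}$ is linear and, by Proposition~\ref{abstract_matrix_representation:prop:Hilbert_Schmidt}, $\hat g \in \mathfrak{L}^2$ corresponds to an $\ell^2(\Gamma^2)$ sequence so the series converges in Hilbert–Schmidt norm, we may pull $\hat F$ inside the sum and get $\sum_{\beta',\alpha'} \hat g_{\beta',\alpha'} \, \bscpro{\mathcal{G}_\alpha}{\hat F\bigl(\sopro{\mathcal{G}_{\beta'}}{\mathcal{G}_{\alpha'}}\bigr)\mathcal{G}_\beta}$. Comparing the inner scalar product against the definition \eqref{abstract_matrix_representation:eqn:matrix_element_super_operator_scalar_product} with the substitution $(\beta_L,\alpha_R)\mapsto(\beta',\alpha')$ and $(\alpha_L,\beta_R)\mapsto(\alpha,\beta)$ gives exactly $\hat F_{\alpha,\beta',\alpha',\beta}$, which is \eqref{abstract_matrix_representation:eqn:product_super_opertor_operator_in_terms_of_matrix_elements}.

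For part (2), the strategy is to compute the matrix element of the composed super operator $\hat F \hat G$ by feeding it the rank-one operator $\sopro{\mathcal{G}_{\beta_L}}{\mathcal{G}_{\alpha_R}}$ and then applying part (1) in a bootstrap step. Concretely, set $\hat g := \hat G \bigl(\sopro{\mathcal{G}_{\beta_L}}{\mathcal{G}_{\alpha_R}}\bigr)$; this is at least bounded (in fact Hilbert–Schmidt, since the rank-one operator is and $\hat G \in \mathcal{B}(\mathfrak{L}^2)$), so its matrix elements are well-defined and by \eqref{abstract_matrix_representation:eqn:matrix_element_super_operator_scalar_product} they read $\hat g_{\beta',\alpha'} = \bscpro{\mathcal{G}_{\beta'}}{\hat G\bigl(\sopro{\mathcal{G}_{\beta_L}}{\mathcal{G}_{\alpha_R}}\bigr)\mathcal{G}_{\alpha'}} = \hat G_{\beta',\beta_L,\alpha_R,\alpha'}$. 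Then $(\hat F\hat G)_{\alpha_L,\beta_L,\alpha_R,\beta_R}$ is, by \eqref{abstract_matrix_representation:eqn:matrix_element_super_operator_scalar_product} applied to $\hat F\hat G$, equal to $\bigl(\hat F \hat g\bigr)_{\alpha_L,\beta_R}$ — here one has to match indices carefully against the definition — and invoking part (1) yields $\sum_{\alpha,\beta} \hat F_{\alpha_L,\alpha,\beta,\beta_R}\,\hat g_{\beta,\alpha} = \sum_{\alpha,\beta} \hat F_{\alpha_L,\alpha,\beta,\beta_R}\,\hat G_{\beta,\beta_L,\alpha_R,\alpha}$. After renaming the summation variables $(\alpha,\beta)\mapsto(\alpha,\beta)$ appropriately one recovers \eqref{abstract_matrix_representation:eqn:product_super_opertor_super_operator_in_terms_of_matrix_elements}.

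The main obstacle I anticipate is purely bookkeeping: keeping the left/right index placement consistent between the super-operator matrix-element convention \eqref{abstract_matrix_representation:eqn:matrix_element_super_operator}, the rank-one ``sandwich'' $\sopro{\mathcal{G}_{\beta_R}}{\mathcal{G}_{\alpha_L}}\hat F\bigl(\sopro{\mathcal{G}_{\beta_L}}{\mathcal{G}_{\alpha_R}}\bigr)$, and the ordinary matrix-product formula for $\hat f \hat g$ already established in \cite[Proposition~3.4]{Cornean_Helffer_Purice:boundedness_magnetic_PsiDOs_via_Gabor_frames:2022}. The one genuine analytic point to check — rather than merely formal manipulation — is the interchange of $\hat F$ (resp.\ $\hat F\hat G$) with the infinite sums: this is justified because the operators in question are bounded on $\mathfrak{L}^2\bigl(\mathcal{B}(\Hil)\bigr)$ and the expansions of $\hat g$ (resp.\ $\hat G(\sopro{\mathcal{G}_{\beta_L}}{\mathcal{G}_{\alpha_R}})$) converge in the Hilbert–Schmidt norm by Proposition~\ref{abstract_matrix_representation:prop:Hilbert_Schmidt}, so continuity of the scalar product $\bscpro{\mathcal{G}_\alpha}{\,\cdot\,\mathcal{G}_\beta}$ (which is $\mathfrak{L}^2$-continuous, being given by a fixed rank-one operator) lets us pass the limit through. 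Everything else is substitution and renaming of dummy indices.
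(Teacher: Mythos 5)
Your proof of part (1) is essentially the paper's own proof: expand $\hat g$ in its Parseval-frame representation, pull $\hat F$ through the $\mathfrak{L}^2$-convergent series by boundedness, and recognize the inner scalar product as a matrix element of $\hat F$ via \eqref{abstract_matrix_representation:eqn:matrix_element_super_operator_scalar_product}. This matches the paper word for word, and your remark that the functional $\hat h \mapsto \bscpro{\mathcal{G}_\alpha}{\hat h\,\mathcal{G}_\beta} = \scpro{\sopro{\mathcal{G}_\alpha}{\mathcal{G}_\beta}}{\hat h}_{\mathfrak{L}^2}$ is $\mathfrak{L}^2$-continuous correctly pins down why the interchange is legal.

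For part (2) your route is genuinely different, and cleaner. The paper first carries out a formal computation assuming the frame is an orthonormal basis, reads off from the Kronecker deltas which index pairs get contracted, and then argues separately that one can pass to Parseval frames by successively inserting the formal series for $\hat F$ and $\hat G$ into the scalar-product representation. You instead bootstrap off part (1): you set $\hat g := \hat G\bigl(\sopro{\mathcal{G}_{\beta_L}}{\mathcal{G}_{\alpha_R}}\bigr) \in \mathfrak{L}^2\bigl(\mathcal{B}(\Hil)\bigr)$, observe that $(\hat F\hat G)_{\alpha_L,\beta_L,\alpha_R,\beta_R} = \bscpro{\mathcal{G}_{\alpha_L}}{\hat F(\hat g)\,\mathcal{G}_{\beta_R}} = (\hat F\hat g)_{\alpha_L,\beta_R}$, compute $\hat g_{\beta',\alpha'} = \hat G_{\beta',\beta_L,\alpha_R,\alpha'}$, and plug straight into \eqref{abstract_matrix_representation:eqn:product_super_opertor_operator_in_terms_of_matrix_elements}. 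Renaming $(\beta',\alpha')\mapsto(\alpha,\beta)$ gives \eqref{abstract_matrix_representation:eqn:product_super_opertor_super_operator_in_terms_of_matrix_elements} exactly. This avoids the ONB detour and its extension step entirely, never uses any overlap $\scpro{\mathcal{G}_{\gamma}}{\mathcal{G}_{\gamma'}}$, and so is valid for Parseval frames from the outset; the only analytic ingredient is the one you already justified in part (1). What the paper's version buys, in compensation, is a slightly more vivid picture of which indices are ``contracted'' in the super-operator matrix product, which is pedagogically useful even if it requires the extra extension argument. Both proofs are correct.
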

We will give a proof momentarily. A faster, but formal argument goes as follows: let us assume for a moment that $\{ \mathcal{G}_{\alpha} \}_{\alpha \in \Gamma}$ is not just a Parseval frame, but an orthonormal basis. Then we can obtain an equation for $\bigl ( \hat{F} \, \hat{g} \bigr )_{\alpha,\beta}$ by considering 
\begin{align*}
	\sopro{\mathcal{G}_{\alpha_L}}{\mathcal{G}_{\beta_L}} \otimes  \sopro{\mathcal{G}_{\alpha_R}}{\mathcal{G}_{\beta_R}} \; \bigl ( \sopro{\mathcal{G}_{\alpha}}{\mathcal{G}_{\beta}} \bigr ) &= \sopro{\mathcal{G}_{\alpha_L}}{\mathcal{G}_{\beta_L}} \, \sopro{\mathcal{G}_{\alpha}}{\mathcal{G}_{\beta}} \, \sopro{\mathcal{G}_{\alpha_R}}{\mathcal{G}_{\beta_R}} 
	\\
	&= \scpro{\mathcal{G}_{\beta_L}}{\mathcal{G}_{\alpha}} \, \scpro{\mathcal{G}_{\beta}}{\mathcal{G}_{\alpha_R}} \, \sopro{\mathcal{G}_{\alpha_L}}{\mathcal{G}_{\beta_R}} 
	\\
	&= \delta_{\beta_L,\alpha} \, \delta_{\beta,\alpha_R} \, \sopro{\mathcal{G}_{\alpha_L}}{\mathcal{G}_{\beta_R}} 
	. 
\end{align*}
Consequently, we arrive at equation~\eqref{abstract_matrix_representation:eqn:product_super_opertor_operator_in_terms_of_matrix_elements}. 
\begin{proof}
	\begin{enumerate}[(1)]
		\item To extend these arguments to Parseval frames, we start from first principles and write out the matrix element 
		\begin{align}
			\bigl ( \hat{F} \, \hat{g} \bigr )_{\alpha,\beta} &= \bscpro{\mathcal{G}_{\alpha}}{\hat{F} \, \hat{g} \, \mathcal{G}_{\beta}}
			\notag \\
			&= \sum_{\alpha',\beta' \in \Gamma} \hat{g}_{\beta',\alpha'} \; \bscpro{\mathcal{G}_{\alpha}}{\hat{F} \, \bigl ( \sopro{\mathcal{G}_{\beta'}}{\mathcal{G}_{\alpha'}} \bigr ) \, \mathcal{G}_{\beta}}
			. 
			\label{abstract_matrix_representation:eqn:product_super_opertor_operator_in_terms_of_matrix_elements_trace}
		\end{align}
		We recognize that the scalar product gives $\hat{F}_{\alpha,\beta',\alpha',\beta}$ (\cf equation~\eqref{abstract_matrix_representation:eqn:matrix_element_super_operator_scalar_product}). The sums converge by assumption on $\hat{F}$ and $\hat{g}$. 
		\item To simplify our arguments, we will first proceed under the assumption that $\{ \mathcal{G}_{\alpha} \}_{\alpha \in \Gamma}$ is an orthonormal basis. Applying the tensor product of rank-$1$ operators to some suitable bounded operator $\hat{h}$ yields 
		\begin{align*}
			&\sopro{\mathcal{G}_{\alpha_L}}{\mathcal{G}_{\beta_L}} \otimes  \sopro{\mathcal{G}_{\alpha_R}}{\mathcal{G}_{\beta_R}} \; \Bigl ( \sopro{\mathcal{G}_{\alpha_L'}}{\mathcal{G}_{\beta_L'}} \otimes  \sopro{\mathcal{G}_{\alpha_R'}}{\mathcal{G}_{\beta_R'}} \; (\hat{h}) \Bigr ) 
			\\
			&\qquad 
			= \sopro{\mathcal{G}_{\alpha_L}}{\mathcal{G}_{\beta_L}} \, \sopro{\mathcal{G}_{\alpha_L'}}{\mathcal{G}_{\beta_L'}} \, \hat{h} \, \sopro{\mathcal{G}_{\alpha_R'}}{\mathcal{G}_{\beta_R'}} \, \sopro{\mathcal{G}_{\alpha_R}}{\mathcal{G}_{\beta_R}}
			\\
			&\qquad 
			= \bscpro{\mathcal{G}_{\beta_L}}{\mathcal{G}_{\alpha_L'}} \, \bscpro{\mathcal{G}_{\beta_R'}}{\mathcal{G}_{\alpha_R}} \, \bscpro{\mathcal{G}_{\beta_L'}}{\hat{h} \, \mathcal{G}_{\alpha_R'}} \, \sopro{\mathcal{G}_{\alpha_L}}{\mathcal{G}_{\beta_R}}
			\\
			&\qquad 
			= \delta_{\beta_L,\alpha_L'} \, \delta_{\alpha_R,\beta_R'} \, \sopro{\mathcal{G}_{\alpha_L}}{\mathcal{G}_{\beta_L'}} \otimes  \sopro{\mathcal{G}_{\alpha_R'}}{\mathcal{G}_{\beta_R}} \, (\hat{h}) 
			. 
		\end{align*}
		Hence, we need to contract the index pairs $(\beta_L,\alpha_L')$ and $(\alpha_R,\beta_R')$ in the sum. Relabeling indices appropriately then gives us \eqref{abstract_matrix_representation:eqn:product_super_opertor_super_operator_in_terms_of_matrix_elements}. 
		
		The extension to Parseval frames then follows as in (1): we use the scalar product representation~\eqref{abstract_matrix_representation:eqn:matrix_element_super_operator_scalar_product} of the matrix element for the super operator $\hat{F} \, \hat{G}$ and successively plug in the formal series for $\hat{F}$ and $\hat{G}$. Convergence is assured by assumption in the relevant Banach space of super operators. 
	\end{enumerate}
\end{proof}
%
\section{The main technical result: characterization of symbols of super $\Psi$DOs via their matrix elements} 
\label{characterization_symbols_matrix_elements_super_PsiDOs}
The previous section dealt with the abstract setting, here we will connect it to pseudodifferential theory. For our purposes the abstract discrete group $\Gamma$ has to be replaced with the product $\Gamma \times \Gamma^*$ of the lattice and its dual lattice from Section~\ref{setting}. Hence, each index $\alpha$ from Section~\ref{abstract_matrix_representation} becomes a \emph{pair} of indices $(\alpha,\alpha^*)$. 

The matrix elements of a magnetic pseudodifferential operator $\op^A(f)$ are defined as 
\begin{align}
	\op^A(f)_{\alpha,\alpha^*,\beta,\beta^*} : \negmedspace &= \bscpro{\mathcal{G}_{\alpha,\alpha^*}^A}{\op^A(f) \, \mathcal{G}_{\beta,\beta^*}^A} 
	\notag \\
	&= \frac{1}{(2\pi)^d} \int_{\pspace} \dd X \, (\Fourier_{\sigma} f)(X) \; \bscpro{\mathcal{G}_{\alpha,\alpha^*}^A}{w^A(X) \, \mathcal{G}_{\beta,\beta^*}^A} 
	, 
	\label{characterization_symbols_matrix_elements_super_PsiDOs:eqn:matrix_element_magnetic_PsiDO}
\end{align}
%
where $\mathcal{G}^A_{\alpha,\alpha^*}$ is the Parseval frame from \cite{Cornean_Helffer_Purice:boundedness_magnetic_PsiDOs_via_Gabor_frames:2022} (\cf equation~\eqref{setting:eqn:definition_Gabor_frame}). 

Likewise, magnetic pseudodifferential \emph{super} operators $\Op^A(F)$ have twice the number of indices, 
\begin{align}
	\Op^A(F)_{\alpha_L,\alpha_L^*,\beta_L,\beta_L^*,\alpha_R,\alpha_R^*,\beta_R,\beta_R^*} := \trace_{L^2(\R^d)} \Bigl ( \sopro{\mathcal{G}^A_{\beta_R,\beta_R^*}}{\mathcal{G}^A_{\alpha_L,\alpha_L^*}} \, \hat{F} \, \sopro{\mathcal{G}^A_{\beta_L,\beta_L^*}}{\mathcal{G}^A_{\alpha_R,\alpha_R^*}} \Bigr ) 
	, 
	\label{characterization_symbols_matrix_elements_super_PsiDOs:eqn:super_operator_matrix_element}
\end{align}
one set for the left and one set for the right variables. As $\bigl \{ \mathcal{G}^A_{\alpha,\alpha^*} \bigr \}_{(\alpha,\alpha^*) \in \Gamma \times \Gamma^*}$ is a Parseval frame, despite its inherent overcompleteness, it still acts like an orthonormal basis in that we can express 
\begin{align}
	\Op^A(F) = \sum_{(\alpha_{L,R},\alpha_{L,R}^*) , (\beta_{L,R},\beta_{L,R}^*) \in \Gamma \times \Gamma^*} &\Op^A(F)_{\alpha_L,\alpha_L^*,\beta_L,\beta_L^*,\alpha_R,\alpha_R^*,\beta_R,\beta_R^*} 
	\notag \\
	&\, \cdot 
	\sopro{\mathcal{G}^A_{\alpha_L,\alpha_L^*}}{\mathcal{G}^A_{\beta_L,\beta_L^*}} \otimes  \sopro{\mathcal{G}^A_{\alpha_R,\alpha_R^*}}{\mathcal{G}^A_{\beta_R,\beta_R^*}} 
	\label{characterization_symbols_matrix_elements_super_PsiDOs:eqn:expansion_super_operator}
\end{align}
as an infinite linear combination of product operators. 

The elegance of the approach of Cornean, Helffer and Purice is that oscillatory integral techniques are only needed when characterizing magnetic pseudodifferential operators in terms of their matrix elements (\cf \cite[Theorem~3.1]{Cornean_Helffer_Purice:boundedness_magnetic_PsiDOs_via_Gabor_frames:2022}). Proofs of other results, which would ordinarily require oscillatory integral techniques, now translate to questions of convergence of infinite sums over the indices that enumerate the matrix elements. 

The next two results, Theorem~\ref{characterization_symbols_matrix_elements_super_PsiDOs:thm:characterization_Hoermander_class_super_PsiDOs} and Corollary~\ref{characterization_symbols_matrix_elements_super_PsiDOs:cor:characterization_Hoermander_class_super_PsiDOs}, are analogs of  \cite[Theorem~3.1]{Cornean_Helffer_Purice:boundedness_magnetic_PsiDOs_via_Gabor_frames:2022}. Their proofs are essentially identical to that given by Cornean, Helffer and Purice, many of the expressions just need to be “doubled” to account for the presence of left and right variables. 
\begin{theorem}\label{characterization_symbols_matrix_elements_super_PsiDOs:thm:characterization_Hoermander_class_super_PsiDOs}
	Suppose we are given a tempered distribution $F \in \Schwartz'(\Pspace)$ and some $m_L , m_R \in \R$. Then the following two statements are equivalent: 
	\begin{enumerate}[(a)]
		\item $F \in S^{m_L,m_R}_{0,0}(\Pspace)$ is a Hörmander symbol of order $(m_L,m_R)$ and type $(0,0)$. 
		\item For any $(n_L,n_R,n_L^*,n_R^*) \in \N_0^4$ there exists a constant $C_{n_L,n_R,n_L^*,n_R^*}(F,B) > 0$ so that the matrix elements satisfy the bound 
		\begin{align}
			\sup_{(\alpha_L,\alpha_L^*),(\alpha_R,\alpha_R^*),(\beta_L,\beta_L^*),(\beta_R,\beta_R^*) \in \Gamma \times \Gamma^*} &\sexpval{\alpha_L - \beta_L}^{n_L} \, \sexpval{\alpha_L^* - \beta_L^*}^{n_L^*} \, \sexpval{\alpha_R - \beta_R}^{n_R} \, \sexpval{\alpha_R^* - \beta_R^*}^{n_R^*} \,
			\notag \\
			&\cdot 
			\sexpval{\alpha_L^* + \beta_L^*}^{-m_L} \, \sexpval{\alpha_R^* + \beta_R^*}^{-m_R} \, \Op^A(F)_{\alpha_L,\alpha_L^*,\beta_L,\beta_L^*,\alpha_R,\alpha_R^*,\beta_R,\beta_R^*} 
			\notag \\
			&\qquad \qquad 
			\leq C_{n_L,n_R,n_L^*,n_R^*}(F,B) 
			. 
			\label{characterization_matrix_elements_super_symbols:eqn:characterization_double_Hoermander_symbols}
		\end{align}
	\end{enumerate}
\end{theorem}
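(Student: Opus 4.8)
The plan is to reduce the super-operator statement to the already-known scalar statement of Cornean–Helffer–Purice \cite[Theorem~3.1]{Cornean_Helffer_Purice:boundedness_magnetic_PsiDOs_via_Gabor_frames:2022} by an explicit computation of the super-operator matrix elements in terms of ordinary Weyl-system matrix elements. The starting point is to plug the definition \eqref{setting:eqn:super_Weyl_quantization} of $\Op^A(F)$ into the matrix-element formula \eqref{characterization_symbols_matrix_elements_super_PsiDOs:eqn:super_operator_matrix_element}. Since $W^A(\Xbf) \, \hat{g}^A = w^A(X_L) \, \hat{g}^A \, w^A(X_R)$ with $\hat{g}^A = \sopro{\mathcal{G}^A_{\beta_L,\beta_L^*}}{\mathcal{G}^A_{\alpha_R,\alpha_R^*}}$, the trace factorizes: one obtains
\begin{align*}
	\Op^A(F)_{\alpha_L,\alpha_L^*,\beta_L,\beta_L^*,\alpha_R,\alpha_R^*,\beta_R,\beta_R^*} = \frac{1}{(2\pi)^{2d}} \int_{\Pspace} \dd\Xbf \, (\Fourier_\Sigma F)(\Xbf) \, \bscpro{\mathcal{G}^A_{\alpha_L,\alpha_L^*}}{w^A(X_L) \, \mathcal{G}^A_{\beta_L,\beta_L^*}} \, \bscpro{\mathcal{G}^A_{\alpha_R,\alpha_R^*}}{w^A(X_R)^* \, \mathcal{G}^A_{\beta_R,\beta_R^*}}
\end{align*}
(up to taking care of the adjoint and complex conjugation coming from the right factor, which only affects which index plays the role of the ``first'' argument of the inner product). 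The integral kernels appearing are exactly the ones analyzed in the scalar case. A clean way to organize this is to observe that if $F(\Xbf) = f_L(X_L)\, f_R(X_R)$ is a product symbol, then $\Op^A(F) = \op^A(f_L) \otimes \op^A(f_R)$, whose matrix elements we already computed to be $\op^A(f_L)_{\alpha_L,\alpha_L^*,\beta_L,\beta_L^*} \, \op^A(f_R)_{\alpha_R,\alpha_R^*,\beta_R,\beta_R^*}$ — though, crucially, we must \emph{not} rely on writing a general $F$ as such a product, as the introduction warns.

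For the implication (a) $\Rightarrow$ (b), I would run the oscillatory-integral estimate of \cite[Theorem~3.1]{Cornean_Helffer_Purice:boundedness_magnetic_PsiDOs_via_Gabor_frames:2022} essentially verbatim but ``doubled''. The magnetic factors $\Lambda^A(x,\alpha)$ in the Gabor frame and the magnetic Weyl system combine into a phase whose critical-point / non-stationary-phase analysis is governed by the differences $\alpha_{L}-\beta_{L}$, $\alpha_{L}^*-\beta_{L}^*$ (and the right-hand analogs), exactly as in the scalar case; the symbol $\Fourier_\Sigma F$ now depends on both $X_L$ and $X_R$ but, because $\Sigma(\Xbf,\Ybf) = \sigma(X_L,Y_L) + \sigma(X_R,Y_R)$ splits, the symplectic Fourier transform $\Fourier_\Sigma$ and every integration-by-parts step factors into a left piece and a right piece. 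The Hörmander estimates \eqref{setting:eqn:Hoermander_symbol_separate_estimates} defining $S^{m_L,m_R}_{0,0}(\Pspace)$ give precisely the $\sexpval{\xi_L}^{m_L}\sexpval{\xi_R}^{m_R}$ growth needed so that, after the left and right integrations-by-parts, one is left with the weights $\sexpval{\alpha_L^* + \beta_L^*}^{m_L}\sexpval{\alpha_R^* + \beta_R^*}^{m_R}$ and arbitrary polynomial decay in the four index differences, i.e.\ exactly \eqref{characterization_matrix_elements_super_symbols:eqn:characterization_double_Hoermander_symbols}. For the converse (b) $\Rightarrow$ (a), I would invert the relation: reconstruct $F$ from its matrix elements via the expansion \eqref{characterization_symbols_matrix_elements_super_PsiDOs:eqn:expansion_super_operator} and the known formula expressing a symbol in terms of $\op^A$-matrix elements against the Gabor frame, then verify the symbol estimates \eqref{setting:eqn:Hoermander_symbol_separate_estimates} directly; the rapid decay in the index differences guarantees the reconstructing sum converges in $S^{m_L,m_R}_{0,0}(\Pspace)$, and the $\sexpval{\cdot}^{m_L},\sexpval{\cdot}^{m_R}$ weights in (b) produce the correct momentum growth. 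Again the left/right split makes this a ``two copies of the scalar argument glued along a common symbol'' computation.

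The main obstacle I anticipate is bookkeeping rather than a genuinely new idea: keeping the magnetic phase factors $\Lambda^A$, the Weyl-system phases, and the conjugations/adjoints from the right slot all consistent, so that the final weights come out as $\sexpval{\alpha_L^* + \beta_L^*}^{-m_L}\sexpval{\alpha_R^* + \beta_R^*}^{-m_R}$ (note the \emph{sums} of dual-lattice indices, reflecting that the relevant momentum scale is the midpoint) and the decay comes out in the \emph{differences}. A secondary subtlety is justifying that the trace in \eqref{characterization_symbols_matrix_elements_super_PsiDOs:eqn:super_operator_matrix_element} is well-defined and that the manipulations (interchanging trace with the oscillatory integral, factoring the trace of a product of rank-one operators) are legitimate; for $F \in \Schwartz(\Pspace)$ this is immediate from the earlier cited boundedness results, and the general $S^{m_L,m_R}_{0,0}$ case follows by the usual density/approximation argument, approximating $F$ by Schwartz symbols in a weaker topology while controlling the relevant seminorms — this is the step where one must be slightly careful, but it is standard in the pseudodifferential literature and mirrors what Cornean, Helffer and Purice do in the scalar setting. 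Throughout, I would explicitly flag each point where the proof is ``the same as \cite{Cornean_Helffer_Purice:boundedness_magnetic_PsiDOs_via_Gabor_frames:2022} with left and right copies'' rather than reproduce the oscillatory-integral estimates in full.
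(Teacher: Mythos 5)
Your proposal follows the paper's route: the proof is indeed organized as a ``doubled'' version of \cite[Theorem~3.1]{Cornean_Helffer_Purice:boundedness_magnetic_PsiDOs_via_Gabor_frames:2022}, using that $\Sigma$ splits as a sum $\sigma(X_L,Y_L) + \sigma(X_R,Y_R)$, applying the CHP variable changes and integration by parts separately in left and right variables, and for the converse reconstructing $F$ from the truncated expansion~\eqref{characterization_symbols_matrix_elements_super_PsiDOs:eqn:expansion_super_operator} via the magnetic Wigner transform applied slot by slot. One small bookkeeping slip worth fixing: if you start from the trace formula~\eqref{characterization_symbols_matrix_elements_super_PsiDOs:eqn:super_operator_matrix_element} and compute $\trace \bigl ( \sopro{a}{b} \, T \bigr ) = \sscpro{b}{T a}$ carefully, the right-slot factor comes out as $\bscpro{\mathcal{G}^A_{\alpha_R,\alpha_R^*}}{w^A(X_R) \, \mathcal{G}^A_{\beta_R,\beta_R^*}}$ with $w^A(X_R)$, not $w^A(X_R)^*$ — your parenthetical caveat anticipates this, but note the resulting factor is of \emph{exactly} the same form as the left one, so no adjoint, conjugation or index swap is needed, which is what makes the left/right symmetry of the rest of the argument clean.
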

\begin{proof}
	The strategy of the proof is identical to \cite{Cornean_Helffer_Purice:boundedness_magnetic_PsiDOs_via_Gabor_frames:2022}, we will just outline the necessary modifications. 
	
	First of all, the equations for $\Op^A(F)_{\alpha_L,\alpha_L^*,\beta_L,\beta_L^*,\alpha_R,\alpha_R^*,\beta_R,\beta_R^*}$ is essentially a “doubling” of the equation for $\op^A(f)_{\alpha,\alpha^*,\beta,\beta^*}$: indeed, for any combination of multi indices, we can rewrite each matrix element~\eqref{characterization_symbols_matrix_elements_super_PsiDOs:eqn:matrix_element_magnetic_PsiDO} in terms of the symplectic Fourier transform of $f$ and the matrix element of the Weyl system. Writing out the action of the Weyl system and some algebra, we recover \cite[equation~(3.2)]{Cornean_Helffer_Purice:boundedness_magnetic_PsiDOs_via_Gabor_frames:2022}. 
	
	The equivalent expression for the magnetic super Weyl quantization is essentially two copies of \cite[equation~(3.2)]{Cornean_Helffer_Purice:boundedness_magnetic_PsiDOs_via_Gabor_frames:2022}, one for the left and one for the right variables. The reason is that $\Op^A(F)$ is the integral of the product operator $w^A(X_L) \otimes w^A(X_R)$. Hence, from equation~\eqref{characterization_symbols_matrix_elements_super_PsiDOs:eqn:super_operator_matrix_element} and 
	the definition of the magnetic super Weyl quantization~\eqref{setting:eqn:super_Weyl_quantization}, we deduce 
	\begin{align*}
		&\Op^A(F)_{\alpha_L,\alpha_L^*,\beta_L,\beta_L^*,\alpha_R,\alpha_R^*,\beta_R,\beta_R^*} 
		\\
		&\qquad =
		\frac{1}{(2\pi)^{2d}} \int_{\pspace} \dd X_L \int_{\pspace} \dd X_R \, (\Fourier_{\Sigma} F)(X_L,X_R) \; \bscpro{\mathcal{G}_{\alpha_L,\alpha_L^*}^A}{w^A(X_L) \, \mathcal{G}_{\beta_L,\beta_L^*}^A} \; \bscpro{\mathcal{G}_{\alpha_R,\alpha_R^*}^A}{w^A(X_R) \, \mathcal{G}_{\beta_R,\beta_R^*}^A} 
		\notag 
	\end{align*}
	is the double phase space integral of the “expectation values” of left and right Weyl systems and the symplectic Fourier transform of $F$. 
	
	We will re-use the notation from the proof of \cite[Theorem~3.1]{Cornean_Helffer_Purice:boundedness_magnetic_PsiDOs_via_Gabor_frames:2022} whenever possible. 
	\medskip
	
	\noindent
	“(a) $\Longrightarrow$ (b):” Suppose $F \in S^{m_L,m_R}_{0,0}(\Pspace)$ is a Hörmander symbol of order $(m_L,m_R)$ and fix the four non-negative integers $n_{L,R} \in \N_0$ and $n_{L,R}^* \in \N_0$. 
	
	After applying the variable transformations from \cite{Cornean_Helffer_Purice:boundedness_magnetic_PsiDOs_via_Gabor_frames:2022} to \eqref{characterization_symbols_matrix_elements_super_PsiDOs:eqn:super_operator_matrix_element} twice, once for the left and once for the right variables, one essentially obtains two copies of \cite[equation~(3.7)]{Cornean_Helffer_Purice:boundedness_magnetic_PsiDOs_via_Gabor_frames:2022}. The only difference is that we have to replace $\Phi(z + \nicefrac{\mu}{2} , \zeta + \nicefrac{\mu^*}{2})$ with 
	\begin{align*}
		F \bigl ( z_L + \tfrac{1}{2} \mu_L , \zeta_L + \tfrac{1}{4 \pi} \mu_L^* , z_R + \tfrac{1}{2} \mu_R , \zeta_R + \tfrac{1}{4 \pi} \mu_R^* \bigr ) 
		. 
	\end{align*}
	Repeating the arguments from the proof of \cite[Theorem~3.1]{Cornean_Helffer_Purice:boundedness_magnetic_PsiDOs_via_Gabor_frames:2022} and using Definition~\ref{setting:defn:double_Hoermander_symbol_classes} for double Hörmander symbols then gives us an estimate of the matrix element of the form 
	\begin{align*}
		&\babs{\Op^A(F)_{\alpha_L,\alpha_L^*,\beta_L,\beta_L^*,\alpha_R,\alpha_R^*,\beta_R,\beta_R^*}}
		\\
		&\qquad 
		\leq
		C \, \sexpval{\nu_L^*}^{-2 N_L} \, \sexpval{\nu_L}^{2 N_L + 2 M_L - 2 K_L} \, \sexpval{\nu_R^*}^{-2 N_R} \, \sexpval{\nu_R}^{2 N_R + 2 M_R - 2 K_R}
		\\
		&\qquad \quad \; \cdot \, 
		\int_{\R^d} \dd \zeta_L \, \sexpval{\zeta_L}^{- 2 M_L} \, \sexpval{\zeta_L + \tfrac{1}{4 \pi} \mu_L^*}^{m_L} \, \int_{\R^d} \dd \zeta_R \, \sexpval{\zeta_R}^{- 2 M_R} \, \sexpval{\zeta_R + \tfrac{1}{4 \pi} \mu_R^*}^{m_R} 
		, 
	\end{align*}
	%
	where the indices 
	\begin{align*}
		(\mu_{L,R} , \nu_{L,R}) \, &:= \bigl ( \alpha_{L,R} + \beta_{L,R} , \alpha_{L,R} - \beta_{L,R} \bigr )
		, 
		\\
		(\mu_{L,R}^* , \nu_{L,R}^*) \, &:= \bigl ( \alpha_{L,R}^* + \beta_{L,R}^* , \beta_{L,R}^* - \alpha_{L,R}^* \bigr )
		, 
	\end{align*}
	are the sums and differences of left and right multi indices. 
	
	Choosing the integers $N_{L,R}$, $M_{L,R}$ and $K_{L,R}$ large enough then ensures the existence of the oscillatory integral~\eqref{characterization_symbols_matrix_elements_super_PsiDOs:eqn:super_operator_matrix_element}; the least lower bounds for $N_{L,R}$, $M_{L,R}$ and $K_{L,R}$ are those spelled out in \cite{Cornean_Helffer_Purice:boundedness_magnetic_PsiDOs_via_Gabor_frames:2022} and involve $n_{L,R}$ and $n_{L,R}^*$. It turns out we may choose $N_{L,R}$ and $K_{L,R} - N_{L,R} - M_{L,R}$ as large as we like, which establishes at most polynomial growth in the indices $\mu_{L,R}^* = \alpha_{L,R}^* + \beta_{L,R}^*$ with powers $m_{L,R}$ and rapid decay in $\nu_{L,R}$ and $\nu_{L,R}^*$. This shows the estimate from (b) holds true. 
	\medskip
	
	\noindent
	“(b) $\Longrightarrow$ (a):” Suppose for any four non-negative integers $n_{L,R} \in \N_0$ and $n_{L,R}^* \in \N_0$ there exists a constant $C > 0$ so that the estimate from (b) holds true. The idea of the proof is to reconstruct the symbol of the operator as the Wigner transform of the kernel, written as the infinite sum
	\begin{align*}
		\Op^A(F) := \sum_{(\alpha_{L,R},\alpha_{L,R}^*) , (\beta_{L,R},\beta_{L,R}^*) \in \Gamma \times \Gamma^*} &\Op^A(F)_{\alpha_L,\alpha_L^*,\beta_L,\beta_L^*,\alpha_R,\alpha_R^*,\beta_R,\beta_R^*}
		\\
		&\cdot \, 
		\sopro{\mathcal{G}^A_{\alpha_L,\alpha_L^*}}{\mathcal{G}^A_{\beta_L,\beta_L^*}} \otimes \sopro{\mathcal{G}^A_{\alpha_R,\alpha_R^*}}{\mathcal{G}^A_{\beta_R,\beta_R^*}}
		. 
	\end{align*}
	The main point of the proof is to ensure convergence of the above sum in the appropriate sense. 
	
	Hence, we truncate the sum by fixing some $N \in \N_0$ and defining 
	\begin{align*}
		\Op^A(F_N) := \sum_{\sabs{(\alpha_{L,R},\alpha_{L,R}^*)} , \sabs{(\beta_{L,R},\beta_{L,R}^*)} \leq N} &\Op^A(F)_{\alpha_L,\alpha_L^*,\beta_L,\beta_L^*,\alpha_R,\alpha_R^*,\beta_R,\beta_R^*} 
		\\
		&\cdot \, 
		\sopro{\mathcal{G}^A_{\alpha_L,\alpha_L^*}}{\mathcal{G}^A_{\beta_L,\beta_L^*}} \otimes \sopro{\mathcal{G}^A_{\alpha_R,\alpha_R^*}}{\mathcal{G}^A_{\beta_R,\beta_R^*}}
		. 
	\end{align*}
	%
	%
	For product functions $\Op^A \bigl ( f_L \otimes f_R \bigr ) = \op^A(f_L) \otimes \op^A(f_R)$ the magnetic \emph{super} Weyl quantization is the “product” of the two (regular) magnetic Weyl quantizations. Therefore, this relationship carries over to the inverse of the super Weyl quantization, which in this case is two copies of the magnetic Wigner transform applied to left and right variables separately (\cf the displayed equation below equation~(3.9) in \cite{Lee_Lein:magnetic_pseudodifferential_super_operators_basics:2022}). 
	
	The magnetic Wigner transform is a topological vector space isomorphism on $\Schwartz$, and hence, for any $N \in \N$ 
	\begin{align*}
		F_N := \sum_{\sabs{(\alpha_{L,R},\alpha_{L,R}^*)} , \sabs{(\beta_{L,R},\beta_{L,R}^*)} \leq N} &\, \Op^A(F)_{\alpha_L,\alpha_L^*,\beta_L,\beta_L^*,\alpha_R,\alpha_R^*,\beta_R,\beta_R^*} 
		\\
		&\cdot \, (\op^A)^{-1} \bigl ( \sopro{\mathcal{G}^A_{\alpha_L,\alpha_L^*}}{\mathcal{G}^A_{\beta_L,\beta_L^*}} \bigr ) \otimes (\op^A)^{-1} \bigl ( \sopro{\mathcal{G}^A_{\alpha_R,\alpha_R^*}}{\mathcal{G}^A_{\beta_R,\beta_R^*}} \bigr ) 
		\in \Schwartz(\Pspace)
	\end{align*}
	is a Schwartz function. This defines a sequence of Schwartz functions $F_N \rightarrow F \in \Schwartz'(\Pspace)$, which converges in the distributional sense to $F$. In fact, repeating the arguments from \cite{Cornean_Helffer_Purice:boundedness_magnetic_PsiDOs_via_Gabor_frames:2022} twice, once for left and once for right variables yields the claim that 
	\begin{align*}
		F_N \xrightarrow{N \rightarrow \infty} F \in S^{m_L,m_R}_{0,0}(\Pspace) 
	\end{align*}
	converges uniformly on compact subsets to a \emph{Hörmander symbol} of order $(m_L,m_R)$. This finishes the proof. 
\end{proof}
We get an analogous result for standard Hörmander symbols (\cf Definition~\ref{setting:defn:Hoermander_symbol_classes}): 
\begin{corollary}\label{characterization_symbols_matrix_elements_super_PsiDOs:cor:characterization_Hoermander_class_super_PsiDOs}
	Suppose we are given a tempered distribution $F \in \Schwartz'(\Pspace)$ and some $m \in \R$. Then the following two statements are equivalent: 
	\begin{enumerate}[(a)]
		\item $F \in S^m_{0,0}(\Pspace)$ is a Hörmander symbol of order $m$ and type $(0,0)$. 
		\item For any $n , n^* \in \N_0^2$ there exists a constant $C_{n,n^*}(F,B) > 0$ so that the matrix elements satisfy the bound 
		%
		\begin{align}
			\sup_{(\alpha_L,\alpha_L^*),(\alpha_R,\alpha_R^*),(\beta_L,\beta_L^*),(\beta_R,\beta_R^*) \in \Gamma \times \Gamma^*} &\bexpval{(\alpha_L,\alpha_R) - (\beta_L,\beta_R)}^n \, \bexpval{(\alpha_L^*,\alpha_R^*) - (\beta_L^*,\beta_R^*)}^{n^*}
			\notag \\
			&\cdot 
			\bexpval{(\alpha_L^*,\alpha_R^*) + (\beta_L^*,\beta_R^*)}^{-m} \, \Op^A(F)_{\alpha_L,\alpha_L^*,\beta_L,\beta_L^*,\alpha_R,\alpha_R^*,\beta_R,\beta_R^*} 
			\notag \\
			&\qquad \qquad 
			\leq C_{n,n^*}(F,B) 
			. 
			\label{characterization_matrix_elements_super_symbols:eqn:characterization_Hoermander_symbols}
		\end{align}
	\end{enumerate}
\end{corollary}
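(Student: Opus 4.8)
The plan is to reuse the proof of Theorem~\ref{characterization_symbols_matrix_elements_super_PsiDOs:thm:characterization_Hoermander_class_super_PsiDOs} essentially verbatim, the only genuine change being that the product estimate of Definition~\ref{setting:defn:double_Hoermander_symbol_classes} is replaced by the standard Hörmander estimate of Definition~\ref{setting:defn:Hoermander_symbol_classes} applied over the $2d$-dimensional base $\R^{2d}$, in which the single joint Japanese bracket $\sexpval{(\xi_L,\xi_R)}$ controls the momentum growth. In the implication ``(a) $\Rightarrow$ (b)'' one performs the same doubled change of variables as in \cite[Theorem~3.1]{Cornean_Helffer_Purice:boundedness_magnetic_PsiDOs_via_Gabor_frames:2022} and is left estimating the matrix element $\babs{\Op^A(F)_{\alpha_L,\alpha_L^*,\beta_L,\beta_L^*,\alpha_R,\alpha_R^*,\beta_R,\beta_R^*}}$ by
\begin{align*}
	C \, \sexpval{\nu_L^*}^{-2N_L} \sexpval{\nu_L}^{2N_L + 2M_L - 2K_L} \sexpval{\nu_R^*}^{-2N_R} \sexpval{\nu_R}^{2N_R + 2M_R - 2K_R} \int_{\R^d} \!\! \int_{\R^d} \dd\zeta_L \, \dd\zeta_R \, \sexpval{\zeta_L}^{-2M_L} \sexpval{\zeta_R}^{-2M_R} \bexpval{\bigl ( \zeta_L + \tfrac{1}{4\pi}\mu_L^* , \zeta_R + \tfrac{1}{4\pi}\mu_R^* \bigr )}^{m} ,
\end{align*}
with $\mu_{L,R},\nu_{L,R},\mu_{L,R}^*,\nu_{L,R}^*$ the sum and difference indices from the Theorem's proof. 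Peetre's inequality on $\R^{2d}$ splits off $\bexpval{(\mu_L^*,\mu_R^*)}^{m}$ at the cost of a factor $\bexpval{(\zeta_L,\zeta_R)}^{\abs m} \leq \sexpval{\zeta_L}^{\abs m}\sexpval{\zeta_R}^{\abs m}$, which is absorbed by enlarging $M_L,M_R$; enlarging $N_{L,R}$ and $K_{L,R} - N_{L,R} - M_{L,R}$ forces arbitrary polynomial decay in $\nu_L,\nu_R,\nu_L^*,\nu_R^*$, and since rapid decay in these four indices separately is the same as rapid decay in the pairs $\bexpval{(\alpha_L,\alpha_R)-(\beta_L,\beta_R)}$ and $\bexpval{(\alpha_L^*,\alpha_R^*)-(\beta_L^*,\beta_R^*)}$, one arrives exactly at~\eqref{characterization_matrix_elements_super_symbols:eqn:characterization_Hoermander_symbols}. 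The converse ``(b) $\Rightarrow$ (a)'' is the reconstruction argument of \cite{Cornean_Helffer_Purice:boundedness_magnetic_PsiDOs_via_Gabor_frames:2022} unchanged: truncate the expansion~\eqref{characterization_symbols_matrix_elements_super_PsiDOs:eqn:expansion_super_operator}, apply $(\op^A)^{-1}$ separately to the left and right rank-one factors to obtain $F_N \in \Schwartz(\Pspace)$, and read off from~\eqref{characterization_matrix_elements_super_symbols:eqn:characterization_Hoermander_symbols} that $F_N$ and all its derivatives converge uniformly on compact sets with the correct $\sexpval{(\xi_L,\xi_R)}^m$ bound, i.e.\ $F_N \to F$ in $S^m_{0,0}(\Pspace)$.

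For $m \leq 0$ one can in fact avoid rerunning any oscillatory-integral estimate. Since $\sexpval{(\xi_L,\xi_R)} \geq \sexpval{\xi_L}$ and $m \leq 0$, one has $S^m_{0,0}(\Pspace) \subseteq S^{m,0}_{0,0}(\Pspace)$, and symmetrically $S^m_{0,0}(\Pspace) \subseteq S^{0,m}_{0,0}(\Pspace)$; conversely $F \in S^{m,0}_{0,0}(\Pspace) \cap S^{0,m}_{0,0}(\Pspace)$ forces every relevant derivative to be bounded by $C\min\bigl(\sexpval{\xi_L}^m, \sexpval{\xi_R}^m\bigr) = C\max(\sexpval{\xi_L},\sexpval{\xi_R})^m \leq C' \sexpval{(\xi_L,\xi_R)}^m$, so $S^m_{0,0}(\Pspace) = S^{m,0}_{0,0}(\Pspace) \cap S^{0,m}_{0,0}(\Pspace)$ whenever $m \leq 0$. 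Then ``(a) $\Rightarrow$ (b)'' follows by applying Theorem~\ref{characterization_symbols_matrix_elements_super_PsiDOs:thm:characterization_Hoermander_class_super_PsiDOs} with $(m_L,m_R) = (m,0)$ and with $(m_L,m_R) = (0,m)$, taking the minimum of the two resulting matrix-element bounds and using $\max(\sexpval{\mu_L^*},\sexpval{\mu_R^*}) \geq \tfrac{1}{\sqrt 2}\bexpval{(\mu_L^*,\mu_R^*)}$ together with $m\leq 0$; and ``(b) $\Rightarrow$ (a)'' follows by feeding~\eqref{characterization_matrix_elements_super_symbols:eqn:characterization_Hoermander_symbols}, weakened via $\bexpval{(\mu_L^*,\mu_R^*)}^m \leq \sexpval{\mu_L^*}^m$ (resp.\ $\leq \sexpval{\mu_R^*}^m$), into Theorem~\ref{characterization_symbols_matrix_elements_super_PsiDOs:thm:characterization_Hoermander_class_super_PsiDOs} twice and concluding $F \in S^{m,0}_{0,0}(\Pspace) \cap S^{0,m}_{0,0}(\Pspace) = S^m_{0,0}(\Pspace)$.

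The step I expect to be the real obstacle is the range $m > 0$, where this shortcut breaks down: one only has the strict inclusion $S^m_{0,0}(\Pspace) \subsetneq S^{m,m}_{0,0}(\Pspace)$ with no reverse containment, and the matrix-element characterization of $S^{m,m}_{0,0}(\Pspace)$ yields the weaker factor $\sexpval{\mu_L^*}^m\sexpval{\mu_R^*}^m$ rather than $\bexpval{(\mu_L^*,\mu_R^*)}^m$. Hence for $m > 0$ one is forced to carry the joint bracket $\bexpval{(\zeta_L + \tfrac{1}{4\pi}\mu_L^*, \zeta_R + \tfrac{1}{4\pi}\mu_R^*)}^m$ through the estimate directly as in the first paragraph, and the delicate point there is to invoke Peetre's inequality on $\R^{2d}$ with the correct sign so that it is precisely $\bexpval{(\mu_L^*,\mu_R^*)}^{m}$ — and not $\bexpval{(\mu_L^*,\mu_R^*)}^{\abs m}$ — that is separated off, the leftover $\bexpval{(\zeta_L,\zeta_R)}^{\abs m}$ being absorbed harmlessly into the decaying weights in $\zeta_L,\zeta_R$.
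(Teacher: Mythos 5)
Your proof is correct and reaches the same conclusion as the paper, but the implementation of ``(a) $\Rightarrow$ (b)'' differs in a small but real way. The paper replaces the three separate left/right $L$-operators of the Theorem's proof by the \emph{joint} operators
\[
	L_{(z_L,z_R)} := \bexpval{(\nu_L^*,\nu_R^*)}^{-2} \bigl ( 1 - \Delta_{z_L} - \Delta_{z_R} \bigr ),
	\quad
	L_{(\zeta_L,\zeta_R)} := \bexpval{(\nu_L,\nu_R)}^{-2} \bigl ( 1 - \Delta_{\zeta_L} - \Delta_{\zeta_R} \bigr ),
	\quad
	L_{(v_L,v_R)} := \bexpval{(\zeta_L,\zeta_R)}^{-2} \bigl ( 1 - \Delta_{v_L} - \Delta_{v_R} \bigr ),
\]
so that with the abbreviations $z := (z_L,z_R)$, $\zeta := (\zeta_L,\zeta_R)$, etc.\ the one-variable computation of \cite[Theorem~3.1]{Cornean_Helffer_Purice:boundedness_magnetic_PsiDOs_via_Gabor_frames:2022} is reproduced verbatim on $\R^{2d}$; the weights $\bexpval{(\nu_L,\nu_R)}$, $\bexpval{(\nu_L^*,\nu_R^*)}$, $\bexpval{(\zeta_L,\zeta_R)}$ appear jointly from the outset and a single Peetre step on $\R^{2d}$ finishes the estimate. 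You instead keep the integrations by parts factorised as in the Theorem's proof — hence the product weights $\sexpval{\nu_L^*}^{-2N_L}\sexpval{\nu_R^*}^{-2N_R}$, $\sexpval{\zeta_L}^{-2M_L}\sexpval{\zeta_R}^{-2M_R}$ in your displayed bound — and then recouple afterwards via $\bexpval{(\zeta_L,\zeta_R)}^{\abs{m}} \leq \sexpval{\zeta_L}^{\abs m}\sexpval{\zeta_R}^{\abs m}$ and $\sexpval{\nu_L}\sexpval{\nu_R} \geq \bexpval{(\nu_L,\nu_R)}$. That costs a few extra elementary inequalities but is perfectly sound, and your caveat about keeping the exponent $m$ (not $\abs{m}$) on $\bexpval{(\mu_L^*,\mu_R^*)}$ in the Peetre step is precisely the sign one must not fumble. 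Your second paragraph's shortcut for $m \leq 0$ via the identity $S^m_{0,0}(\Pspace) = S^{m,0}_{0,0}(\Pspace) \cap S^{0,m}_{0,0}(\Pspace)$ is a genuinely different route the paper does not take; it reduces the Corollary to two applications of the Theorem and avoids any fresh oscillatory-integral work, at the price of covering only $m \leq 0$ (as you yourself observe). The ``(b) $\Rightarrow$ (a)'' reconstruction argument you sketch is the same as the paper's.
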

\begin{proof}
	The proof is essentially identical to that of Theorem~\ref{characterization_symbols_matrix_elements_super_PsiDOs:thm:characterization_Hoermander_class_super_PsiDOs}, and we content ourselves with outlining the modifications. 
	\medskip
	
	\noindent
	“(a) $\Longrightarrow$ (b):” Compared to the proof of Theorem~\ref{characterization_symbols_matrix_elements_super_PsiDOs:thm:characterization_Hoermander_class_super_PsiDOs}, we will use different $L$ operators in the proof. Namely in the notation of \cite[equation~(3.8)]{Cornean_Helffer_Purice:boundedness_magnetic_PsiDOs_via_Gabor_frames:2022} we apply the three operators 
	\begin{align*}
		L_{(z_L,z_R)} \, &:= \bexpval{(\nu_L^*,\nu_R^*)}^{-2} \, \bigl ( 1 - \Delta_{z_L} - \Delta_{z_R} \bigr ) 
		, 
		\\
		L_{(\zeta_L,\zeta_R)} \, &:= \bexpval{(\nu_L,\nu_R)}^{-2} \, \bigl ( 1 - \Delta_{\zeta_L} - \Delta_{\zeta_R} \bigr ) 
		, 
		\\
		L_{(v_L,v_R)} \, &:= \bexpval{(\zeta_L,\zeta_R)}^{-2} \, \bigl ( 1 - \Delta_{v_L} - \Delta_{v_R} \bigr ) 
		. 
	\end{align*}
	If we introduce notation such as $z := (z_L,z_R)$, $\zeta := (\zeta_L,\zeta_R)$, etc., we symbolically recover the expression below \cite[equation~(3.9)]{Cornean_Helffer_Purice:boundedness_magnetic_PsiDOs_via_Gabor_frames:2022} and can perform all subsequent arguments almost verbatim. This shows the first direction. 
	\medskip
	
	\noindent
	“(b) $\Longrightarrow$ (a):” Similarly to before, we can introduce the joint left/right variables $\nu := (\nu_L,\nu_R)$, $\nu^* := (\nu_L^*,\nu_R^*)$, etc.\ to make the expressions in the proof look almost identical to that in \cite{Cornean_Helffer_Purice:boundedness_magnetic_PsiDOs_via_Gabor_frames:2022}. Because the matrix elements now satisfy \eqref{characterization_matrix_elements_super_symbols:eqn:characterization_Hoermander_symbols} instead of \eqref{characterization_matrix_elements_super_symbols:eqn:characterization_double_Hoermander_symbols}, we recover $\sexpval{\zeta}^m = \bexpval{(\zeta_L,\zeta_R)}^m$ on the right-hand side of our estimate of $\babs{\partial_{z_L}^{a_L} \partial_{z_R}^{a_R} \partial_{\zeta_L}^{\alpha_L} \partial_{\zeta_R}^{\alpha_R} F_N(z_L,\zeta_L,z_R,\zeta_R)}$ with a constant that is uniform in $N$, $(z_L,\zeta_L)$ and $(z_R,\zeta_R)$. Consequently, $F_N \rightarrow F \in S^m_{0,0}(\Pspace)$ converges uniformly on compact subsets to a Hörmander symbol of order $m$ and type $(0,0)$. 
\end{proof}
%
\section{Boundedness criteria for magnetic pseudodifferential super operators} 
\label{boundedness_super_PsiDOs}
We are now in a position to formulate and prove boundedness criteria on the level of matrix elements: we have given characterizations of Hilbert-Schmidt operators in Section~\ref{abstract_matrix_representation} and obtained growth/decay estimates on the matrix elements of magnetic pseudodifferential super operators (Theorem~\ref{characterization_symbols_matrix_elements_super_PsiDOs:thm:characterization_Hoermander_class_super_PsiDOs} and Corollary~\ref{characterization_symbols_matrix_elements_super_PsiDOs:cor:characterization_Hoermander_class_super_PsiDOs}). Combining both then gives very simple proofs of boundedness in the spirit of \cite{Cornean_Helffer_Purice:simple_proof_Beals_criterion_magnetic_PsiDOs:2018,Cornean_Helffer_Purice:boundedness_magnetic_PsiDOs_via_Gabor_frames:2022}.

\subsection{Boundedness of magnetic pseudodifferential superoperators} 
\label{boundedness_super_PsiDOs:super_PsiDOs}
We will need two more auxiliary statements in the proof of boundedness. The first is a version of Schur's test (\cf \cite[Lemma~6.1.7]{Cordero_Rodino:operators_time_frequency:2020}).
\begin{lemma}[Schur's Test]\label{boundedness_super_PsiDOs:lem:Schur}
	Let $\Gamma$ be a countable set and $(K_{\alpha, \beta})_{\alpha, \beta \in \Gamma}$ be a sequence with values in $\C$ indexed by $\Gamma\times\Gamma$. Suppose there is $C>0$ such that
	\begin{align*}
		\sup_{\beta \in \Gamma} \sum_{\alpha \in \Gamma} \abs{K_{\alpha, \beta}} \leq C \quad \text{and} \quad \sup_{\alpha \in \Gamma} \sum_{\beta \in \Gamma} \abs{K_{\alpha, \beta}} \leq C .
	\end{align*}
	Then the map $K$ defined by
	\begin{align*}
		(Kc)_{\alpha} := \sum_{\beta \in \Gamma} K_{\alpha, \beta} \, c_\beta , \qquad (c_\beta)_{\beta \in \Gamma} \in \ell^2(\Gamma)
	\end{align*}
	gives rise to a continuous linear operator from $\ell^2(\Gamma)$ to itself with $\norm{K}_{\mathcal{B} (\ell^2(\Gamma))} \leq C$.
\end{lemma}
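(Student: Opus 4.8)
The plan is to run the classical Schur test via a symmetric application of the Cauchy–Schwarz inequality, splitting each weight $\abs{K_{\alpha,\beta}}$ into the product $\abs{K_{\alpha,\beta}}^{1/2} \cdot \abs{K_{\alpha,\beta}}^{1/2}$ and distributing the two factors between the two entries of the Cauchy–Schwarz pairing. This gives the sharp bound $\norm{K}_{\mathcal{B}(\ell^2(\Gamma))} \leq C$ directly, without passing through an interpolation argument.

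First I would verify that $Kc$ is well defined for each $c = (c_\beta)_{\beta \in \Gamma} \in \ell^2(\Gamma)$, \ie that the series defining $(Kc)_\alpha$ converges absolutely for every fixed $\alpha \in \Gamma$. Applying Cauchy–Schwarz in the $\beta$-variable to the split weights gives
\[
	\sum_{\beta \in \Gamma} \abs{K_{\alpha,\beta}} \, \abs{c_\beta} \leq \Bigl ( \sum_{\beta \in \Gamma} \abs{K_{\alpha,\beta}} \Bigr )^{1/2} \Bigl ( \sum_{\beta \in \Gamma} \abs{K_{\alpha,\beta}} \, \abs{c_\beta}^2 \Bigr )^{1/2} \leq C^{1/2} \Bigl ( \sum_{\beta \in \Gamma} \abs{K_{\alpha,\beta}} \, \abs{c_\beta}^2 \Bigr )^{1/2} ,
\]
where I have used the column-sum bound $\sum_{\beta} \abs{K_{\alpha,\beta}} \leq C$ in the last step; the remaining sum is finite because it is dominated by $\snorm{c}_{\ell^\infty}^2 \sum_{\beta} \abs{K_{\alpha,\beta}} \leq C \, \snorm{c}_{\ell^2(\Gamma)}^2$. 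Hence $(Kc)_\alpha$ is a well-defined complex number with $\abs{(Kc)_\alpha}^2 \leq C \sum_{\beta \in \Gamma} \abs{K_{\alpha,\beta}} \, \abs{c_\beta}^2$.

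It then remains to sum this last estimate over $\alpha \in \Gamma$ and interchange the order of summation — legitimate by Tonelli's theorem, since all terms are nonnegative — using the row-sum bound $\sum_{\alpha} \abs{K_{\alpha,\beta}} \leq C$:
\[
	\sum_{\alpha \in \Gamma} \abs{(Kc)_\alpha}^2 \leq C \sum_{\alpha \in \Gamma} \sum_{\beta \in \Gamma} \abs{K_{\alpha,\beta}} \, \abs{c_\beta}^2 = C \sum_{\beta \in \Gamma} \abs{c_\beta}^2 \sum_{\alpha \in \Gamma} \abs{K_{\alpha,\beta}} \leq C^2 \, \snorm{c}_{\ell^2(\Gamma)}^2 .
\]
Taking square roots shows $Kc \in \ell^2(\Gamma)$ with $\snorm{Kc}_{\ell^2(\Gamma)} \leq C \, \snorm{c}_{\ell^2(\Gamma)}$, which is precisely the assertion. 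I do not anticipate any genuine obstacle: the argument is elementary, and the only points needing a word of care — the absolute convergence of the series defining $Kc$ and the Tonelli interchange of the nonnegative double sum — have been addressed above. In particular the countability of $\Gamma$ enters only through the interpretation of $\sum_{\Gamma}$ as an unordered sum of nonnegative reals.
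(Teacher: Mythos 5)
Your proof is correct and is the classical Schur test argument via the weight-splitting trick $\abs{K_{\alpha,\beta}} = \abs{K_{\alpha,\beta}}^{1/2}\,\abs{K_{\alpha,\beta}}^{1/2}$ and two applications of the row/column bounds. The paper itself offers no proof of this lemma at all: it simply cites \cite[Lemma~6.1.7]{Cordero_Rodino:operators_time_frequency:2020}, and your argument matches the standard one found there, so there is nothing to compare substantively. One cosmetic slip: you label $\sum_{\beta}\abs{K_{\alpha,\beta}} \leq C$ (sum over the second index with the first fixed) a \emph{column}-sum bound and $\sum_{\alpha}\abs{K_{\alpha,\beta}} \leq C$ a \emph{row}-sum bound; with the usual matrix convention where the first index labels rows these names are swapped, but the two hypotheses are used in the right places, so the argument is unaffected.
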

The second is a refinement of Lemma~\ref{abstract_matrix_representation:lem:product_super_opertor_operator_in_terms_of_matrix_elements}~(1) with more explicit control over the convergence of the sums. As mentioned in Subsection~\ref{setting:magnetic_PsiDOs} the magnetic Weyl quantization $\op^A$ yields a unitary map from $L^2(\pspace)$ to $\mathfrak{L}^2 \bigl ( \mathcal{B} \bigl ( L^2(\R^d) \bigr ) \bigr )$. Therefore, it follows that $\op^A \bigl ( \Schwartz(\pspace) \bigr ) := \bigl \{ \op^A(f) \; \vert \; f \in \Schwartz(\pspace) \bigr \}$ is dense in $\mathfrak{L}^2 \bigl ( \mathcal{B} \bigl ( L^2(\R^d) \bigr ) \bigr )$ since $\Schwartz(\pspace)$ is a dense subspace of $L^2(\pspace)$. A linear (super) operator $\hat{F} : \op^A \bigl ( \Schwartz(\pspace) \bigr ) \longrightarrow \mathfrak{L}^2 \bigl ( \mathcal{B} \bigl ( L^2(\R^d) \bigr ) \bigr )$ can be regarded as a densely defined unbounded operator in $\mathfrak{L}^2 \bigl ( \mathcal{B} \bigl ( L^2(\R^d) \bigr ) \bigr )$ with domain $\op^A \bigl (\Schwartz(\pspace) \bigr )$. The scalar product on the space of Hilbert-Schmidt operators gives rise to the adjoint $\hat{F}^* : \mathscr{D}(\hat{F}^*) \longrightarrow \mathfrak{L}^2 \bigl ( \mathcal{B} \bigl ( L^2(\R^d) \bigr ) \bigr )$ of $\hat{F}$ by letting
\begin{align*}
	{\scpro{\hat{F}^* \, \hat{f}}{\hat{g}}}_{\mathfrak{L}^2(\mathcal{B}(L^2(\R^d)))} = \scpro{\hat{f}}{\hat{F} \, \hat{g}}_{\mathfrak{L}^2 \mathcal{B} ( L^2(\R^d) ) )} \qquad \forall \hat{f}\in\mathscr{D}(\hat{F}^*) 
	&&
	\forall \hat{g} \in \mathfrak{L}^2 \bigl ( \mathcal{B} \bigl ( L^2(\R^d) \bigr ) \bigr )
	,
\end{align*}
where as usual the domain of the adjoint is defined as 
\begin{align*}
	\mathscr{D}(\hat{F}^*) := \Bigl \{ \hat{f}\in\mathfrak{L}^2 \bigl ( \mathcal{B} ( L^2(\R^d) ) \bigr ) \; \; \big \vert \; \; &\babs{{\scpro{\hat{f}}{\hat{F} \, \hat{g}}}_{\mathfrak{L}^2 (\mathcal{B} ( L^2(\R^d) ) )}} \leq C(\hat{F},\hat{f}) \, \snorm{\hat{g}}_{\mathfrak{L}^2(\mathcal{B}(L^2(\R^d)))} \,\, 
	\Bigr . 
	\\
	&\Bigl . 
	\forall \hat{g}\in\mathfrak{L}^2 \bigl ( \mathcal{B} \bigl ( L^2(\R^d) \bigr ) \bigr ) \Bigr \} 
	.
\end{align*}
The following result is the super operator analog of \cite[Proposition~2.4]{Cornean_Helffer_Purice:boundedness_magnetic_PsiDOs_via_Gabor_frames:2022}.
%
\begin{proposition} \label{boundedness_super_PsiDOs:prop:superoperator_image_expansion}
	Suppose that $\hat{F}^A : \op^A \bigl ( \Schwartz(\pspace) \bigr ) \longrightarrow \mathfrak{L}^2 \bigl ( \mathcal{B} \bigl ( L^2(\R^d) \bigr ) \bigr )$ is a densely defined unbounded super operator as above and $\op^A \bigl ( \Schwartz(\pspace) \bigr ) \subset \mathscr{D}(\hat{F}^*)$. Let $g \in \Schwartz(\pspace)$ and $\hat{g}^A = \op^A(g)$ its magnetic Weyl quantization. Then we can express the operator in terms of the matrix elements 
	\begin{align*}
		\hat{F}^A \, \hat{g}^A = \sum_{(\alpha , \alpha^*) , (\beta , \beta^*) \in \Gamma \times \Gamma^*} \left ( \sum_{(\alpha' , {\alpha'}^*) , (\beta' , {\beta'}^*) \in \Gamma \times \Gamma^*} \hat{F}^A_{\alpha , \alpha^* , \beta'  , {\beta'}^* , \alpha' , {\alpha'}^* , \beta , \beta^*} \, \hat{g}^A_{\beta' , {\beta'}^* , \alpha' , {\alpha'}^*} \right ) \, \sopro{\mathcal{G}^A_{\alpha , \alpha^*}}{\mathcal{G}^A_{\beta , \beta^*}}
		,
	\end{align*}
	where the inner sum with respect to $(\alpha' , {\alpha'}^*) , (\beta' , {\beta'}^*) \in \Gamma \times \Gamma^*$ converges in the sense of $\ell^1 \bigl ((\Gamma \times \Gamma^*)^2 \bigr )$ and the outer sum with respect to $(\alpha , \alpha^*) , (\beta , \beta^*) \in \Gamma \times \Gamma^*$ converges in $\mathfrak{L}^2 \bigl ( \mathcal{B} \bigl ( L^2(\R^d) \bigr ) \bigr )$.
\end{proposition}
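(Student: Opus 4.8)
The plan is to carry out the matrix multiplication of Lemma~\ref{abstract_matrix_representation:lem:product_super_opertor_operator_in_terms_of_matrix_elements}~(1) in a way that is legitimate for the \emph{unbounded} operator $\hat{F}^A$. Since no topology has been placed on the domain $\op^A \bigl ( \Schwartz(\pspace) \bigr )$, we may \emph{not} expand $\hat{g}^A$ in the Gabor frame and let $\hat{F}^A$ act term by term; every infinite summation must instead be performed inside the Hilbert space $\mathfrak{L}^2 \bigl ( \mathcal{B} \bigl ( L^2(\R^d) \bigr ) \bigr )$, where we are free to use that the coordinate map $\mathfrak{U}_2$ of Proposition~\ref{abstract_matrix_representation:prop:Hilbert_Schmidt} is an isometry onto its range in $\ell^2$. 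The operator $\hat{F}^A$ itself will only ever be moved by the adjoint identity $\bscpro{\hat{f}}{\hat{F}^A \, \hat{g}}_{\mathfrak{L}^2} = \bscpro{\hat{F}^* \hat{f}}{\hat{g}}_{\mathfrak{L}^2}$, valid for $\hat{f} \in \mathscr{D}(\hat{F}^*)$ and $\hat{g} \in \op^A \bigl ( \Schwartz(\pspace) \bigr )$.

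First I would dispose of the bookkeeping. The Gabor frame vectors $\mathcal{G}^A_{\mu,\mu^*}$ are Schwartz functions (cf.~\eqref{setting:eqn:definition_Gabor_frame}), so each rank-one operator $\sopro{\mathcal{G}^A_{\mu,\mu^*}}{\mathcal{G}^A_{\nu,\nu^*}}$ has a Schwartz integral kernel and hence, by the mapping properties of magnetic Weyl calculus recalled in Section~\ref{setting:magnetic_PsiDOs}, lies in $\op^A \bigl ( \Schwartz(\pspace) \bigr )$; therefore it belongs simultaneously to the domain of $\hat{F}^A$ and, by hypothesis, to $\mathscr{D}(\hat{F}^*)$. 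In particular the matrix elements $\hat{F}^A_{\alpha , \alpha^* , \beta'  , {\beta'}^* , \alpha' , {\alpha'}^* , \beta , \beta^*}$ in \eqref{characterization_symbols_matrix_elements_super_PsiDOs:eqn:super_operator_matrix_element} are well defined. Rewriting \eqref{characterization_symbols_matrix_elements_super_PsiDOs:eqn:super_operator_matrix_element} as a Hilbert–Schmidt scalar product as in \eqref{abstract_matrix_representation:eqn:matrix_element_super_operator_scalar_product}, moving $\hat{F}^A$ onto the other slot by the adjoint identity, and recognizing the resulting trace as an operator matrix element, one obtains, for each fixed $(\alpha , \alpha^*) , (\beta , \beta^*) \in \Gamma \times \Gamma^*$, the identity
\begin{align*}
	\hat{F}^A_{\alpha , \alpha^* , \beta'  , {\beta'}^* , \alpha' , {\alpha'}^* , \beta , \beta^*} = \overline{ \bigl ( \hat{h}_{\alpha , \alpha^* , \beta , \beta^*} \bigr )_{\beta' , {\beta'}^* , \alpha' , {\alpha'}^*} }
	\, ,
	\qquad
	\hat{h}_{\alpha , \alpha^* , \beta , \beta^*} := \hat{F}^* \bigl ( \sopro{\mathcal{G}^A_{\alpha , \alpha^*}}{\mathcal{G}^A_{\beta , \beta^*}} \bigr ) \in \mathfrak{L}^2 \bigl ( \mathcal{B} \bigl ( L^2(\R^d) \bigr ) \bigr )
	\, ,
\end{align*}
where $\bigl ( \, \cdot \, \bigr )_{\mu , \mu^* , \nu , \nu^*}$ now denotes the \emph{operator} matrix element \eqref{abstract_matrix_representation:eqn:matrix_element_operator}. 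The point of this rearrangement is that the whole inner slice $\bigl ( \hat{F}^A_{\alpha , \alpha^* , \beta'  , {\beta'}^* , \alpha' , {\alpha'}^* , \beta , \beta^*} \bigr )_{(\beta' , {\beta'}^*) , (\alpha' , {\alpha'}^*)}$ is the complex conjugate of the image under $\mathfrak{U}_2$ of the \emph{single, fixed} Hilbert–Schmidt operator $\hat{h}_{\alpha , \alpha^* , \beta , \beta^*}$, hence an $\ell^2$ family of $\ell^2$-norm $\bnorm{\hat{h}_{\alpha , \alpha^* , \beta , \beta^*}}_{\mathfrak{L}^2}$.

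Consequently, for fixed $(\alpha , \alpha^*) , (\beta , \beta^*)$, the inner sum is the $\ell^2$ scalar product of $\mathfrak{U}_2 \hat{h}_{\alpha , \alpha^* , \beta , \beta^*}$ and $\mathfrak{U}_2 \hat{g}^A$; Cauchy–Schwarz immediately gives that it converges absolutely — i.e.\ in $\ell^1 \bigl ( (\Gamma \times \Gamma^*)^2 \bigr )$ — with sum bounded by $\bnorm{\hat{h}_{\alpha , \alpha^* , \beta , \beta^*}}_{\mathfrak{L}^2} \, \bnorm{\hat{g}^A}_{\mathfrak{L}^2}$, and that its value is $\bscpro{\hat{h}_{\alpha , \alpha^* , \beta , \beta^*}}{\hat{g}^A}_{\mathfrak{L}^2}$ (here I use that $\mathfrak{U}_2$ is a linear isometry, Proposition~\ref{abstract_matrix_representation:prop:Hilbert_Schmidt}, and the identification of the Hilbert–Schmidt scalar product with the $\ell^2$ scalar product of the matrix elements). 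Applying the adjoint identity once more — now with $\hat{g}^A = \op^A(g) \in \op^A \bigl ( \Schwartz(\pspace) \bigr ) = \mathscr{D}(\hat{F}^A)$ — together with the trace representation of the Hilbert–Schmidt scalar product shows
\begin{align*}
	\bscpro{\hat{h}_{\alpha , \alpha^* , \beta , \beta^*}}{\hat{g}^A}_{\mathfrak{L}^2} = \bscpro{\sopro{\mathcal{G}^A_{\alpha , \alpha^*}}{\mathcal{G}^A_{\beta , \beta^*}}}{\hat{F}^A \hat{g}^A}_{\mathfrak{L}^2} = \bscpro{\mathcal{G}^A_{\alpha , \alpha^*}}{\hat{F}^A \hat{g}^A \, \mathcal{G}^A_{\beta , \beta^*}} = \bigl ( \hat{F}^A \hat{g}^A \bigr )_{\alpha , \alpha^* , \beta , \beta^*}
	\, .
\end{align*}
Finally, $\hat{F}^A \hat{g}^A$ is a genuine Hilbert–Schmidt operator, so by Proposition~\ref{abstract_matrix_representation:prop:Hilbert_Schmidt} (and Corollary~\ref{abstract_matrix_representation:cor:Hilbert_Schmidt}) it equals the sum of its coefficients times the tensor Parseval frame of rank-one operators $\sopro{\mathcal{G}^A_{\alpha , \alpha^*}}{\mathcal{G}^A_{\beta , \beta^*}}$, the series converging in $\mathfrak{L}^2 \bigl ( \mathcal{B} \bigl ( L^2(\R^d) \bigr ) \bigr )$. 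Substituting the inner sum computed above for the coefficient $\bigl ( \hat{F}^A \hat{g}^A \bigr )_{\alpha , \alpha^* , \beta , \beta^*}$ then yields exactly the asserted double-series identity, with the inner sum convergent in $\ell^1$ and the outer one in $\mathfrak{L}^2$.

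The step I expect to be the crux is precisely the handling of the unboundedness of $\hat{F}^A$: one is not allowed to expand $\hat{g}^A$ first and let $\hat{F}^A$ act termwise, and the remedy is to transport $\hat{F}^A$, via the adjoint, onto a \emph{single} fixed rank-one operator before summing any series — which is exactly where the hypothesis $\op^A \bigl ( \Schwartz(\pspace) \bigr ) \subset \mathscr{D}(\hat{F}^*)$ is used. Everything else is routine: the verification that rank-one operators built from the Gabor frame lie in $\op^A \bigl ( \Schwartz(\pspace) \bigr )$, and the standard identities relating traces, Hilbert–Schmidt scalar products, and matrix elements.
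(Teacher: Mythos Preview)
Your proof is correct and follows essentially the same route as the paper's: expand $\hat{F}^A \hat{g}^A$ in the tensor Parseval frame, move $\hat{F}^A$ onto the rank-one operator via the adjoint identity (this is where the hypothesis $\op^A \bigl ( \Schwartz(\pspace) \bigr ) \subset \mathscr{D}(\hat{F}^*)$ enters), expand $\hat{g}^A$ in the frame to produce the inner sum, and then undo the adjoint to recognize the super-operator matrix elements. Your version is in fact slightly more explicit than the paper's in one respect: you spell out the $\ell^1$ convergence of the inner sum via Cauchy--Schwarz applied to $\mathfrak{U}_2 \hat{h}_{\alpha,\alpha^*,\beta,\beta^*}$ and $\mathfrak{U}_2 \hat{g}^A$, whereas the paper's proof leaves this implicit.
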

%
%
\begin{proof}
	Using \eqref{abstract_matrix_representation:eqn:matrix_representation_operator} we can expand the application of the super operator $\hat{F}^A$ on the Hilbert-Schmidt operator $\hat{g}^A$, 
	\begin{align}
		\hat{F}^A \, \hat{g}^A &= \sum_{(\alpha , \alpha^*) , (\beta , \beta^*) \in \Gamma \times \Gamma^*} {\bscpro{ \sopro{\mathcal{G}^A_{\alpha , \alpha^*}}{\mathcal{G}^A_{\beta , \beta^*}} \, }{ \, \hat{F}^A \, \hat{g}^A}}_{\mathfrak{L}^2(\mathcal{B}(L^2(\R^d)))} \, \sopro{\mathcal{G}^A_{\alpha , \alpha^*}}{\mathcal{G}^A_{\beta , \beta^*}} 
		\notag 
		\\
		&= \sum_{(\alpha , \alpha^*) , (\beta , \beta^*) \in \Gamma \times \Gamma^*} {\Bscpro{ \hat{F}^{A \, \ast} \bigl ( \sopro{\mathcal{G}^A_{\alpha , \alpha^*} }{\mathcal{G}^A_{\beta , \beta^*}} \bigr ) \, }{ \, \hat{g}^A}}_{\mathfrak{L}^2(\mathcal{B}(L^2(\R^d)))} \, \sopro{\mathcal{G}^A_{\alpha , \alpha^*}}{\mathcal{G}^A_{\beta , \beta^*}} 
		,
	\label{abstract_matrix_representation:eqn:image_of_operator_by_superoperator_expansion}
	\end{align}
	where the sum converges in $\mathfrak{L}^2 \bigl ( \mathcal{B} \bigl ( L^2(\R^d) \bigr ) \bigr )$. Note that by using \eqref{abstract_matrix_representation:eqn:matrix_representation_operator} again we also obtain
	\begin{align*}
		\hat{g}^A
		&= \sum_{(\alpha' , {\alpha'}^*) , (\beta' , {\beta'}^*) \in \Gamma \times \Gamma^*} \scpro{\mathcal{G}^A_{\beta' , {\beta'}^*} \, }{ \, \hat{g}^A \, \mathcal{G}^A_{\alpha' , {\alpha'}^*}} \, 
		\sopro{\mathcal{G}^A_{\beta' , {\beta'}^*}}{\mathcal{G}^A_{\alpha' , {\alpha'}^*}} 
		\\
		&= \sum_{(\alpha' , {\alpha'}^*) , (\beta' , {\beta'}^*) \in \Gamma \times \Gamma^*} \hat{g}^A_{\beta' , {\beta'}^* , \alpha' , {\alpha'}^*} \, \sopro{\mathcal{G}^A_{\beta' , {\beta'}^*}}{\mathcal{G}^A_{\alpha' , {\alpha'}^*}} 
		.
	\end{align*}
	By plugging the expansion of $\hat{g}^A$ back into \eqref{abstract_matrix_representation:eqn:image_of_operator_by_superoperator_expansion} we get
	\begin{align*}
		&{\Bscpro{ \hat{F}^{A \, \ast} \bigl ( \sopro{\mathcal{G}^A_{\alpha , \alpha^*} }{\mathcal{G}^A_{\beta , \beta^*}} \bigr ) \, }{ \, \hat{g}^A}}_{\mathfrak{L}^2(\mathcal{B}(L^2(\R^d)))} 
		\\
		&\qquad 
		= \sum_{(\alpha' , {\alpha'}^*) , (\beta' , {\beta'}^*) \in \Gamma \times \Gamma^*} \hat{g}^A_{\beta' , {\beta'}^* , \alpha' , {\alpha'}^*} \, {\Bscpro{ \hat{F}^{A \, \ast} \bigl ( \sopro{\mathcal{G}^A_{\alpha , \alpha^*} }{\mathcal{G}^A_{\beta , \beta^*}} \bigr ) \, }{ \, \sopro{\mathcal{G}^A_{\beta' , {\beta'}^*}}{\mathcal{G}^A_{\alpha' , {\alpha'}^*}}}}_{\mathfrak{L}^2(\mathcal{B}(L^2(\R^d)))} 
		\\
		&\qquad 
		= \sum_{(\alpha' , {\alpha'}^*) , (\beta' , {\beta'}^*) \in \Gamma \times \Gamma^*} \hat{g}^A_{\beta' , {\beta'}^* , \alpha' , {\alpha'}^*} \, {\Bscpro{ \sopro{\mathcal{G}^A_{\alpha , \alpha^*} }{\mathcal{G}^A_{\beta , \beta^*}} \, }{ \, \hat{F}^A \, \sopro{\mathcal{G}^A_{\beta' , {\beta'}^*}}{\mathcal{G}^A_{\alpha' , {\alpha'}^*}}}}_{\mathfrak{L}^2(\mathcal{B}(L^2(\R^d)))} 
		\\
		&\qquad 
		= \sum_{(\alpha' , {\alpha'}^*) , (\beta' , {\beta'}^*) \in \Gamma \times \Gamma^*} \hat{F}^A_{\alpha , \alpha^* , \beta' , {\beta'}^* , \alpha' , {\alpha'}^* , \beta , \beta^*} \, \hat{g}^A_{\beta' , {\beta'}^* , \alpha' , {\alpha'}^*} 
		.
	\end{align*}
	This completes the proof.
\end{proof}
Now we can furnish the proof of our main theorem. 
\begin{theorem}\label{boundedness_super_PsiDOs:thm:boundedness_super_PsiDOs}
	Assume the function $F \in S^m_{\rho,0}(\Pspace)$ is a Hörmander symbol of order $m \leq 0$ with $0 \leq \rho \leq 1$ and. Then $\Op^A(F)$ defines a bounded super operator $\Op^A(F) \in \mathcal{B} \bigl ( \mathfrak{L}^2 \bigl ( \mathcal{B} \bigl ( L^2(\R^d) \bigr ) \bigr ) \bigr )$ on the Hilbert space of Hilbert-Schmidt operators. 
\end{theorem}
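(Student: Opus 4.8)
The plan is to identify $\mathfrak{L}^2 \bigl ( \mathcal{B} \bigl ( L^2(\R^d) \bigr ) \bigr )$ with a space of square-summable matrices, realise $\Op^A(F)$ on that space as an infinite matrix assembled from its matrix elements, and bound that matrix by Schur's test. First I would note that $S^m_{\rho,0}(\Pspace) \subseteq S^m_{0,0}(\Pspace)$ for every $m \in \R$ and $0 \leq \rho \leq 1$, since with $\delta = 0$ the extra factor $\sexpval{(\xi_L,\xi_R)}^{-\abs{\alpha}\rho}$ only tightens the symbol estimates. Hence Corollary~\ref{characterization_symbols_matrix_elements_super_PsiDOs:cor:characterization_Hoermander_class_super_PsiDOs} applies: for every $n,n^* \in \N_0$ there is $C_{n,n^*}(F,B) > 0$ with
\[
	\babs{\Op^A(F)_{\alpha_L,\alpha_L^*,\beta_L,\beta_L^*,\alpha_R,\alpha_R^*,\beta_R,\beta_R^*}} \leq C_{n,n^*}(F,B) \, \bexpval{(\alpha_L,\alpha_R) - (\beta_L,\beta_R)}^{-n} \, \bexpval{(\alpha_L^*,\alpha_R^*) - (\beta_L^*,\beta_R^*)}^{-n^*} \, \bexpval{(\alpha_L^*,\alpha_R^*) + (\beta_L^*,\beta_R^*)}^{m} ,
\]
and because $m \leq 0$ the last bracket is $\leq 1$ and will be discarded.

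Next I would set up the matrix picture. By Proposition~\ref{abstract_matrix_representation:prop:Hilbert_Schmidt}, applied to the Parseval frame $\bigl \{ \mathcal{G}^A_{\alpha,\alpha^*} \bigr \}$ indexed by $\Gamma \times \Gamma^*$, the map $\mathfrak{U}_2 : \hat g \mapsto \bigl ( \hat g^A_{\alpha,\alpha^*,\beta,\beta^*} \bigr )$ is an isometry from $\mathfrak{L}^2 \bigl ( \mathcal{B} \bigl ( L^2(\R^d) \bigr ) \bigr )$ into $\ell^2 \bigl ( (\Gamma \times \Gamma^*)^2 \bigr )$. Since $\op^A \bigl ( \Schwartz(\pspace) \bigr )$ is dense in $\mathfrak{L}^2 \bigl ( \mathcal{B} \bigl ( L^2(\R^d) \bigr ) \bigr )$, it is enough to estimate $\Op^A(F)$ on it. On this dense subspace, Proposition~\ref{boundedness_super_PsiDOs:prop:superoperator_image_expansion} applies — its hypotheses hold because, by Proposition~\ref{setting:prop:Hoermander_double_symbol_Schwartz_symbol_semi_super_product}, both $\Op^A(F)$ and its Hilbert-Schmidt adjoint $\Op^A(\overline{F})$ (with $\overline{F} \in S^m_{\rho,0}(\Pspace)$) send $\op^A \bigl ( \Schwartz(\pspace) \bigr )$ into itself — and it shows that $\Op^A(F)$ acts on matrix elements as the infinite matrix with kernel
\[
	K_{(\alpha,\alpha^*,\beta,\beta^*),\,(\beta',{\beta'}^*,\alpha',{\alpha'}^*)} := \Op^A(F)_{\alpha,\alpha^*,\beta',{\beta'}^*,\alpha',{\alpha'}^*,\beta,\beta^*} ,
\]
the inner sum converging absolutely in $\ell^1 \bigl ( (\Gamma \times \Gamma^*)^2 \bigr )$.

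Then I would apply Schur's test, Lemma~\ref{boundedness_super_PsiDOs:lem:Schur}, on $\ell^2 \bigl ( (\Gamma \times \Gamma^*)^2 \bigr )$. Dropping the factor $\sexpval{\cdot}^m \leq 1$, the bound above gives $\babs{K_{(\alpha,\alpha^*,\beta,\beta^*),(\beta',{\beta'}^*,\alpha',{\alpha'}^*)}} \leq C_{n,n^*}(F,B) \, \bexpval{(\alpha - \beta', \alpha' - \beta)}^{-n} \, \bexpval{(\alpha^* - {\beta'}^*, {\alpha'}^* - \beta^*)}^{-n^*}$, which depends only on the four $\Z^d$-valued differences $\alpha - \beta'$, $\alpha' - \beta$, $\alpha^* - {\beta'}^*$, ${\alpha'}^* - \beta^*$. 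Summing over the four indices of the column block with the row block fixed — or vice versa — these four differences range independently over $\Z^d$, so both the row sums and the column sums are at most $C_{n,n^*}(F,B) \bigl ( \sum_{k \in \Z^{2d}} \sexpval{k}^{-n} \bigr ) \bigl ( \sum_{l \in \Z^{2d}} \sexpval{l}^{-n^*} \bigr )$, which is finite once $n, n^* > 2d$. Choosing $n = n^* = 2d+1$, Schur's test yields a bounded operator on $\ell^2 \bigl ( (\Gamma \times \Gamma^*)^2 \bigr )$; conjugating back by $\mathfrak{U}_2$ and extending by continuity from $\op^A \bigl ( \Schwartz(\pspace) \bigr )$ gives $\Op^A(F) \in \mathcal{B} \bigl ( \mathfrak{L}^2 \bigl ( \mathcal{B} \bigl ( L^2(\R^d) \bigr ) \bigr ) \bigr )$.

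I expect the main obstacle to be bookkeeping rather than analysis: one must track the permutation of the eight indices defining $K$ so that the decay-carrying differences of Corollary~\ref{characterization_symbols_matrix_elements_super_PsiDOs:cor:characterization_Hoermander_class_super_PsiDOs} become, after relabelling, differences of the free summation indices in both the row and the column sum, and one must check the domain hypothesis of Proposition~\ref{boundedness_super_PsiDOs:prop:superoperator_image_expansion} (that $\op^A \bigl ( \Schwartz(\pspace) \bigr )$ lies in the domain of $\Op^A(F)^*$). The remaining estimates are routine lattice sums, and the hypothesis $m \leq 0$ enters only to neutralise the growing bracket $\bexpval{(\alpha_L^*,\alpha_R^*) + (\beta_L^*,\beta_R^*)}^{m}$.
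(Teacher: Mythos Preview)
Your proposal is correct and follows essentially the same route as the paper: reduce to $\rho=0$ by nesting, invoke Proposition~\ref{setting:prop:Hoermander_double_symbol_Schwartz_symbol_semi_super_product} so that $\Op^A(F)$ is well defined on the dense set $\op^A\bigl(\Schwartz(\pspace)\bigr)$, use Proposition~\ref{boundedness_super_PsiDOs:prop:superoperator_image_expansion} to rewrite $\Op^A(F)\hat g^A$ in matrix form, bound the resulting infinite matrix via Schur's test (Lemma~\ref{boundedness_super_PsiDOs:lem:Schur}) together with the Parseval isometry of Proposition~\ref{abstract_matrix_representation:prop:Hilbert_Schmidt}, and extend by density. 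In fact you are more explicit than the paper in two places: you spell out which matrix-element estimate from Corollary~\ref{characterization_symbols_matrix_elements_super_PsiDOs:cor:characterization_Hoermander_class_super_PsiDOs} feeds Schur's test (the paper just writes ``It follows from Schur's Test \ldots''), and you actually check the hypothesis $\op^A\bigl(\Schwartz(\pspace)\bigr)\subset\mathscr{D}\bigl(\Op^A(F)^*\bigr)$ via $\overline{F}\in S^m_{\rho,0}(\Pspace)$, which the paper invokes Proposition~\ref{boundedness_super_PsiDOs:prop:superoperator_image_expansion} without verifying.
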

\begin{proof}
	First of all, by the nesting property of Hörmander classes, 
	\begin{align}
		S^m_{\rho,0}(\Pspace) \subseteq S^m_{0,0}(\Pspace) 
		&&
		\forall \rho \in [0,1]
		, 
	\end{align}
	it suffices to verify the statement for $\rho = 0$.
	
	Let $F \in S_{0,0}^m(\Pspace)$ and $g \in \Schwartz(\pspace)$. Then it follows from Proposition~\ref{setting:prop:Hoermander_double_symbol_Schwartz_symbol_semi_super_product} that the semi-super product $F \semisuper^B g$ belongs to $\Schwartz(\pspace)$. Since $\op^A$ yields a unitary map from $L^2(\pspace)$ to $\mathfrak{L}^2 \bigl ( \mathcal{B} \bigl ( L^2(\R^d) \bigr ) \bigr )$ (\cf Subsection~\ref{setting:magnetic_PsiDOs}) and $\Schwartz(\pspace)\subset L^2(\pspace)$ we can deduce that $\Op^A(F) \, \hat{g}^A := \op^A \bigl ( F\semisuper^B g \bigr )$ belongs to $\mathfrak{L}^2 \bigl ( \mathcal{B} \bigl ( L^2(\R^d) \bigr ) \bigr )$.

	By using the Parseval identity and Proposition \ref{boundedness_super_PsiDOs:prop:superoperator_image_expansion} we obatin
	\begin{align}
		\bnorm{\Op^A(F) \, \hat{g}^A}_{\mathfrak{L}^2(\mathcal{B}(L^2(\R^d)))}^2 &= \sum_{(\alpha_{L,R}, \alpha_{L,R}^*) \in \Gamma\times\Gamma^*} \abs{ \bigl( \Op^A(F) \, \hat{g}^A \bigr)_{\alpha_L, \alpha_L^*, \alpha_R, \alpha_R^*} }^2 
		\notag \\
		&= \sum_{(\alpha_{L,R}, \alpha_{L,R}^*) \in \Gamma\times\Gamma^*} \biggl| \sum_{(\beta_{L,R}, \beta_{L,R}^*) \in \Gamma\times\Gamma^*} \Op^A(F)_{\alpha_L, \alpha_L^*, \beta_L, \beta_L^*, \beta_R, \beta_R^*, \alpha_R, \alpha_R^*} \, \hat{g}^A_{\beta_L, \beta_L^*, \beta_R, \beta_R^*} \biggr|^2 
		.
		\label{boundedness_super_PsiDOs:eqn:applying_Parseval}
	\end{align}
	It follows from Schur's Test (Lemma \ref{boundedness_super_PsiDOs:lem:Schur}) and Proposition \ref{abstract_matrix_representation:prop:Hilbert_Schmidt} that we have
	\begin{align*}
		\sum_{(\alpha_{L,R}, \alpha_{L,R}^*) \in \Gamma\times\Gamma^*} &\abs{ \sum_{(\beta_{L,R}, \beta_{L,R}^*) \in \Gamma\times\Gamma^*} \Op^A(F)_{\alpha_L, \alpha_L^*, \beta_L, \beta_L^*, \beta_R, \beta_R^*, \alpha_R, \alpha_R^*} \, \hat{g}^A_{\beta_L, \beta_L^*, \beta_R, \beta_R^*} }^2 
		\\
		&\leq C(F,B)^2 \sum_{(\beta_{L,R}, \beta_{L,R}^*) \in \Gamma\times\Gamma^*} \abs{ \hat{g}^A_{\beta_L, \beta_L^*, \beta_R, \beta_R^*} }^2 = C(F,B)^2 \norm{\hat{g}^A}_{\mathfrak{L}^2 (\mathcal{B} ( L^2(\R^d) ) )}^2 
		.
	\end{align*}
	Combining this with \eqref{boundedness_super_PsiDOs:eqn:applying_Parseval} we get
	\begin{align*}
		\bnorm{\Op^A(F) \, \hat{g}^A}_{\mathfrak{L}^2(\mathcal{B}(L^2(\R^d)))} \leq C(F,B) \norm{\hat{g}^A}_{\mathfrak{L}^2 (\mathcal{B} ( L^2(\R^d) ) )} \qquad \forall \hat{g}^A \in \op^A \bigl ( \Schwartz(\pspace)  \bigr )
		.
	\end{align*}
	The magnetic Weyl quantization map $\op^A$ yields a unitary map from $L^2(\pspace)$ to $\mathfrak{L}^2 \bigl ( \mathcal{B} \bigl ( L^2(\R^d) \bigr ) \bigr )$ (\cf Subsection \ref{setting:magnetic_PsiDOs}) and $\Schwartz(\pspace)$ is dense in $L^2(\pspace)$. This shows that $\op^A \bigl ( \Schwartz(\pspace) \bigr ) := \bigl \{ \op^A(g) \; \; \vert \; \; g \in \Schwartz(\pspace) \bigr \}$ is a dense subspace of $\mathfrak{L}^2 \bigl ( \mathcal{B} \bigl ( L^2(\R^d) \bigr ) \bigr )$, and hence $\Op^A(F)$ uniquely extends to a bounded linear operator from $\mathfrak{L}^2 \bigl ( \mathcal{B} \bigl ( L^2(\R^d) \bigr ) \bigr )$ to itself. The proof is complete.
\end{proof}
\begin{remark}
	If we \emph{could} prove that 
	\begin{align*}
		\Op^A(F) \in \mathcal{B} \bigl ( \mathcal{B} \bigl ( L^2(\R^d) \bigr ) \bigr ) = \mathcal{B} \Bigl ( \mathfrak{L}^{\infty} \bigl ( \mathcal{B} \bigl ( L^2(\R^d) \bigr ) \bigr ) \Bigr )
	\end{align*}
	was bounded on $\mathcal{B} \bigl ( L^2(\R^d) \bigr )$ under the \emph{same} conditions as in Theorem~\ref{boundedness_super_PsiDOs:thm:boundedness_super_PsiDOs}, the non-commutative version of the Riesz-Thorin Interpolation Theorem (\cf \eg Theorems~2.9 and 2.10 as well as Remark~1 on p.~23 in \cite{Simon:trace_ideals_applications:2005} or \cite[Theorem~13.1]{Gohberg_Krein:linear_nonselfadjoint_operators:1969}) would imply that 
	\begin{align*}
		\Op^A(F) : \mathfrak{L}^p \bigl ( \mathcal{B} \bigl ( L^2(\R^d) \bigr ) \bigr ) \longrightarrow \mathfrak{L}^p \bigl ( \mathcal{B} \bigl ( L^2(\R^d) \bigr ) \bigr ) 
	\end{align*}
	defined a bounded operator for \emph{any} $2 \leq p \leq \infty$. 
	
	Likewise, a proof of boundedness on the trace-class operators $\mathfrak{L}^1 \bigl ( \mathcal{B} \bigl ( L^2(\R^d) \bigr ) \bigr )$ would imply boundedness for any $1 \leq p \leq 2$. 
\end{remark}
\begin{remark}[Extension to operator-valued symbols]
	In many applications, one encounters pseudodifferential operators defined from \emph{operator-valued} functions (see \eg \cite{Cordes:pseudodifferential_FW_transform:1983,PST:sapt:2002,PST:Born-Oppenheimer:2007,Fuerst_Lein:scaling_limits_Dirac:2008}); a systematic construction of the operator-valued calculus from first principles can be found in \cite{DeNittis_Lein_Seri:equivariant_PsiDOs:2022}. Combining the ideas from \cite{DeNittis_Lein_Seri:equivariant_PsiDOs:2022} and \cite{Lee_Lein:magnetic_pseudodifferential_super_operators_basics:2022} should give a pseudodifferential \emph{super} calculus defined for operator-valued and equivariant operator-valued Hörmander symbols. 
	
	Specifically, the ideas from \cite[Section~3.4.3]{DeNittis_Lein_Seri:equivariant_PsiDOs:2022} should apply verbatim and would allow for an extension of our boundedness result, Theorem~\ref{boundedness_super_PsiDOs:thm:boundedness_super_PsiDOs}, to the operator-valued context. The abstract results from Section~\ref{abstract_matrix_representation} apply verbatim to \eg $L^2(\R^d,\Hil)$, where $\Hil$ is some separable Hilbert space. The main technical result, Theorem~\ref{characterization_symbols_matrix_elements_super_PsiDOs:thm:characterization_Hoermander_class_super_PsiDOs}, should also extend, following the strategy and arguments in \cite{DeNittis_Lein_Seri:equivariant_PsiDOs:2022}. 
\end{remark}
The inclusion $S_{\rho,0}^{m_L,m_R}(\Pspace) \subseteq S_{0,0}^0(\Pspace)$ for $m_L , m_R \leq 0$ allows us to apply Theorem~\ref{boundedness_super_PsiDOs:thm:boundedness_super_PsiDOs} also to double Hörmander classes from Definition~\ref{setting:defn:double_Hoermander_symbol_classes} and obtain an boundedness result for them. 
\begin{corollary}\label{boundedness_super_PsiDOs:cor:boundedness_super_PsiDOs}
	Suppose that $F \in S_{\rho,0}^{m_L,m_R}(\Pspace)$, $m_L , m_R \leq 0$, with $0 \leq \rho \leq 1$. Then $\Op^A(F)$ defines a bounded super operator $\Op^A(F) \in \mathcal{B} \bigl ( \mathfrak{L}^2 \bigl ( \mathcal{B} \bigl ( L^2(\R^d) \bigr ) \bigr ) \bigr )$ on the Hilbert space of Hilbert-Schmidt operators.
\end{corollary}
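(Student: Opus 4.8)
The plan is to obtain Corollary~\ref{boundedness_super_PsiDOs:cor:boundedness_super_PsiDOs} as an immediate consequence of Theorem~\ref{boundedness_super_PsiDOs:thm:boundedness_super_PsiDOs} via a symbol-class inclusion, so the whole argument reduces to checking that a double Hörmander symbol of non-positive orders is, in particular, an ordinary Hörmander symbol of order $0$ and type $(0,0)$ over the doubled phase space. Concretely, the first step is to establish
\begin{align*}
	S_{\rho,0}^{m_L,m_R}(\Pspace) \subseteq S_{0,0}^0(\Pspace)
	\qquad \text{for all } m_L , m_R \leq 0 \text{ and } 0 \leq \rho \leq 1 .
\end{align*}
Given $F \in S_{\rho,0}^{m_L,m_R}(\Pspace)$ and multi indices $a_L,a_R,\alpha_L,\alpha_R \in \N_0^d$, the defining estimate~\eqref{setting:eqn:Hoermander_symbol_separate_estimates} controls $\babs{\partial_{x_L}^{a_L} \partial_{\xi_L}^{\alpha_L} \partial_{x_R}^{a_R} \partial_{\xi_R}^{\alpha_R} F(\Xbf)}$ by a constant times $\sexpval{\xi_L}^{m_L - \abs{\alpha_L}\rho} \sexpval{\xi_R}^{m_R - \abs{\alpha_R}\rho}$; since $\rho \geq 0$ and $m_L,m_R \leq 0$, both exponents are $\leq 0$, hence both Japanese brackets are $\leq 1$ and the derivative is uniformly bounded. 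That is exactly the condition defining membership in $S_{0,0}^0(\Pspace)$, whose seminorms are built from the joint bracket $\sexpval{(\xi_L,\xi_R)}$ and for which type $(0,0)$ removes every momentum-derivative gain.

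Once the inclusion is in place, the proof concludes in one line: for $m_L,m_R \leq 0$ the symbol $F \in S_{\rho,0}^{m_L,m_R}(\Pspace)$ also lies in $S_{0,0}^0(\Pspace) = S_{0,0}^m(\Pspace)$ with $m = 0 \leq 0$, so Theorem~\ref{boundedness_super_PsiDOs:thm:boundedness_super_PsiDOs} applies verbatim and gives $\Op^A(F) \in \mathcal{B} \bigl ( \mathfrak{L}^2 \bigl ( \mathcal{B} \bigl ( L^2(\R^d) \bigr ) \bigr ) \bigr )$.

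Since all the substantive work lives upstream — in the matrix-element characterizations of Section~\ref{characterization_symbols_matrix_elements_super_PsiDOs} and in Theorem~\ref{boundedness_super_PsiDOs:thm:boundedness_super_PsiDOs} itself — there is no genuine obstacle; the one point deserving a sentence of care is the bookkeeping of conventions, namely making sure that "order $m$, type $(0,0)$" in Theorem~\ref{boundedness_super_PsiDOs:thm:boundedness_super_PsiDOs} is read as the class $S_{0,0}^m$ over $\Pspace = T^*(\R^d \times \R^d)$ and that a non-positive-order double symbol honestly sits inside $S_{0,0}^0(\Pspace)$. If a reader wanted a route that does not cite Theorem~\ref{boundedness_super_PsiDOs:thm:boundedness_super_PsiDOs}, one can instead replay its proof while invoking Theorem~\ref{characterization_symbols_matrix_elements_super_PsiDOs:thm:characterization_Hoermander_class_super_PsiDOs} in place of Corollary~\ref{characterization_symbols_matrix_elements_super_PsiDOs:cor:characterization_Hoermander_class_super_PsiDOs}: for $m_L,m_R \leq 0$ the weights $\sexpval{\alpha_L^* + \beta_L^*}^{-m_L}$ and $\sexpval{\alpha_R^* + \beta_R^*}^{-m_R}$ appearing in~\eqref{characterization_matrix_elements_super_symbols:eqn:characterization_double_Hoermander_symbols} are $\geq 1$, so the matrix elements of $\Op^A(F)$ decay faster than any polynomial in each of the index differences $\alpha_{L,R} - \beta_{L,R}$ and $\alpha_{L,R}^* - \beta_{L,R}^*$; Schur's Test (Lemma~\ref{boundedness_super_PsiDOs:lem:Schur}), Proposition~\ref{boundedness_super_PsiDOs:prop:superoperator_image_expansion}, and the Parseval identity from Proposition~\ref{abstract_matrix_representation:prop:Hilbert_Schmidt} then close the estimate exactly as in the proof of Theorem~\ref{boundedness_super_PsiDOs:thm:boundedness_super_PsiDOs}.
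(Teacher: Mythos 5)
Your proof is correct and follows the paper's own argument: the paper likewise reduces to Theorem~\ref{boundedness_super_PsiDOs:thm:boundedness_super_PsiDOs} via the inclusion $S_{\rho,0}^{m_L,m_R}(\Pspace) \subseteq S_{0,0}^0(\Pspace)$ for $m_L,m_R \leq 0$, which you verify correctly by noting that both momentum brackets in~\eqref{setting:eqn:Hoermander_symbol_separate_estimates} are raised to non-positive powers. The alternative route you sketch (replaying the Schur-test argument with Theorem~\ref{characterization_symbols_matrix_elements_super_PsiDOs:thm:characterization_Hoermander_class_super_PsiDOs} in place of Corollary~\ref{characterization_symbols_matrix_elements_super_PsiDOs:cor:characterization_Hoermander_class_super_PsiDOs}) is also sound but is not what the paper does.
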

%

\subsection{More advanced boundedness results} 
\label{boundedness_super_PsiDOs:advanced_boundedness_results}
Let us give a brief outlook on how to obtain more advanced boundedness results for super operators from Theorem~\ref{boundedness_super_PsiDOs:thm:boundedness_super_PsiDOs} and Corollary~\ref{boundedness_super_PsiDOs:cor:boundedness_super_PsiDOs}.

One that should be useful in applications is an extension from $\mathfrak{L}^2$ to non-commutative magnetic Sobolev spaces. We first revisit the case of magnetic pseudodifferential operators: if $f \in S^m_{\rho,0}(\pspace)$ is a Hörmander symbol of order $m$, then the associated magnetic seudodifferential operator $\op^A(f)$ defines a bounded operator 
\begin{align*}
	\op^A(f) : H^s_A(\R^d) \longrightarrow H^{s - m}_A(\R^d)
\end{align*}
between magnetic Sobolev spaces for any $s \in \R$ (\cf \cite[Proposition~4.12]{Iftimie_Mantoiu_Purice:magnetic_psido:2006}). Rather than use the standard magnetic Sobolev norm, we instead use an equivalent norm defined from 
\begin{align*}
	w_m(x,\xi) := \sexpval{\xi}^m + \lambda(m)
\end{align*}
for $m \geq 0$. To extend this definition to negative orders, we choose the constant $\lambda(m) > 0$ large enough so that the inverse 
\begin{align*}
	w_{-m} = \bigl ( \sexpval{\xi}^m + \lambda(m) \bigr )^{(-1)_{\star^B}} 
\end{align*}
with respect to the magnetic Weyl product $\star^B$ exists (\cf \cite[Proposition~6.31]{Iftimie_Mantoiu_Purice:commutator_criteria:2008} or \cite[Theorem~1.8]{Mantoiu_Purice_Richard:Cstar_algebraic_framework:2007}). Importantly, $w_m \in S^m_{1,0}(\pspace)$ is a Hörmander symbol of order $m \in \R$ and $\bnorm{\op^A(w_m) \psi}_{L^2(\R^d)}$ is an equivalent norm for the Banach space $H^m_A(\R^d)$. Consequently, we have proven the following little 
\begin{lemma}\label{outlook:lem:characterization_bounded_operators_magnetic_Sobolev_spaces}
	$\hat{f}^A \in \mathcal{B} \bigl ( H^m_A(\R^d) \, , \, H^s_A(\R^d) \bigr )$ is a bounded operator between magnetic Sobolev spaces of orders $m \in \R$ and $s \in \R$ if and only if 
	\begin{align*}
		\Op^A \bigl ( w_{-s} \otimes w_{-m} \bigr ) \, \hat{f}^A &= \op^A(w_{-s}) \, \hat{f}^A \, \op^A(w_{-m}) \in \mathcal{B} \bigl ( L^2(\R^d) \bigr ) 
	\end{align*}
	defines a bounded operator on $L^2(\R^d)$. 
\end{lemma}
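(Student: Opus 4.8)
The plan is to prove the lemma by a standard conjugation argument: the magnetic pseudodifferential operators built from the symbols $w_t$ identify the magnetic Sobolev spaces with $L^2(\R^d)$, and once this is in place the equivalence becomes a one-line composition-of-bounded-maps statement. First I would repackage the facts assembled in the discussion preceding the lemma as an isomorphism statement. For each $t \in \R$ the symbol $w_t$ lies in $S^t_{1,0}(\pspace)$ and, because $\lambda(t)$ has been chosen large enough, admits a $\star^B$-inverse $w_{-t} \in S^{-t}_{1,0}(\pspace)$; hence $\op^A(w_t)\,\op^A(w_{-t}) = \op^A\bigl(w_t \star^B w_{-t}\bigr) = \op^A(1) = \id$ and likewise $\op^A(w_{-t})\,\op^A(w_t) = \id$. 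Combining this with the fact that $\bnorm{\op^A(w_t)\,\psi}_{L^2(\R^d)}$ is an equivalent norm on $H^t_A(\R^d)$ and that $\op^A(w_t)$ is the bounded Sobolev-mapping operator of \cite[Proposition~4.12]{Iftimie_Mantoiu_Purice:magnetic_psido:2006}, this shows that for every $t \in \R$ the maps $\op^A(w_t) : H^t_A(\R^d) \to L^2(\R^d)$ and $\op^A(w_{-t}) : L^2(\R^d) \to H^t_A(\R^d)$ are mutually inverse topological isomorphisms, with the convention $H^0_A(\R^d) = L^2(\R^d)$ and $w_0 = 1$. (Here $\op^A(w_s)$ is the order-$s$ operator realising the $H^s_A$-norm, which is the operator that maps $H^s_A(\R^d)$ back to $L^2(\R^d)$; this is what should be conjugated on the left.)

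With this in hand the lemma is immediate. For the implication "$\Rightarrow$", if $\hat{f}^A \in \mathcal{B}\bigl(H^m_A(\R^d), H^s_A(\R^d)\bigr)$ then the conjugated operator $\op^A(w_s)\,\hat{f}^A\,\op^A(w_{-m})$ — the product super operator $\Op^A(w_s \otimes w_{-m})$ applied to $\hat{f}^A$ in the sense of Section~\ref{setting:magnetic_PsiD_super_Os} — is the composition of the three bounded maps $L^2(\R^d) \xrightarrow{\op^A(w_{-m})} H^m_A(\R^d) \xrightarrow{\hat{f}^A} H^s_A(\R^d) \xrightarrow{\op^A(w_s)} L^2(\R^d)$, hence lies in $\mathcal{B}\bigl(L^2(\R^d)\bigr)$ with norm bounded by the product of the three operator norms. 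For the converse, I would insert the inverse isomorphisms and write $\hat{f}^A = \op^A(w_{-s})\,\bigl(\op^A(w_s)\,\hat{f}^A\,\op^A(w_{-m})\bigr)\,\op^A(w_m)$ on the dense subspace on which $\hat{f}^A$ is a priori defined; if the bracketed operator is bounded on $L^2(\R^d)$ then for every $\psi$ in that subspace $\bnorm{\hat{f}^A\psi}_{H^s_A(\R^d)} \asymp \bnorm{\op^A(w_s)\,\hat{f}^A\psi}_{L^2(\R^d)} \leq \bnorm{\op^A(w_s)\,\hat{f}^A\,\op^A(w_{-m})}_{\mathcal{B}(L^2(\R^d))}\,\bnorm{\op^A(w_m)\psi}_{L^2(\R^d)} \asymp \bnorm{\op^A(w_s)\,\hat{f}^A\,\op^A(w_{-m})}_{\mathcal{B}(L^2(\R^d))}\,\bnorm{\psi}_{H^m_A(\R^d)}$, so $\hat{f}^A$ extends to a bounded operator $H^m_A(\R^d) \to H^s_A(\R^d)$ by density and the equivalence of norms.

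Honestly there is no deep obstacle — this is genuinely a "little lemma", and all of its substance has been externalised into the cited results. The one point that truly requires care is the invertibility statement $\op^A(w_{-t}) = \op^A(w_t)^{-1}$, i.e.\ the existence of the $\star^B$-inverse $w_{-t}$ in the Hörmander class $S^{-t}_{1,0}(\pspace)$; this is exactly where the choice of the constant $\lambda(t)$ enters, and it is imported from \cite[Proposition~6.31]{Iftimie_Mantoiu_Purice:commutator_criteria:2008} and \cite[Theorem~1.8]{Mantoiu_Purice_Richard:Cstar_algebraic_framework:2007}. A secondary bookkeeping point is to pin down the dense domain once and for all: $\hat{f}^A$ is typically only given on a dense subspace, so one proves the norm estimate there and then extends, using that $\op^A(w_{-m})$ carries a dense subspace of $L^2(\R^d)$ onto a dense subspace of $H^m_A(\R^d)$. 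I would also record, with reference to Section~\ref{setting:magnetic_PsiD_super_Os}, that $\Op^A(w_s \otimes w_{-m})\,\hat{f}^A = \op^A(w_s)\,\hat{f}^A\,\op^A(w_{-m})$ is just the abuse-of-notation product super operator, so that the conclusion is literally an assertion about an $L^2(\R^d)$-bounded super-operator image of $\hat{f}^A$.
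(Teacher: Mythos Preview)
Your argument is correct and is exactly the approach the paper takes: the paper does not give a separate proof of this lemma at all, but declares it an immediate consequence (``Consequently, we have proven the following little \ldots'') of the preceding paragraph establishing that $\bnorm{\op^A(w_t)\psi}_{L^2(\R^d)}$ is an equivalent norm on $H^t_A(\R^d)$ and that $w_{-t}$ is the $\star^B$-inverse of $w_t$. Your write-up simply spells out the conjugation-by-isomorphisms argument that the authors leave to the reader.

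One point worth flagging explicitly: you have, correctly, used $\op^A(w_s)$ on the left rather than the $\op^A(w_{-s})$ that appears in the lemma as stated. Your parenthetical remark shows you noticed this. Indeed, since $\op^A(w_t):H^t_A(\R^d)\to L^2(\R^d)$ is the isomorphism realising the $H^t_A$-norm, the composition that lands in $\mathcal{B}\bigl(L^2(\R^d)\bigr)$ is $\op^A(w_s)\,\hat{f}^A\,\op^A(w_{-m})$, not $\op^A(w_{-s})\,\hat{f}^A\,\op^A(w_{-m})$; the latter would map $L^2(\R^d)$ to $H^{2s}_A(\R^d)$. So the lemma as printed appears to carry a sign typo in the left factor, and your version is the one that makes the subsequent Corollary go through with the stated order arithmetic.
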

The other piece we will need is a notion of $p$-Schatten classes $\mathfrak{L}^p \bigl ( \mathcal{B}(\Hil_1,\Hil_2) \bigr )$ for operators that map between two different separable Hilbert spaces. While the ideas are all contained in \cite{Treves:topological_vector_spaces:1967}, a short, guided introduction to the topic can be found in \cite[Appendix~B]{DeNittis_Lein_Seri:equivariant_PsiDOs:2022}. For the present purposes it suffices to note that we may set 
\begin{align*}
	\mathfrak{L}^p \bigl ( \mathcal{B}(\Hil_1,\Hil_2) \bigr ) := \Bigl \{ \hat{f} \in \mathcal{B}(\Hil_1,\Hil_2) \; \; \big \vert \; \; \babs{\hat{f}} \in \mathfrak{L}^p \bigl ( \mathcal{B}(\Hil_1) \bigr ) \Bigr \} 
\end{align*}
where we have defined the absolute value as $\babs{\hat{f}} := \sqrt{\hat{f}^* \, \hat{f}}$. Equivalently, we could have put the adjoint operator to the right and base our definition on the absolute value $\babs{\hat{f}} := \sqrt{\hat{f} \, \hat{f}^*}$. 

That being said, we can now exploit Lemma~\ref{outlook:lem:characterization_bounded_operators_magnetic_Sobolev_spaces} to derive boundedness results for magnetic pseudodifferential super operators. 
\begin{corollary}\label{boundedness_super_PsiDOs:cor:boundedness_super_PsiDOs_magnetic_Sobolev_spaces_double_Hoermander_symbols}
	Suppose $F \in S^{m_L,m_R}_{\rho,0}(\Pspace)$ is a double Hörmander symbol of orders $m_L , m_R \in \R$ with $0 \leq \rho \leq 1$. Then $\Op^A(F)$ defines a bounded operator 
	\begin{align}
		\Op^A(F) \in \mathcal{B} \Bigl ( \mathfrak{L}^2 \bigl ( H^{m_{\mathrm{i},R}}_A(\R^d) \, , \, H^{m_{\mathrm{i},L}}_A(\R^d) \bigr ) \, , \, \mathfrak{L}^2 \bigl ( H^{m_{\mathrm{f},R}}_A(\R^d) \, , \, H^{m_{\mathrm{f},L}}_A(\R^d) \bigr ) \Bigr )
		\label{outlook:eqn:bounded_Psi_super_DO_magnetic_Sobolev_spaces_double_Hoermander_symbols}
	\end{align}
	whenever the coefficients satisfy 
	\begin{align*}
		m_L &\leq m_{\mathrm{i},L} + m_{\mathrm{f},L}
		, 
		\\
		m_R &\leq m_{\mathrm{i},R} + m_{\mathrm{f},R}
		. 
	\end{align*}
\end{corollary}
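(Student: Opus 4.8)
The plan is to deduce the statement from the $\mathfrak{L}^2$-boundedness already established in Corollary~\ref{boundedness_super_PsiDOs:cor:boundedness_super_PsiDOs}, by conjugating $\Op^A(F)$ with the elliptic weight operators $\op^A(w_{\pm m})$ that implement the identifications $H^m_A(\R^d) \cong L^2(\R^d)$. The first step is to promote Lemma~\ref{outlook:lem:characterization_bounded_operators_magnetic_Sobolev_spaces} to the Hilbert-Schmidt level. Since $\op^A(w_m)$ and $\op^A(w_{-m})$ are mutually inverse topological isomorphisms along the magnetic Sobolev scale, and $\op^A(w_m)$ restricts to a Banach space isomorphism $H^m_A(\R^d) \to L^2(\R^d)$, two-sided multiplication by the appropriate weights defines a Banach space isomorphism
\begin{align*}
	\mathfrak{L}^2 \bigl ( \mathcal{B} \bigl ( H^{m_R}_A(\R^d) \, , \, H^{m_L}_A(\R^d) \bigr ) \bigr ) \longrightarrow \mathfrak{L}^2 \bigl ( \mathcal{B} \bigl ( L^2(\R^d) \bigr ) \bigr )
	,
\end{align*}
whose inverse, given by conjugation with the inverse weights, is again bounded; this holds because conjugating a Hilbert-Schmidt operator on $L^2(\R^d)$ with a pair of bounded invertible operators stays Hilbert-Schmidt, with norm controlled by the product of the operator norms of the weights and their inverses. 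I will denote by $\Theta_{\mathrm{i}}$ and $\Theta_{\mathrm{f}}$ the isomorphisms of this type attached to the initial and the target space appearing in \eqref{outlook:eqn:bounded_Psi_super_DO_magnetic_Sobolev_spaces_double_Hoermander_symbols}, the signs of the weight exponents being dictated by which Sobolev space plays the role of domain and which of range.

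The second step is the conjugation itself. Because $\Op^A(f_L \otimes f_R) = \op^A(f_L) \otimes \op^A(f_R)$, both $\Theta_{\mathrm{i}}^{-1}$ and $\Theta_{\mathrm{f}}$ are themselves magnetic pseudodifferential super operators, quantizing tensor-product symbols of the form $w_{\pm m_{\mathrm{i},L}} \otimes w_{\pm m_{\mathrm{i},R}}$ and $w_{\pm m_{\mathrm{f},L}} \otimes w_{\pm m_{\mathrm{f},R}}$, each of which lies in a double Hörmander class of type $(1,0)$ and orders $\pm m_{\mathrm{i},L}, \pm m_{\mathrm{i},R}$ (respectively $\pm m_{\mathrm{f},L}, \pm m_{\mathrm{f},R}$). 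Passing all symbols to the common type $(\rho,0)$ by the nesting of Hörmander classes and invoking the super Weyl product calculus for double Hörmander symbols \cite[Proposition~VI.5~(2)]{Lee_Lein:magnetic_pseudodifferential_super_operators_basics:2022}, the composition
\begin{align*}
	\Theta_{\mathrm{f}} \circ \Op^A(F) \circ \Theta_{\mathrm{i}}^{-1} = \Op^A(G)
	,
	\qquad
	G := \bigl ( w_{\pm m_{\mathrm{f},L}} \otimes w_{\pm m_{\mathrm{f},R}} \bigr ) \super^B F \super^B \bigl ( w_{\pm m_{\mathrm{i},L}} \otimes w_{\pm m_{\mathrm{i},R}} \bigr )
	,
\end{align*}
is again a magnetic pseudodifferential super operator, with $G \in S^{m_L',m_R'}_{\rho,0}(\Pspace)$ whose orders $m_L', m_R'$ are the corresponding sums of the orders of the three factors. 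The hypotheses $m_L \leq m_{\mathrm{i},L} + m_{\mathrm{f},L}$ and $m_R \leq m_{\mathrm{i},R} + m_{\mathrm{f},R}$ are exactly what forces $m_L' \leq 0$ and $m_R' \leq 0$; Corollary~\ref{boundedness_super_PsiDOs:cor:boundedness_super_PsiDOs} then gives $\Op^A(G) \in \mathcal{B} \bigl ( \mathfrak{L}^2 ( \mathcal{B} ( L^2(\R^d) ) ) \bigr )$, and since $\Theta_{\mathrm{i}}$, $\Theta_{\mathrm{f}}$ and their inverses are bounded, $\Op^A(F) = \Theta_{\mathrm{f}}^{-1} \circ \Op^A(G) \circ \Theta_{\mathrm{i}}$ is bounded between the spaces in \eqref{outlook:eqn:bounded_Psi_super_DO_magnetic_Sobolev_spaces_double_Hoermander_symbols}.

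The step I expect to be the main obstacle is the careful set-up of the first one: proving the Hilbert-Schmidt analogue of Lemma~\ref{outlook:lem:characterization_bounded_operators_magnetic_Sobolev_spaces} and, above all, keeping track of the signs of the weight exponents so that the orders of $G$ match the stated inequalities exactly. A secondary, routine point is the matter of domains: $\Op^A(F)$ is initially defined only on a dense subspace such as $\op^A \bigl ( \Schwartz(\pspace) \bigr )$, so, exactly as in the proof of Theorem~\ref{boundedness_super_PsiDOs:thm:boundedness_super_PsiDOs}, one first checks the identity $\Theta_{\mathrm{f}} \circ \Op^A(F) \circ \Theta_{\mathrm{i}}^{-1} = \Op^A(G)$ and the ensuing norm estimate on that subspace — using Proposition~\ref{setting:prop:Hoermander_double_symbol_Schwartz_symbol_semi_super_product} and the associativity of $\super^B$ — and then extends by density.
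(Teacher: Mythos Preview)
Your proposal is correct and follows essentially the same route as the paper: both proofs conjugate $\Op^A(F)$ by the weight super operators $\Op^A(w_{\ast} \otimes w_{\ast})$, invoke \cite[Proposition~VI.5~(2)]{Lee_Lein:magnetic_pseudodifferential_super_operators_basics:2022} to compute the double Hörmander order of the resulting symbol, and then apply Corollary~\ref{boundedness_super_PsiDOs:cor:boundedness_super_PsiDOs}. The paper resolves your anticipated ``main obstacle'' by taking all four weight exponents negative, $(w_{-m_{\mathrm{f},L}} \otimes w_{-m_{\mathrm{f},R}}) \super^B F \super^B (w_{-m_{\mathrm{i},L}} \otimes w_{-m_{\mathrm{i},R}})$, which lands in $S^{m_L - m_{\mathrm{i},L} - m_{\mathrm{f},L} \, , \, m_R - m_{\mathrm{i},R} - m_{\mathrm{f},R}}_{\rho,0}(\Pspace)$ and directly yields the stated inequalities; your more explicit discussion of the $\mathfrak{L}^2$-level isomorphisms $\Theta_{\mathrm{i}}, \Theta_{\mathrm{f}}$ and the density argument is a welcome elaboration the paper leaves implicit.
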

\begin{proof}
	First of all, with the help of Lemma~\ref{outlook:lem:characterization_bounded_operators_magnetic_Sobolev_spaces} we can reformulate equation~\eqref{outlook:eqn:bounded_Psi_super_DO_magnetic_Sobolev_spaces_double_Hoermander_symbols} as 
	\begin{align*}
		&\Op^A \bigl ( w_{-m_{\mathrm{f},L}} \otimes w_{-m_{\mathrm{f},R}} \bigr ) \, \Op^A(F) \, \Op^A \bigl ( w_{-m_{\mathrm{i},L}} \otimes w_{-m_{\mathrm{i},R}} \bigr ) 
		= \\
		&\qquad \qquad 
		= \Op^A \Bigl ( \bigl ( w_{-m_{\mathrm{f},L}} \otimes w_{-m_{\mathrm{f},R}} \bigr ) \sharp^B F \sharp^B \bigl ( w_{-m_{\mathrm{i},L}} \otimes w_{-m_{\mathrm{i},R}} \bigr ) \Bigr ) 
		\in \mathcal{B} \Bigl ( \mathfrak{L}^p \bigl ( L^2(\R^d) \bigr ) \Bigr ) 
	\end{align*}
	being a bounded operator on the standard $p$-Schatten class for the Hilbert space $L^2(\R^d)$. 
	
	We know from \cite[Proposition~VI.5~(2)]{Lee_Lein:magnetic_pseudodifferential_super_operators_basics:2022} that the right-hand side is a magnetic pseudodifferential operator defined by a symbol from the double symbol class of orders $\bigl ( m_L - m_{\mathrm{i},L} - m_{\mathrm{f},L} \, , \, m_R - m_{\mathrm{i},R} - m_{\mathrm{f},R} \bigr )$. Combining this with Corollary~\ref{boundedness_super_PsiDOs:cor:boundedness_super_PsiDOs} yields the conditions 
	\begin{align*}
		m_L - m_{\mathrm{i},L} - m_{\mathrm{f},L} &\leq 0 
		, 
		\\
		m_R - m_{\mathrm{i},R} - m_{\mathrm{f},R} &\leq 0 
		, 
	\end{align*}
	which are equivalent to the ones stated in the proof. 
\end{proof}
By the same token, we get a similar statement for regular Hörmander classes on $\Pspace$ via Theorem~\ref{boundedness_super_PsiDOs:thm:boundedness_super_PsiDOs}: 
\begin{corollary}\label{boundedness_super_PsiDOs:cor:boundedness_super_PsiDOs_magnetic_Sobolev_spaces_regular_Hoermander_symbols}
	Suppose $F \in S^m_{\rho,0}(\Pspace)$ is a double Hörmander symbol of orders $m\in \R$ with $0 \leq \rho \leq 1$. Then $\Op^A(F)$ defines a bounded operator 
	\begin{align}
		\Op^A(F) \in \mathcal{B} \Bigl ( \mathfrak{L}^2 \bigl ( H^{m_{\mathrm{i},R}}_A(\R^d) \, , \, H^{m_{\mathrm{i},L}}_A(\R^d) \bigr ) \, , \, \mathfrak{L}^2 \bigl ( H^{m_{\mathrm{f},R}}_A(\R^d) \, , \, H^{m_{\mathrm{f},L}}_A(\R^d) \bigr ) \Bigr )
		\label{outlook:eqn:bounded_Psi_super_DO_magnetic_Sobolev_spaces_double_Hoermander_symbols_regular_Hoermander_symbols}
	\end{align}
	whenever the coefficients satisfy 
	\begin{align*}
		m &\leq m_{\mathrm{i},L} + m_{\mathrm{f},L}
		, 
		\\
		m &\leq m_{\mathrm{i},R} + m_{\mathrm{f},R}
		. 
	\end{align*}
\end{corollary}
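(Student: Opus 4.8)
The plan is to treat this exactly as the preceding Corollary~\ref{boundedness_super_PsiDOs:cor:boundedness_super_PsiDOs_magnetic_Sobolev_spaces_double_Hoermander_symbols}, but first to realize the regular Hörmander symbol $F \in S^m_{\rho,0}(\Pspace)$ as an element of a suitably chosen double Hörmander class $S^{m_L,m_R}_{\rho,0}(\Pspace)$, so that Corollary~\ref{boundedness_super_PsiDOs:cor:boundedness_super_PsiDOs_magnetic_Sobolev_spaces_double_Hoermander_symbols} (and thereby, ultimately, Theorem~\ref{boundedness_super_PsiDOs:thm:boundedness_super_PsiDOs}) applies verbatim. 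The analytic work has already been done there; the only new ingredient is a symbol-class inclusion together with the bookkeeping of orders. Concretely, I would first record the elementary inclusion
\begin{align*}
	S^m_{\rho,0}(\Pspace) \subseteq S^{m_L,m_R}_{\rho,0}(\Pspace)
	, \qquad \text{whenever } m \leq m_L , \ m \leq m_R \text{ and } m \leq m_L + m_R .
\end{align*}
This rests only on the comparison $\sexpval{\xi_L} , \sexpval{\xi_R} \leq \sexpval{(\xi_L,\xi_R)} \leq \sexpval{\xi_L} \, \sexpval{\xi_R}$ applied to Definitions~\ref{setting:defn:Hoermander_symbol_classes} and \ref{setting:defn:double_Hoermander_symbol_classes}: given the bounds \eqref{setting:eqn:Hoermander_symbol_estimates}, one writes the derivative-generated exponent $m - (\abs{\alpha_L}+\abs{\alpha_R})\rho$ as $(m_L - \abs{\alpha_L}\rho) + (m_R - \abs{\alpha_R}\rho) - (m_L + m_R - m)$, splits $\sexpval{(\xi_L,\xi_R)}^{m - (\abs{\alpha_L}+\abs{\alpha_R})\rho}$ into powers of $\sexpval{(\xi_L,\xi_R)}$ accordingly, and estimates each factor against the corresponding power of $\sexpval{\xi_L}$ or $\sexpval{\xi_R}$, using whichever of the two inequalities matches the sign of the exponent. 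This yields the separate estimates \eqref{setting:eqn:Hoermander_symbol_separate_estimates} with orders $(m_L, m_R)$.

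With this inclusion available, I would set $m_L := m_{\mathrm{i},L} + m_{\mathrm{f},L}$ and $m_R := m_{\mathrm{i},R} + m_{\mathrm{f},R}$. The hypotheses of the corollary are exactly $m \leq m_L$ and $m \leq m_R$, and they also force $m \leq m_L + m_R$ (immediately when $m \geq 0$, since then $m_L + m_R \geq 2m \geq m$, and more generally as soon as one of $m_L, m_R$ is nonnegative); hence $F \in S^{m_L,m_R}_{\rho,0}(\Pspace)$. Applying Corollary~\ref{boundedness_super_PsiDOs:cor:boundedness_super_PsiDOs_magnetic_Sobolev_spaces_double_Hoermander_symbols} to $F$ with this choice of $(m_L, m_R)$ — for which its conditions $m_L \leq m_{\mathrm{i},L}+m_{\mathrm{f},L}$, $m_R \leq m_{\mathrm{i},R}+m_{\mathrm{f},R}$ hold with equality — gives precisely the boundedness \eqref{outlook:eqn:bounded_Psi_super_DO_magnetic_Sobolev_spaces_double_Hoermander_symbols_regular_Hoermander_symbols}. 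Unwinding that corollary, this amounts to: conjugating $\Op^A(F)$ by the order-changing super operators $\Op^A(w_{-m_{\mathrm{f},L}} \otimes w_{-m_{\mathrm{f},R}})$ and $\Op^A(w_{-m_{\mathrm{i},L}} \otimes w_{-m_{\mathrm{i},R}})$ via Lemma~\ref{outlook:lem:characterization_bounded_operators_magnetic_Sobolev_spaces}; observing through \cite[Proposition~VI.5~(2)]{Lee_Lein:magnetic_pseudodifferential_super_operators_basics:2022} and the above choice of $m_L, m_R$ that the resulting symbol $\bigl( w_{-m_{\mathrm{f},L}} \otimes w_{-m_{\mathrm{f},R}} \bigr) \sharp^B F \sharp^B \bigl( w_{-m_{\mathrm{i},L}} \otimes w_{-m_{\mathrm{i},R}} \bigr)$ lies in $S^{0,0}_{\rho,0}(\Pspace)$; and invoking Corollary~\ref{boundedness_super_PsiDOs:cor:boundedness_super_PsiDOs}, i.e.\ Theorem~\ref{boundedness_super_PsiDOs:thm:boundedness_super_PsiDOs}.

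The main obstacle is the symbol-class inclusion itself. It is elementary but not entirely mechanical, because $\sexpval{(\xi_L,\xi_R)}$ is not equivalent to any single monomial $\sexpval{\xi_L}^{a} \sexpval{\xi_R}^{b}$, so one genuinely has to distribute the derivative-generated exponent between the two Japanese brackets and keep track of signs — this is exactly the step where the condition $m \leq m_L + m_R$ enters. A secondary technicality is the corner case $m < 0$ with both $m_{\mathrm{i},L}+m_{\mathrm{f},L}$ and $m_{\mathrm{i},R}+m_{\mathrm{f},R}$ negative and of sum smaller than $m$: there the natural choice of $(m_L,m_R)$ violates $m \leq m_L + m_R$, and one should either sharpen the inclusion or note that the additional hypothesis $m \leq (m_{\mathrm{i},L}+m_{\mathrm{f},L}) + (m_{\mathrm{i},R}+m_{\mathrm{f},R})$ is automatic in the situations of interest. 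Everything past the inclusion is a transcription of the proof of Corollary~\ref{boundedness_super_PsiDOs:cor:boundedness_super_PsiDOs_magnetic_Sobolev_spaces_double_Hoermander_symbols}.
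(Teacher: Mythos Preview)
Your approach differs from the paper's. The paper indicates (without details) that one reruns the conjugation argument of Corollary~\ref{boundedness_super_PsiDOs:cor:boundedness_super_PsiDOs_magnetic_Sobolev_spaces_double_Hoermander_symbols} and invokes Theorem~\ref{boundedness_super_PsiDOs:thm:boundedness_super_PsiDOs} at the end; you instead first embed $F \in S^m_{\rho,0}(\Pspace)$ into a double Hörmander class and then quote Corollary~\ref{boundedness_super_PsiDOs:cor:boundedness_super_PsiDOs_magnetic_Sobolev_spaces_double_Hoermander_symbols} wholesale. Your inclusion $S^m_{\rho,0}(\Pspace) \subseteq S^{m_L,m_R}_{\rho,0}(\Pspace)$ under the three conditions $m \leq m_L$, $m \leq m_R$, $m \leq m_L + m_R$ is correct, although your proof sketch is a bit loose: the three-factor splitting as written does not give the bound when one of the exponents $m_L - \sabs{\alpha_L}\rho$ or $m_R - \sabs{\alpha_R}\rho$ is positive, since then $\sexpval{(\xi_L,\xi_R)}$ raised to that power exceeds the corresponding $\sexpval{\xi_L}$- or $\sexpval{\xi_R}$-power. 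A case distinction on the sign of the total exponent $m - (\sabs{\alpha_L}+\sabs{\alpha_R})\rho$ is cleaner.

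The genuine gap is the corner case you yourself flag. Your suggestion that $m \leq (m_{\mathrm{i},L}+m_{\mathrm{f},L}) + (m_{\mathrm{i},R}+m_{\mathrm{f},R})$ is ``automatic in the situations of interest'' is false: take $m = -1$ and all four Sobolev orders equal to $-\tfrac{1}{2}$. The hypotheses of the corollary hold ($-1 \leq -1$ twice), yet $m_L + m_R = -2 < -1 = m$, and the inclusion genuinely fails there --- for instance $\sexpval{(\xi_L,\xi_R)}^{-1} \notin S^{-1,-1}_{0,0}(\Pspace)$, as one sees along the diagonal $\xi_L = \xi_R \to \infty$. Nor can this be repaired by another choice of $(m_L',m_R')$: you would need $m \leq m_L' \leq m_{\mathrm{i},L}+m_{\mathrm{f},L}$, $m \leq m_R' \leq m_{\mathrm{i},R}+m_{\mathrm{f},R}$ and $m_L' + m_R' \geq m$, but the maximum of $m_L' + m_R'$ over that box is $m_L + m_R < m$. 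Hence your route via Corollary~\ref{boundedness_super_PsiDOs:cor:boundedness_super_PsiDOs_magnetic_Sobolev_spaces_double_Hoermander_symbols} does not cover the full statement; in this regime one has to argue differently --- as the paper suggests, by controlling the conjugated symbol $( w_{-m_{\mathrm{f},L}} \otimes w_{-m_{\mathrm{f},R}} ) \, \sharp^B \, F \, \sharp^B \, ( w_{-m_{\mathrm{i},L}} \otimes w_{-m_{\mathrm{i},R}} )$ directly and applying Theorem~\ref{boundedness_super_PsiDOs:thm:boundedness_super_PsiDOs}.
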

\begin{remark}
	The limitation to the case $p = 2$ is entirely due to the absence of boundedness results akin to Theorem~\ref{boundedness_super_PsiDOs:thm:boundedness_super_PsiDOs} and Corollary~\ref{boundedness_super_PsiDOs:cor:boundedness_super_PsiDOs} for $p \neq 2$. However, \emph{if} one can prove that 
	\begin{align*}
		\Op^A(F) \in \mathcal{B} \bigl ( \mathfrak{L}^p \bigl ( \mathcal{B} \bigl ( L^2(\R^d) \bigr ) \bigr ) \bigr )
	\end{align*}
	holds true for some $F \in S^{0,0}_{0,0}(\Pspace)$ or $F \in S^0_{0,0}(\Pspace)$ for $p \neq 2$, then the above arguments immediately imply boundedness in the sense that 
	\begin{align*}
		\Op^A(F) \in \mathcal{B} \Bigl ( \mathfrak{L}^p \bigl ( H^{m_{\mathrm{i},R}}_A(\R^d) \, , \, H^{m_{\mathrm{i},L}}_A(\R^d) \bigr ) \, , \, \mathfrak{L}^p \bigl ( H^{m_{\mathrm{f},R}}_A(\R^d) \, , \, H^{m_{\mathrm{f},L}}_A(\R^d) \bigr ) \Bigr )
	\end{align*}
	where the $m_{\mathrm{i},L/R}$ and $m_{\mathrm{f},L/R}$ satisfy the same conditions as in Corollaries~\ref{boundedness_super_PsiDOs:cor:boundedness_super_PsiDOs_magnetic_Sobolev_spaces_double_Hoermander_symbols} or \ref{boundedness_super_PsiDOs:cor:boundedness_super_PsiDOs_magnetic_Sobolev_spaces_regular_Hoermander_symbols} above. 
\end{remark}
%

\bibliographystyle{alpha}
\bibliography{bibliography}

\end{document}